\documentclass[msom,nonblindrev]{informs3}

\DoubleSpacedXI



\usepackage{natbib}
 \bibpunct[, ]{(}{)}{,}{a}{}{,}%
 %
 %
 %
 %
 %
\setcitestyle{authoryear}

\TheoremsNumberedThrough 
\EquationsNumberedThrough

\usepackage{booktabs} 
\usepackage[ruled]{algorithm2e} 
\usepackage{floatrow}
\usepackage{multirow}
\usepackage{subcaption}
\usepackage{tabularx}
\usepackage[framemethod=tikz]{mdframed}
\usepackage{subfiles}
\usepackage{pstricks}
\usepackage{hyperref}

\usepackage{dcolumn}
\newcommand{\tablemc}[1]{\multicolumn{1}{c@{}}{#1}}

\usepackage{soul}

\SetAlFnt{\small}
\SetAlCapFnt{\small}
\SetAlCapNameFnt{\small}
\SetAlCapHSkip{0pt}
\IncMargin{-\parindent}

\newcommand{\E}{{\mathbb{E}}}

\newcommand{\jobs}{donations}
\newcommand{\job}{donation}
\newcommand{\worker}{volunteer}
\newcommand{\workers}{volunteers}

\newcommand{\matchmin}{match-min}
\newcommand{\s}{\mu}
\newcommand{\gap}{non-adoption growth benefit}
\newcommand{\adoptp}{\rho}

\newcommand{\temporary}{one-time}
\renewcommand{\problem}{platform design problem}
\newcommand{\regimeA}{AGD regime}
\newcommand{\regimeB}{AGN regime}

\newcommand{\revcolor}{}
\newfloatcommand{capbtabbox}{table}[][\FBwidth]

\usepackage{comment}
\usepackage[showdeletions]{color-edits}

\definecolor{auburn}{rgb}{0.43, 0.21, 0.1}
\addauthor{vm}{auburn}
\definecolor{cadmiumgreen}{rgb}{0.0, 0.42, 0.24}
\definecolor{scottgreen}{rgb}{0.1, 0.72, 0.34}
\addauthor{sr}{scottgreen}
\addauthor{il}{magenta}

\begin{document}

\TITLE{Commitment on Volunteer Crowdsourcing Platforms: Implications for Growth and Engagement}

\RUNTITLE{Commitment on Crowdsourcing Platforms}

\RUNAUTHOR{Lo et al.}

\ARTICLEAUTHORS{%
\AUTHOR{Irene Lo\textsuperscript{1}, Vahideh Manshadi\textsuperscript{2}, Scott Rodilitz\textsuperscript{2}, Ali Shameli\textsuperscript{1}}
\AFF{\textsuperscript{1}Stanford MS\&E, Stanford, CA \\ \textsuperscript{2}Yale School of Management, New Haven, CT}

}


\ABSTRACT{ 
{\bf Problem Definition:} 
Volunteer crowdsourcing platforms match volunteers with tasks which are often recurring. To ensure completion of such tasks, platforms frequently use a lever known as ``adoption,'' \revcolor{which amounts to a commitment by the volunteer to repeatedly perform the task}. Despite reducing match uncertainty, high levels of adoption can decrease the probability of forming new matches, which in turn can suppress growth. We study how platforms should manage this trade-off.

\noindent {\bf Academic/Practical Relevance:}
Our research is motivated by a collaboration with Food Rescue U.S. (FRUS), a volunteer-based food recovery organization active in over 30 locations. For platforms such as FRUS, success crucially depends on volunteer engagement. Consequently, effectively utilizing non-monetary levers, such as adoption, is  critical.

\noindent { \bf Methodology:} 
Motivated by the volunteer management literature and our analysis of FRUS data, we develop a model for 
two-sided markets which repeatedly match volunteers with tasks. Our model incorporates match uncertainty as well as the negative impact of failing to match on future engagement. 
We study the platform's optimal policy for setting the adoption level to maximize the total discounted number of matches.

\noindent { \bf Results: }
We fully characterize the optimal myopic policy and show that it takes a simple form: depending on volunteer characteristics and market
thickness, either allow for full adoption or disallow adoption. In the long run, we show that such a policy is either optimal or achieves a constant-factor approximation. \revcolor{Our finding is robust to incorporating heterogeneity in volunteer behavior.}

\noindent { \bf Managerial Implications: }
Our work sheds light on how  two-sided platforms need to carefully control the double-edged impacts that commitment levers have on growth and engagement. \revcolor{Setting a misguided  adoption level may result in marketplace decay.} At the same time, a one-size-fits-all solution may not be effective, as the optimal design crucially depends on the characteristics of the volunteer population.
}

\KEYWORDS{auctions and mechanism design,
non-profit management,
humanitarian operations,
stochastic methods,
service operations
}

\maketitle

\vspace{-20pt}

\section{Introduction}
\label{sec:intro}





In 2018, there were an estimated 37.2 million Americans living in food insecure households
\citep{food_insecurity}, a number that is expected to grow substantially as a consequence of the COVID-19 pandemic \citep{galvin_2020}. Food insecurity is correlated with a myriad of poor health outcomes, particularly for children, and can have significant impacts on cognitive development and scholastic achievement \citep{ke2015food}. Meanwhile, more than 60 million tons of food go to waste in the U.S. each year, contributing roughly 2\% of the total annual greenhouse gas emissions in the U.S. \citep{venkat2011climate}.

Food recovery solutions seek to combat the problems of food insecurity and food waste simultaneously by transporting food from where it would be wasted to where it is needed. In doing so, they address {\em six} of the seventeen United Nations Sustainable Development Goals: (i) zero hunger; (ii) good health and well-being; (iii) quality education; (iv) sustainable cities and communities; (v) responsible consumption and production; and (vi) climate action.
However, food recovery programs face two key operational challenges. 
First, potential food donors need to be connected with potential food recipients. Second, once a connection has been made, the food must be transported from donor to recipient, and such last-mile transportation can be prohibitively costly.

In recent years, online non-profit platforms have emerged to both facilitate connections between donors and recipients and crowdsource transportation from volunteers.
Our work is motivated by a collaboration with one such food recovery organization, Food Rescue U.S. (FRUS).
In conjunction with FRUS, we determined that the goals of stakeholders, including donors, volunteers, recipients, and staff, are aligned: they all strive to maximize the amount of nutritious food that they put in the hands of food insecure Americans. We also identified that platform growth and volunteer management were two key areas where improved platform operations could have a large impact. 
In the following, we provide a brief background on FRUS, and we then summarize our contributions toward improving the operations of FRUS and similar platforms.


\begin{figure}[t]
 \centering
 \begin{subfigure}[b]{0.56\textwidth}
 \includegraphics[width=\textwidth]{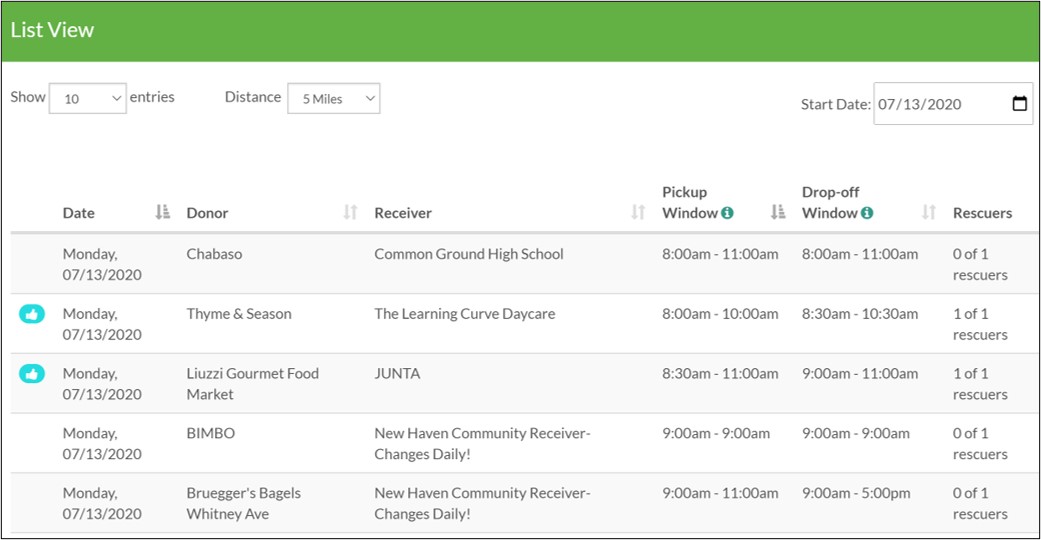}
 \end{subfigure}
 \begin{subfigure}[b]{0.1841\textwidth}
 \centering
 \includegraphics[width= \textwidth]{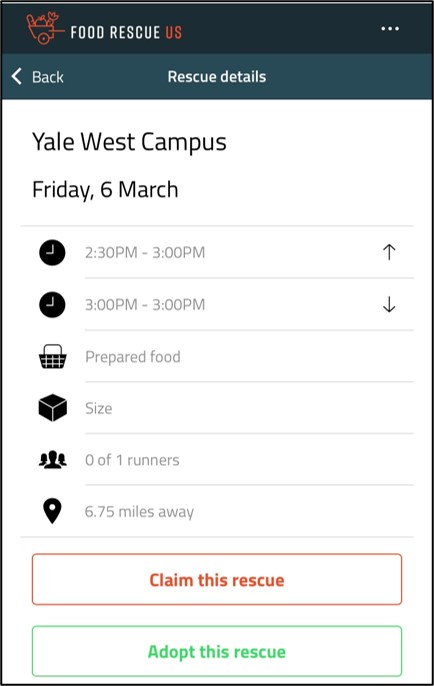}
 \end{subfigure}
  \begin{subfigure}[b]{0.20825\textwidth}
 \centering
 \includegraphics[width= \textwidth]{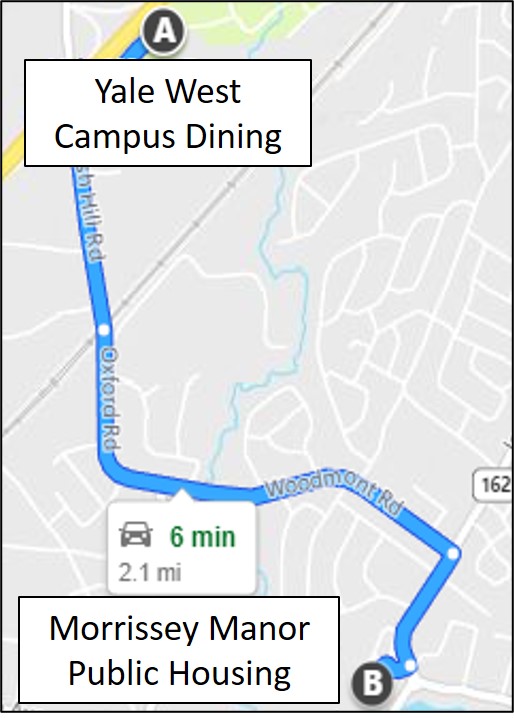}
\end{subfigure}
\caption{{\it Left:} Scheduled donations on the FRUS website. {\it Middle:} Details for a particular donation, including ``Claim'' and ``Adopt'' options for volunteer sign ups. {\it Right:} Map of a donation route.}
\label{fig:FRUSplatform}
\end{figure}

\subsection{Background on Food Rescue U.S.}
\label{subsec:frusbackground}
FRUS is a non-profit platform that facilitates perishable food recovery from local businesses. Since its inception in 2011, FRUS has provided over 60 million meals to food insecure individuals in more than 30 locations across the country, and FRUS continues to grow significantly both as a platform and in individual locations.  In each FRUS location, a site director connects food donors with receiving agencies. Since collecting donated food is time-sensitive, the site director also creates a schedule for pick-up, which is posted on the FRUS website and app (left two panels of Figure \ref{fig:FRUSplatform}). 
FRUS relies on its large base of thousands of volunteers for transportation.
At any time, volunteers can log on and ``match'' with an available donation, i.e., sign up to transport donated food for a specified connection at the scheduled time.

Since most donations are scheduled to recur on a weekly basis, one natural tool for ensuring completion is to encourage volunteers to ``adopt'' donations and commit to repeatedly transporting the same donation each week. Adopted matches reduce uncertainty about whether donations 
will be transported in the subsequent weeks and can lead to an increase in volunteer retention. Consequently, more adoption is generally thought to be desirable for marketplace health and growth.

Based on this line of thinking, FRUS uses the same design for every location: 
 all recurring donations are available for {\em both} \temporary \ sign up or adoption (through the ``Claim this rescue'' and ``Adopt this rescue'' button, respectively, as shown in the middle panel of Figure \ref{fig:FRUSplatform}). \revcolor{This platform design choice -- allowing full adoption -- is mirrored by other food recovery organizations, such as Food Finders and Second Harvest, and is also employed by many other
 volunteer-based organizations which rely on crowdsourcing for simple recurring tasks, such as book sorting in libraries or food packaging at food banks (e.g., \href{https://www.pittsburghfoodbank.org/get-involved/volunteer/individual/}{the Greater Pittsburgh Community Food Bank}).}
 
 \revcolor{However, for platforms in a growth phase such as FRUS, having a larger proportion of volunteers in adopted matches does not guarantee an increase in long-term growth. Growth relies on volunteer engagement in the \emph{spot market} 
(where new matches are formed between donations and volunteers), and adoption may reduce such engagement. To highlight the tension between adoption and engagement, consider the following thought experiment. Suppose that once a volunteer adopts a donation, they will simply complete their weekly donation and never participate in the spot market. On the other hand, volunteers who are not part of an adopted match check the platform frequently and participate in the spot market multiple times each week.  When a new donation joins the system (e.g., an extra donation from a current donor to a different agency), if all volunteers are already in adopted matches the donation will not be matched. However, if some volunteer is not in an adopted match, they could be sufficiently engaged in the spot market to complete \emph{both} their original donation and the new donation. In this hypothetical example, a high adoption level decreases engagement and matching in the spot market.
 {If participants in the spot market cannot find matches, they may be less likely to participate in the future,} which may hinder growth.}

\revcolor{In Section \ref{subsec:data}, we provide evidence in support of the above. Specifically, in accord with previous research on volunteer under-utilization (e.g., \citealt{sampson2006optimization}), we find that donations and volunteers become more likely to leave the platform if they are not matched. 
{Furthermore, we provide suggestive evidence that volunteer engagement depends on match type. Volunteers who successfully find one-time matches on the spot market can be more likely to increase their engagement above one donation in subsequent weeks, which enables platform growth.} Given the importance of attaining large, positive growth rates for the survival of nonprofit platforms \citep{bielefeld1994affects}, the potentially detrimental impact of adoption on growth and engagement raises a key question that motivates our work:}

 \noindent {\em How should volunteer crowdsourcing platforms such as FRUS effectively utilize commitment levers?}

\subsection{ Our Contributions}
To study the above platform design question, we first 
(i) present our analysis of FRUS data which highlights the relationship between matching and future engagement and 
(ii) 
 develop a model for repeated two-sided matching 
 that captures the key features observed in our analysis of FRUS data. As part of our model, we develop a new matching function for two-sided markets with random compatibility that may be of independent interest.
Using this model of platform operations, we then provide insights into how a platform can maximize the total discounted number of matches, by 
 (iii) {establishing that} the optimal myopic policy is always either {allowing all donations to be adopted, or disallowing adoption.}
 (iv) We show that one of these simple policies provides a constant-factor approximation to the optimal policy, and under many conditions is optimal.

{{\bf Modeling the Platform Operations:}}
We model the FRUS platform as a decentralized 
two-sided matching market in which volunteers and donations repeatedly match. 
Volunteers and donations arrive in discrete periods, and are compatible with users on the other side of the market independently with identical probability. 
Each match is either an adopted match, carried across periods, or a \temporary \ match (``non-adoption''), newly formed in each period's spot market via a greedy matching process. 
The engagement and growth of donations and volunteers in each period is endogenously governed by the volume of matches in the previous period; furthermore,
motivated by our empirical observations (Section \ref{subsec:data}), volunteer engagement differs based on whether the previous-period match was an adoption or a non-adoption. We use this framework to shed light on the benefits and drawbacks of adoption.

{\bf Characterizing Matching in the Spot Market:}
To analyze the {impact of adoption on the efficiency of the spot market}, we develop a novel matching function that approximates greedy matching in two-sided markets with random compatibility.
Our matching function solves a deterministic differential equation which characterizes the large market limit of our greedy matching process (Proposition~\ref{prop:matching}).
{Unlike commonly-used alternatives, our matching function exhibits the desirable property that
\emph{match efficiency}\textemdash the proportion of users on the minimum side of the market who find a match\textemdash is increasing in \emph{market thickness}\textemdash the ex ante number of compatible matches for each user\textemdash while also respecting the physical limitation that the number of matches never exceeds the size of the minimum side of the market.}

{\bf Characterizing the Platform's Optimal Design:}
In a finite-horizon setting, the platform determines the allowable level of adoption in each period to maximize the total discounted number of transported donations. \revcolor{The adoption level decision can have a dramatic impact on the platform's growth rate, potentially making the difference between marketplace growth and decay, and we formally introduce the platform design problem to help optimize this decision (Figure \ref{fig:market_death} and its related discussion in Section \ref{subsec:opt}).}
Using our framework, we first characterize the 
optimal myopic policy, which  maximizes the number of matches in the next period.
 This captures three trade-offs in deciding platform adoption levels: match type influences volunteer engagement; adoption increases match certainty in the subsequent period; and disallowing adoption increases the size and efficiency of the spot market in the subsequent period.
Somewhat surprisingly, 
we establish that the optimal myopic policy has 
a simple structure: either enable adoption for all donations or disallow adoption.

We use these insights to study policies for maximizing the total discounted number of transported donations. The optimal policy depends on the relative engagement of volunteers in adopted matches compared to volunteers in \temporary\ matches. 
In the \emph{Adoption Growth Dominance (AGD)} regime, adopted matches lead to relatively more volunteer {engagement}, 
and we provide the intuitive result that full adoption is optimal 
(Theorem~\ref{thm:optregimea}). 
In the complementary \emph{Adoption Growth Non-Dominance (AGN)} regime decreasing adoption increases {volunteer engagement} in the subsequent period. We show that when the market is sufficiently thick, a policy of disallowing adoption is both myopically optimal and maximizes total discounted matches (Theorem~\ref{thm:optregimeB}). 
For insufficiently thick markets in the AGN regime, 
a policy of disallowing adoption achieves a constant-factor approximation, where the approximation factor improves with market thickness (Theorem~\ref{thm:approxregimeB}).

\revcolor{In Section \ref{sec:hetero}, we generalize our base model by incorporating volunteer heterogeneity which may arise in practice. In particular, we assume that the volunteer population is comprised of 
``students'' and ``professionals'' who interact with the platform differently. For example, students (professionals) may engage more (less) if not part of an adopted match and they may be less (more) likely to adopt a donation if adoption is allowed.  By analyzing this enriched model, we show that our findings are robust to incorporating such heterogeneous behavior (Propositions \ref{prop:myopichetero} and \ref{prop:approxrepeatmyopic}).}

\revcolor{Taken together, our work provides three main contributions to  volunteer management:  (i) shedding light on how commitment influences growth and engagement in two-sided volunteer-based platforms, (ii) highlighting the importance of carefully designing commitment levers, and (iii) providing simple guidelines on the effective use of such levers. 
There are a growing number of volunteer matching platforms like FRUS (e.g. VolunteerSpot, VolunteerMatch, Food Rescue Hero) as well as  nonprofit organizations (libraries, food banks, museums, etc.) which rely on volunteers to complete simple, recurring tasks. 
These platforms and organizations follow different commitment strategies: many allow for commitment whenever possible (e.g., by having buttons for adoption like FRUS), others such as Food Runners require adoption, while organizations such as the Hollywood Food Coalition only allow for \temporary\ sign-ups. 
Our work suggests that there is no one-size fits-all solution, and we provide insight on when each design approach would be preferred.}

\subsection{Related Literature}

{Food insecurity and waste are issues of immense societal importance, and there is growing research on measuring their causes and impacts \citep{sanders2020dynamic, belavina2021grocery} and using operational tools to mitigate their effects \citep{ahire2018harvest, levi2019optimal}. 
In non-profit food distribution, there is particular scope for improving logistical efficiency. 
One line of work explores how to match food donations to recipients to ensure food allocation is targeted and equitable \citep{fianu2018markov,orgut2018robust}. 
Another strand considers how to solve the routing problem in distributing donated food jointly with equity considerations \citep{balcik2014multi,lien2014sequential}. 
We focus on how to manage engagement and growth of the pool of volunteers who distribute donated food.}  

{Our study also contributes to the literature on improving labor management. 
\citet{musalem2019balancing} provide empirical evidence of the effects of workload distribution on customer conversion and employee retention, 
whereas \citet{song2018closing} empirically show that publicly disclosing relative performance feedback can improve worker productivity.
In other interesting directions,
 \citet{green2013nursevendor} and \citet{dong2017managing} study how to manage labor to maximize completed jobs in call centers and hospitals. 
In volunteer-based work, \citet{ata2019dynamic} consider staffing and managing volunteers in gleaning operations to maximizing the volume of food gleaned, while \citet{manshadi2020online} consider volunteer notification policies for online crowdsourcing platforms. 
There is a growing body of work on how volunteer management should differ from traditional labor management given the unique considerations in volunteer labor retention  (see e.g. \citealt{locke2003hold} and \citealt{sampson2006optimization} for overviews). 
 We include such considerations in studying how to manage volunteers to improve market growth.}

Another related body of literature considers issues of design and management in dynamic matching platforms. A number of works study the trade-off between current matches and future market thickness and match efficiency, such as in  ridesharing \citep{ma2018spatio, ozkan2020dynamic}, kidney assignment \citep{anderson2017efficient, ashlagi2019matching, ata2020achievable}
and other markets \citep{bimpikis2020managing}. 
Previous literature finds that limiting information and/or available actions on two-sided platforms can improve welfare 
\citep{halaburda2017competing, kanoria2020facilitating}. 
We 
similarly find that controlling the level of \emph{commitment} can increase the long-run number of matches. Most of the previous work on dynamic matching focuses on improving the matching of heterogeneous agents and items, e.g. in housing allocation 
\citep{leshno2017dynamic, arnosti2020design}, healthcare 
\citep{su2005patient, zhang2020truthful}, 
and other domains \citep{hu2018dynamic, feigenbaum2020dynamic}. 
We consider a repeated setting with ex ante homogeneous jobs and workers, and focus on how commitment affects match volume through user engagement and growth.

Within the literature on two-sided marketplace design, the most closely related paper is \citet{lian2019optimal}, which provides a theoretical framework for optimal growth in a two-sided revenue-maximizing marketplace. {Their analysis relies} on a Cobb-Douglas matching function, {and their policy recommendations center around the use of} price levers for both sides of the market 
to promote growth and maximize profit. 
In our setting the scope for growth is very different, as physical constraints limit increasing returns to scale to occur at a diminishing rate. We develop a novel matching function to capture this  constraint. We also focus on the lever of commitment in a market without prices. 
{To our knowledge, our work is one of the first to study growth in a two-sided  matching platform without monetary transactions.}

\section{Model} 
\label{sec:model}
In this section, we present our framework for studying decentralized two-sided matching platforms such as FRUS.
In Section~\ref{subsec:data} we motivate our modeling framework through an analysis of FRUS data.
In Section \ref{subsec:dynamics}, we formally introduce our model of a repeated matching market. We show that, properly scaled, the stochastic process underlying our model converges to a deterministic system of equations. \revcolor{In Section \ref{subsec:opt}, we motivate the importance of the design question that FRUS faces when considering the use of adoption, and formally introduce their optimization problem.}

 \subsection{Analysis of FRUS Data}
 \label{subsec:data}
 We begin this section by providing a brief overview of FRUS data. We then highlight three key findings which inform our modeling framework: (i) donor dropout depends on previous-period matching outcomes, (ii) volunteer dropout also depends on previous-period matching outcomes, and (iii) volunteers 
 engage with the platform asymmetrically depending on their match type.
 
 {\bf Data Overview: } Our analysis is based on nearly two years of FRUS data, consisting of over 35,000 donations from grocery stores, restaurants, and other food providers across the U.S.
 Nearly all donations repeat the following week (92\%), and for each donation, we observe detailed temporal information enabling us to track its availability and match type in any time period. 
 We observe sign-ups (matches between donations and volunteers) from 1652 unique volunteers:
 approximately $2/3$ of  sign-ups are adopted matches, but only 30\% of volunteers are ever part of an adopted match.

 {\bf Match Success Influences Donor Dropout: }
 We find that, unsurprisingly, future donor engagement with the platform depends on whether the current donation is completed.
Since donors spend time preparing food for donation, they can be discouraged from future participation 
when their donation is not picked up and 
goes to waste. 
 Using a logistic regression analysis, we estimate that the average donor is $3.40$ times more likely to drop out (defined as not offering that donation for at least 6 weeks) if no one picks up their donation. Our analysis includes $22,981$ donor-weeks, and this finding is significant at the $p < 0.001$ level. \revcolor{More details can be found in Appendix \ref{app:emp:dondrop}}

 {\bf Match Success Influences Volunteer Dropout: }
 Next, we turn our attention to the behavior of volunteers on the FRUS platform. Previous work in the context of volunteer labor suggests that if volunteers feel under-utilized (e.g. in the FRUS setting, if they cannot find a match), then they will be less engaged in the future \citep{sampson2006optimization}. 
Though volunteers who are part of an adopted match are by definition matched to a compatible donation in the subsequent week, we aim to empirically study how volunteers who are not part of an adopted match react when they fail to find a compatible donation. 
However, a key challenge 
is that we do not observe volunteers' visits to the platform unless they sign up for a donation. 
Consequently, we do not directly observe when volunteers drop out of the system
nor how often they search for a match.

To overcome this challenge, 
we  use  the number of available donations  as a proxy for the likelihood of finding a match: if volunteers see more available donations upon visiting the platform, then they are more likely to find a compatible one. 
For any volunteer who transports a donation in a certain week but is not currently part of an adopted match, we count how many available donations they would see if they visited the platform the following week. 
We then keep track of their activity to check whether they drop out (defined as not signing up for another donation for at least 6 weeks). 
Focusing on four large locations (which together comprise a majority of the data) while incorporating location-specific fixed effects (to account for differences in dropout rates across locations), we use a logistic regression analysis to ascertain whether the number of available donations impacts dropout probability. Based on 3204 observations for volunteers who are not part of an adopted match, we find that an additional available donation decreases the dropout probability for the average such volunteer by 2.85\%. This effect is significant at the $p < 0.001$ level.

In contrast, repeating a similar analysis for volunteers currently in an adopted match reveals that
an additional available donation does not have a statistically significant effect on dropout probability.
This supports our assertion that dropout rates depend on whether or not volunteers can find a compatible donation: these volunteers are guaranteed to have a match independent from the number of available donations. Thus, we would not expect an additional available donation to influence the dropout rate of volunteers in an adopted match. For other volunteers, however, fewer available opportunities implies a lower chance of finding a match, which in turn increases their dropout rate. \revcolor{Details on both analyses can be found in Appendix \ref{app:emp:voldrop}}

\begin{figure}[t]
 \centering
 \begin{subfigure}[b]{0.48\textwidth}
 \includegraphics[width=\textwidth]{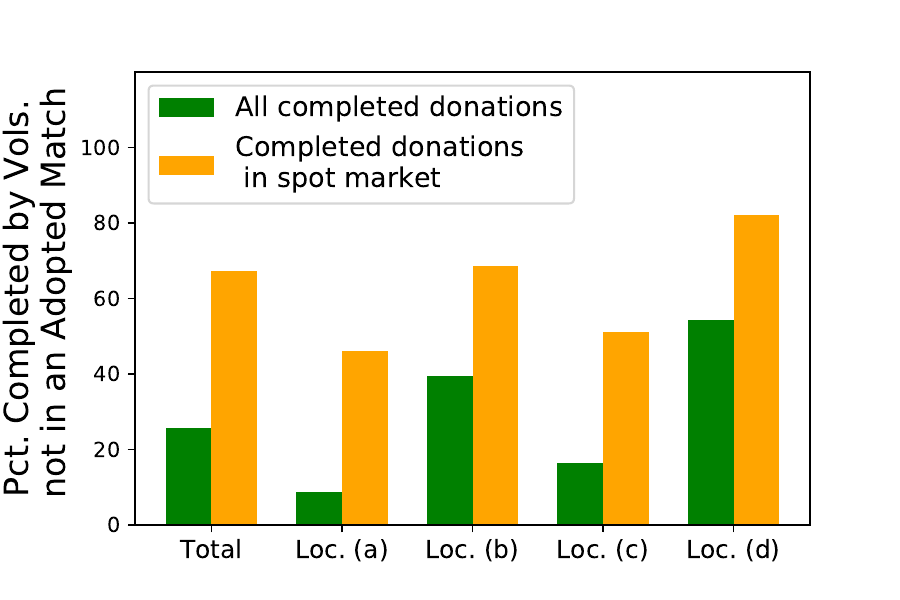}
 \end{subfigure}
 \begin{subfigure}[b]{0.48\textwidth}
 \centering
 \includegraphics[width= \textwidth]{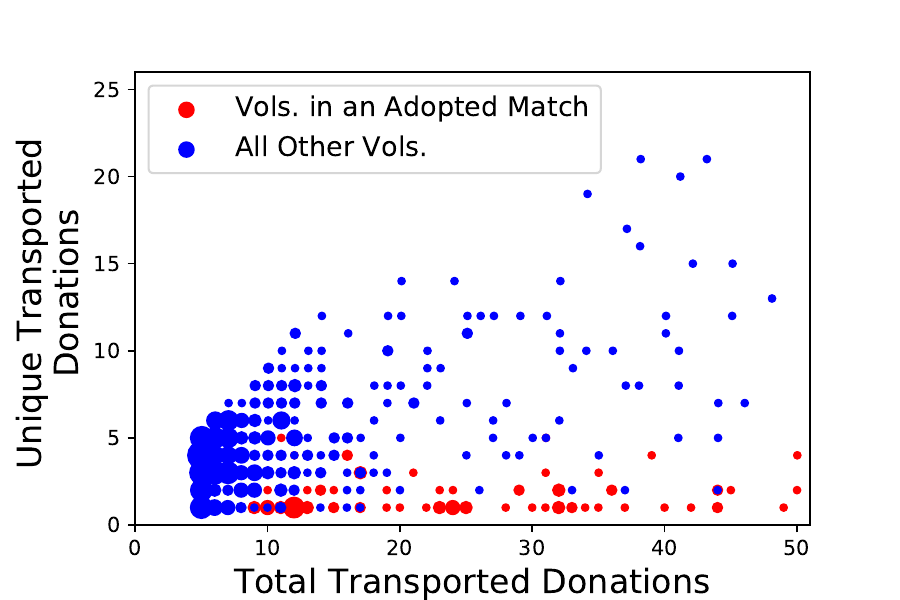}
\end{subfigure}
\caption{ {\it Left:} Relative contributions of volunteers who are not part of an adopted match. {\it Right:} Heterogeneity of donations transported by volunteers who transported between 5 and 50 donations.}
\label{fig:summaries}
\end{figure}
 
{\bf Volunteers Engage Asymmetrically Based on Match Type:}
So far, we have shown evidence that failing to match negatively impacts both donors' and volunteers' future engagement. Consequently, adoption seems favorable as it reduces uncertainty in matching. However, our analysis of FRUS data also reveals a potential downside for adoption.
Volunteers who are in an adopted match are signed up automatically on a weekly repeating basis, and hence do not need to frequently engage with the platform.
Consequently, they may not be responsive to available donations in the spot market (donations unclaimed within a week of their scheduled pickup time).
These availabilities arise for several reasons, such as when a new donor joins the platform, and account for one-third of all FRUS donations.
Given the importance of successful matching for donor engagement, it is crucial to have volunteers willing to {engage with donations in the spot market. Volunteers who are not currently part of an adopted match} frequently play this vital role.

We demonstrate the differences in engagement based on match type in Figure \ref{fig:summaries}. The left panel shows the share of completed donations involving volunteers who are not part of an adopted match, as well as their share of completed {spot market donations}.
We present these shares on an aggregate level and for four large locations. As shown, volunteers who are not part of an adopted match play an outsized role in completing donations in the spot market.
The scatter plot in the right panel shows the number of unique donations vs. the total number of completed donations by both volunteers in an adopted match (red) and other volunteers (blue), focusing on those that have completed between 5 and 50 donations. Dot size is proportional to the number of volunteers who transported the corresponding number of total and unique donations.
We highlight that volunteers who are not part of an adopted match complete a more diverse set of donations. 
Together, these figures illustrate the importance of having volunteers who engage with the platform frequently.

\revcolor{In Appendix \ref{app:emp:engagement}, we further investigate the potential drawbacks of adoption. Even though we cannot directly observe volunteer engagement, we find suggestive evidence that in some FRUS locations, overall engagement in the subsequent week is higher for volunteers who are not in an adopted match. 
In addition, we show that locations with higher rates of adoption also tend to have higher rates of missed donations. 
}

Due to the benefits and disadvantages of adoption, the platform faces a potentially challenging trade-off between engagement and commitment when deciding on the adoption level.
To guide the platform's design, we develop a theoretical framework that captures these data-driven insights.

\subsection{Model Dynamics}
\label{subsec:dynamics}
The market starts at an initial state and evolves over $T$ periods. At the beginning of period $t \in [T]$, there are $D_t$ donations that each require one volunteer, and on the other side of the market, there are $V_t$ volunteers who are willing to transport one donation. For ease of notation, we use $[a]$ to refer to the set $\{0, 1, 2, \dots, a\}$ for any $a \in \mathbb{N}$. Since the platform allows for adoption, some donations and volunteers present at time $t$ have already been matched. In particular, $K_t$ pairs of donations and volunteers begin period $t$ as part of an adopted match, where $K_t \leq \min \{D_t, V_t \}$. 


{\bf Matching Process:}
We now describe how donations and volunteers match on the platform. In a given period, all \workers \ who are not already committed to a \job \ via adoption would like to match with one \job; however, \workers \  are not always compatible with the time or location of a particular \job. We assume that compatibility between \jobs \ and \workers \  is drawn identically and independently across pairs and time periods. 

During each period, all donations and volunteers that are not already part of an adopted match become part of that period's spot market, where \workers \  match with \jobs \ in a decentralized manner. The two sides of the spot market have size $D_t - K_t$ and $V_t - K_t$. We model the matching process in the spot market as sequentially greedy: (i) \workers \  arrive to the platform in a random order, (ii) upon arrival, a \worker \  checks the available \jobs \  for compatibility, and (iii) the \worker \  matches if they find a compatible \job \ (breaking ties at random). At the end of the period-$t$ matching process, a total of $M_t$ pairs of donations and volunteers have been matched, either from prior adopted matches or from the spot market in period $t$, where $K_t \leq M_t \leq \min \{D_t, V_t \}$.

{\bf Platform's Match Type Decision:}
\revcolor{The platform determines the fraction of matches in period $t$ which are allowed to be adopted. 
We denote this fraction by $z_t$. {For the sake of simplicity and sharpness of results, we assume} (i) that the platform can terminate previously-formed adopted matches and (ii) that volunteers will form an adopted match whenever it is allowed. Under assumption (i), the policy $z_t = 0$ means that the platform makes \emph{all} donations available in the spot market, even those which were previously part of adopted matches. Under assumption (ii), the policy $z_t = 1$ means that all matches are adopted matches. In Section \ref{sec:hetero}, we relax both assumptions and show that our main insights are unaffected.}

{\bf Donation Dynamics:}
We consider settings where \jobs \ recur, so the number of \jobs \ in the market in one period is closely related to the number of \jobs \ in the subsequent period. 
{In Section~\ref{subsec:data}, we found that} if a \job \ is not completed, it may {drop out} from the platform in the subsequent period. 
To reflect this in the model, we assume that \jobs \  will leave the market independently with probability $\beta$ if they are not completed in the previous period. 
On the other hand, successful matches are likely to increase the number of donations on the platform,
as growth on the FRUS platform (and similar platforms) is largely due to word-of-mouth and good publicity.  
To model this, for each match in period $t$, an additional \job \ enters the market in period $t+1$  with probability $\beta'-\beta>0$. Thus, the expected number of \jobs \  in period $t+1$ is 
\begin{align}
E[D_{t+1}|D_t, M_t] &= D_t + (\beta' -\beta) M_t - \beta(D_t - M_t) 
= (1-\beta)D_t + \beta'M_t. \nonumber
\end{align}
Since no money is exchanged on FRUS or similar platforms, the number of matches can be viewed as measuring the surplus earned by the \job \ side of the market. Thus we may interpret our model as letting growth of the \job \ side of the market be proportional to that side's surplus, which is broadly consistent with the matching market literature \citep[e.g.,][]{lian2019optimal}.

{\bf Volunteer Dynamics:}
We now turn our attention to the evolution of the number of \workers \ on the platform. We found in Section \ref{subsec:data} that the future engagement of \workers \ with the platform depends {both} on whether the volunteer is matched (i.e. match success influences volunteer dropout) {and} on the type of their match (i.e. volunteers engage asymmetrically based on match type).
{Our model captures these effects using}
different dropout probabilities for the three possible \worker \ {matching} outcomes. (i) Unmatched volunteers drop out with probability $\alpha$. (ii) Volunteers who are part of an adopted match drop out with probability $\gamma$ (if a volunteer in an adopted match leaves the platform, we assume their corresponding \job \ will enter the spot market in the subsequent period). {W}e expect $\alpha > \gamma$, which is borne out in FRUS data (we omit the details for brevity).
 (iii) Volunteers who are part of a \temporary \ match drop out with probability $(\alpha - \alpha')$. The parameter $\alpha'$ captures the impact of being in a \temporary\ match on volunteer's future engagement with the platform, and it can be thought of as an increase in volunteers' retention or engagement (or a combination of the two). Though we expect $\alpha > \gamma$, the relationship between $\alpha - \alpha'$ and $\gamma$ depends on the characteristics of the volunteer base {and} may vary across markets. 
As we will see in Section \ref{sec:results}, this relationship plays a critical role in determining the optimal platform design. 

As with \job \ growth,
\worker \ growth is proportional to the surplus of the \worker \ side of the market:
as a result of each match a ``new \worker'' joins the market with probability $\gamma'$. 
In the FRUS context, this new volunteer can also capture an increase in the engagement of existing volunteers. For example, if a volunteer in an adopted match  decides to volunteer one additional time per week, they will need to enter the spot market to find an additional match; we model this increase in engagement by introducing a new volunteer into the system.
{Note that volunteers may exhibit asymmetric future engagement and growth depending on their match type; for simplicity, we capture this using the relationship between $\alpha - \alpha'$ and $\gamma$, and we do not define different growth rates $\gamma'$ for each match type.}

Based on these modeling choices, we can specify the expected number of both adopted matches and \workers \ in each period. As a result of the matching in period $t$ and the platform's adoption policy {$z_t$}, there are $z_tM_t$ adopted matches and $(1-z_t)M_t$ \temporary \ matches. Since volunteers in adopted matches will drop out with probability $\gamma$, the expected number of adopted matches at the start of period $t+1$ is $E[K_{t+1}|M_t, z_t] = (1-\gamma)z_tM_t$.
Similarly, the expected number of \workers \  at the start of period $t+1$ is given by
\begin{align}
E[V_{t+1}|V_t, M_t, z_t] =& (1-\alpha)(V_t - M_t) & \tag{\text{unmatched vols.}}\\ & + (1 + \gamma'-\gamma)z_tM_t& \tag{\text{vols. in adopted matches}} \\ &+ (1+\gamma'+\alpha'-\alpha)(1-z_t)M_t& \tag{\text{other matched vols.}} \\
=& (1-\alpha)V_t + (\alpha'+\gamma')M_t - (\gamma - \alpha + \alpha')z_tM_t.& \nonumber
\end{align}

We highlight that the expected number of \workers \ in period $t+1$ is increasing (decreasing) in the prior fraction of adopted matches $z_t$ if  $\gamma - \alpha+\alpha'$ is negative (positive). We refer to $\gamma - \alpha + \alpha'$ as the \emph{\gap}, since it represents the expected growth in the size of the volunteer pool if an adopted match is replaced with a \temporary \  match.



Having specified the state variables and their evolution, 
we next show that the state variables $\{D_t, K_t, V_t, : t \in [T]\}$ and the matching outcomes  $\{M_t: t \in [T]\}$ are concentrated around their expectations when the market size is large (Proposition \ref{prop:convergence}). 
Such concentration results enable us to approximate the multi-dimensional stochastic process by simple deterministic differential equations. 
This in turn provides the tractability needed to analyze the impact of the platform's decisions on the dynamics of the system. Such an approach is common when studying design questions in complex stochastic systems \citep[see, e.g.,][]{crapis2017monopoly, afeche2018ride}.
In the statement of this proposition, we use capital (resp. lower case) letters for our state variables (resp. approximations) to highlight that they are random (resp. deterministic) variables. \revcolor{These variables depend on the full history of the platform's decisions (e.g., $v_1$ is a function of $z_0$); however, we suppress this dependence for clarity of presentation.} \begin{proposition}[Deterministic Approximation for System Dynamics]
	\label{prop:convergence}
	For all $n > 1$, some fixed $c > 0$, and all $t \in [T]$, suppose that $(D^n_t, K^n_t, V^n_t)$ represents the state of the system in period $t$ assuming the initial conditions are given by $(nd_0, nk_0, nv_0) \in \mathbb{R}_+^3$, the platform decisions are given by $\mathbf{z}= \{z_t: z_t \in [0,1] \ \forall t \in [T]\}$, and the match probability is given by $\frac{c}{n}$.
	In addition, suppose that $M^n_t$ represents the number of completed \jobs \ in period $t$. Then for all $t \in [T]$, the following limits hold almost surely as $n$ approaches infinity:
	\begin{align}
	\frac{M_t^n}{n} \ \rightarrow \ m_t \ &= \s(d_t-k_t, v_t-k_t) + k_t \label{eq:mt}\\
	\frac{D_{t+1}^n}{n} \rightarrow d_{t+1} &= (1-\beta)d_t + \beta'm_t \label{eq:dt} \\
	\frac{K_{t+1}^n}{n} \rightarrow k_{t+1}&= (1-\gamma)z_tm_t \label{eq:kt} \\ \frac{V_{t+1}^n}{n} \rightarrow v_{t+1} &= (1-\alpha)v_t + (\alpha'+\gamma')m_t - (\gamma - \alpha + \alpha')z_tm_t \label{eq:vt}
	\end{align}
	where the matching function $\s(\cdot, \cdot)$ is defined in Proposition~\ref{prop:matching} in Section~\ref{subsec:matchingprocess}.
\end{proposition}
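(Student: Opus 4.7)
The plan is to prove the proposition by induction on $t \in [T]$. The base case $t=0$ is immediate, since by hypothesis $(D_0^n, K_0^n, V_0^n) = (nd_0, nk_0, nv_0)$ deterministically. For the inductive step, I would assume $(D_t^n/n, K_t^n/n, V_t^n/n) \to (d_t, k_t, v_t)$ almost surely and prove the two things needed at time $t+1$: first, that $M_t^n/n \to m_t$; second, that the scaled updates $(D_{t+1}^n/n, K_{t+1}^n/n, V_{t+1}^n/n)$ converge to the claimed deterministic limits.

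For the match-count convergence, I would invoke Proposition~\ref{prop:matching} on the spot market at time $t$, whose two sides have sizes $D_t^n - K_t^n$ and $V_t^n - K_t^n$ with pairwise compatibility probability $c/n$. By the inductive hypothesis, $(D_t^n - K_t^n)/n$ and $(V_t^n - K_t^n)/n$ converge almost surely to $d_t - k_t$ and $v_t - k_t$, so (after a continuity argument for $\s$) Proposition~\ref{prop:matching} gives $(M_t^n - K_t^n)/n \to \s(d_t - k_t, v_t - k_t)$ almost surely. Adding back $K_t^n/n \to k_t$ yields $M_t^n/n \to \s(d_t-k_t, v_t-k_t) + k_t = m_t$, which is equation~(\ref{eq:mt}).

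For the state transitions, I would observe that, conditional on the state and matches at time $t$ and on $z_t$, each of $D_{t+1}^n, K_{t+1}^n, V_{t+1}^n$ is a sum of independent Bernoulli random variables. For example, one can decompose
\begin{equation*}
D_{t+1}^n = M_t^n + B_1^n + B_2^n,
\end{equation*}
where $B_1^n \sim \mathrm{Binom}(D_t^n - M_t^n, 1-\beta)$ counts unmatched donations that survive and $B_2^n \sim \mathrm{Binom}(M_t^n, \beta'-\beta)$ counts new donations generated by successful matches. Dividing by $n$ and applying Hoeffding's inequality with Borel--Cantelli shows $B_i^n/n$ converges almost surely to the product of its Bernoulli count and success probability. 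Combining with $M_t^n/n \to m_t$ recovers (\ref{eq:dt}). Analogous Bernoulli decompositions (unmatched volunteers surviving with probability $1-\alpha$, adopters with $1-\gamma$, matched non-adopters with $1-\alpha+\alpha'$, plus a $\mathrm{Binom}(M_t^n, \gamma')$ term for new volunteers) yield (\ref{eq:kt}) and (\ref{eq:vt}) after straightforward algebra.

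The main obstacle I foresee is handling the coupling between the random $M_t^n$ and the binomial terms that use $M_t^n$ as their count, together with making the continuity-based application of Proposition~\ref{prop:matching} rigorous when its inputs are random rather than deterministic. I would manage both issues by working on the almost-sure event where all inductive-step convergences hold, conditioning on the natural filtration up through the time-$t$ matching, and invoking Hoeffding/Borel--Cantelli uniformly across the finitely many periods $t=0,1,\ldots,T$; since a finite union of null sets is null, the argument closes without further technicalities.
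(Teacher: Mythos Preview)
Your proposal is correct and follows essentially the same route as the paper: induction on $t$, Chernoff/Hoeffding-type concentration for the binomial state transitions, and the matching-concentration result (Proposition~\ref{prop:matching}) together with the Lipschitz property of $\s$ to control $M_t^n$. The paper resolves the random-input issue you flag by proving a quantitative conditional bound (their Lemma~\ref{lem:Delta}, which gives $|S_t^n/n - \s((D_t^n-K_t^n)/n,(V_t^n-K_t^n)/n)| \le \Delta_t n^{-1/4}$ with high probability uniformly over bounded states) and then propagates explicit $O(n^{-1/4})$ error rates through the induction via union bounds rather than Borel--Cantelli, but the structure is the same.
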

We highlight that in this large-market limit, the market thickness does not scale with $n$: the ex ante number of compatible matches for volunteers (resp. donations) in the spot market is given by $(D_t^n - K_t^n)\frac{c}{n} \rightarrow c(d_t - k_t)$ (resp. $(V_t^n - K_t^n)\frac{c}{n} \rightarrow c(v_t - k_t)$). Our approximation for the size of the matching in the spot market, given by $\s(d_t - k_t, v_t-k_t)$, depends on this market thickness (as opposed to the market size).
A proof of Proposition \ref{prop:convergence} can be found in Appendix \ref{proof:prop:convergence}.

\subsection{Platform Optimization Problem}
\label{subsec:opt}
\revcolor{
{To motivate} the platform's optimization problem, we provide an example {that highlights} the importance of the platform's adoption level decision. {In the two panels of Figure \ref{fig:market_death}, we illustrate the growth trajectories of the number of matches (left) and the fraction of missed donations (right)} resulting from three static policies: full adoption ($z = 1$), no adoption ($z = 0$), and partial adoption with $z = 0.5$. The model primitives for the instance shown (denoted $\mathcal{I}_1$) are given by $(\alpha, \alpha', \gamma, \gamma', \beta, \beta') = (0.35, 0.39, 0.04, 0.07, 0.11, 0.12)$, $v_0 = d_0 = 1$, $k_0=0$, and $c=5$.

{In the short-run, a static policy of full adoption ($z=1$, red solid lines) maximizes the number of matches and reduces missed donations due to the benefits of match certainty. However, in the long-run, a static policy of no adoption ($z=0$, blue dotted lines) leads to significantly more growth due to increased volunteer engagement and spot market thickness. In sharp contrast, a static policy which sets a fractional adoption level of $z = 0.5$ (green dashed lines) is significantly suboptimal and ultimately leads to the decay of the marketplace.}
}

\begin{figure}[t]
	\centering
	\includegraphics[width=.89\textwidth]{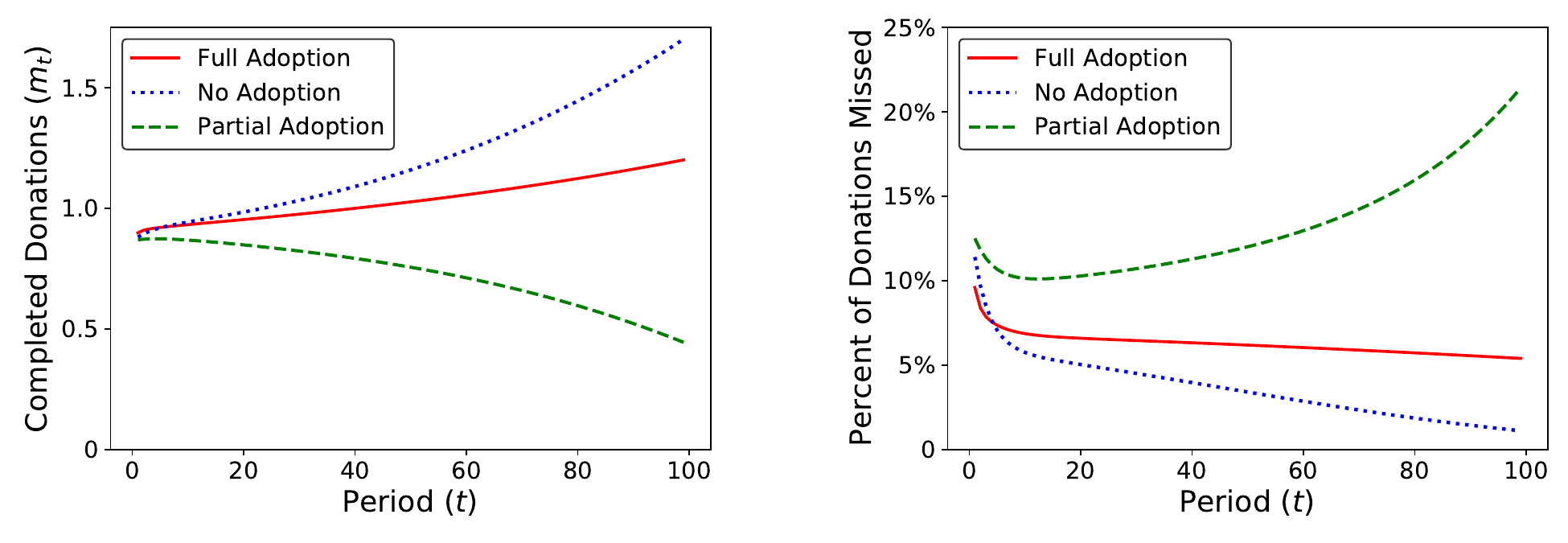}
	\caption{\revcolor{The trajectory of completed donations ({\it left}) and the trajectory of the percentage of missed donations ({\it right}) in instance $\mathcal{I}_1$ for static policies of full adoption ($z=1$), no adoption ($z = 0$), and partial adoption ($z = 0.5$).
	}}
	\label{fig:market_death}
		\end{figure}

We now formalize how the platform can optimize its {adoption level} decision. For any scaled initial condition
$(d, k, v)$,
the \problem \ (PD), which we present below, is to decide on the allowable level of adoption in each period (i.e., $\{z_t: t \in [T]\}$) in order to maximize the total discounted number of completed donations between periods $t=1$ and $t = T$, subject to the deterministic dynamics described in Proposition \ref{prop:convergence}. 
The discount factor, $\delta$, captures the relative weights on current and future success. As in Proposition \ref{prop:convergence}, the state variables depend on the full history of the platform's decisions, but we suppress that dependence for {clarity} of presentation.

\begin{table}[h]
{\small
	$$\begin{array}{ccclr}
	\hline
	\max & & & \sum_{t=1}^T \delta^{t-1} m_t&(PD) \vspace{-10pt} \\
	^{\{z_t: t \in [T]\}}&&&& \\
	\text{s.t.} & m_t & = & k_t + \s(d_t-k_t, v_t-k_t) & \forall t \in [T] \\
	&d_{t+1} &=& (1-\beta)d_t + \beta'm_t &\forall t\in [T-1] \\
	&k_{t+1}&=& (1-\gamma)z_tm_t &\forall t\in [T-1]\\
	&v_{t+1} &=& (1-\alpha)v_t + (\alpha'+\gamma')m_t - (\gamma - \alpha + \alpha')z_tm_t & \qquad \forall t\in [T-1] \\
	&\multicolumn{3}{l}{d_0 = d, \qquad k_0 = k, \qquad v_0 = v, \qquad z_t \in [0,1]}&\forall t \in [T] \\
	\hline
	\end{array}$$
}
\end{table}


\begin{section}{Main Results}
	\label{sec:results}
	We now present our analysis of the \problem. In Section \ref{subsec:matchingprocess}, we derive a deterministic approximation for the size of the matching. In Section \ref{subsec:myopic}, we characterize the platform's optimal policy in a myopic version of the problem. Based on insights from the  myopic solution, in Sections \ref{subsec:adoptiondominance} and \ref{subsec:optiondominance} we study the \problem \ for two regimes depending on the relationship between $\gamma$ and $(\alpha-\alpha')$, i.e. the asymmetric engagement behavior of
	\workers \ in adopted matches {vs.} \temporary \ matches. We show a stark difference in the optimal policy for the two regimes.
	
	\subsection{Characterizing the Matching Function}
	\label{subsec:matchingprocess}
	
	In this section, we provide a characterization of the matching function when the matching process is sequential and greedy (as described in Section \ref{subsec:dynamics}). 
		Our matching function admits a tractable closed form, and has desirable properties not captured by commonly-used matching functions which we discuss below.
The following proposition introduces our matching function and establishes that for any fixed market thickness, as the scaling factor $n$ grows, 
the  size of the matching is well-concentrated around its expected value, which is itself the unique solution to a differential equation.

	\begin{proposition}[Deterministic Approximation for the Matching Process]
		\label{prop:matching}
		Given two values $(a, b) \in \mathbb{R}_+^2$, suppose that for all $n>1$, {$S^n$} represents the number of matches in a market with sides of size $na$ and $nb$ where the match probability is given by $\frac{c}{n}$ and the matching is governed by a sequentially greedy process. Then the following limit holds almost surely:
		\begin{equation}
		\frac{S^n}{n} \rightarrow~ \s(a, b) {~:=~} a+b - \frac{1}{c} \log(e^{ca}+e^{cb}-1)
		\label{eq:matchingfunc}
		\end{equation}
	\end{proposition}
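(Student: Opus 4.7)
The plan is to track the number of available donations over the sequential arrival process and extract its scaling limit. Treating the size-$nb$ side as the arriving side and letting $X_t$ denote the number of still-available donations just after $t$ arrivals, we have $X_0 = na$. Because the compatibility graph has independent edges of probability $c/n$, the $(t+1)$-st arriver's compatibilities with the currently available donations are an independent collection of $\mathrm{Bernoulli}(c/n)$ random variables (even conditional on the history, since the edges incident to donations already removed do not affect the remaining ones), giving the transition kernel
\[
\P\bigl(X_{t+1} = X_t - 1 \,\big|\, X_t\bigr) \;=\; 1 - (1 - c/n)^{X_t}, \qquad \P\bigl(X_{t+1} = X_t \,\big|\, X_t\bigr) \;=\; (1 - c/n)^{X_t}.
\]
Since $S^n = na - X_{nb}$, it suffices to identify the almost-sure limit of $X_{nb}/n$.

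\textbf{Fluid limit via the differential equation method.} Under the hydrodynamic scaling $s = t/n$ and $x_n(s) = X_{\lfloor n s \rfloor}/n$, the one-step conditional drift equals $-\bigl(1 - (1 - c/n)^{n x_n(s)}\bigr)$, which converges uniformly on any bounded interval to the globally Lipschitz function $F(x) = -(1 - e^{-cx})$. I would invoke the standard differential equation method (Wormald's theorem, or equivalently the Kurtz / Darling--Norris fluid-limit theorem for density-dependent Markov chains): the increments of $X_t$ lie in $\{-1, 0\}$, the horizon $nb$ is $O(n)$, and $F$ is Lipschitz, so one obtains that almost surely as $n \to \infty$,
\[
\sup_{s \in [0, b]} \bigl| x_n(s) - x(s) \bigr| \;\longrightarrow\; 0,
\]
where $x(\cdot)$ is the unique solution of $x'(s) = -(1 - e^{-cx(s)})$ with $x(0) = a$.

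\textbf{Solving the ODE and identifying the limit.} I would then solve the ODE by separation of variables (via the substitution $u = e^{cx}$), obtaining $\frac{1}{c}\log(e^{cx(s)} - 1) = \frac{1}{c}\log(e^{ca} - 1) - s$, and hence
\[
x(s) \;=\; \frac{1}{c}\log\!\bigl(1 + (e^{ca} - 1)e^{-cs}\bigr),
\]
which is strictly positive for all $s \geq 0$, so the dynamics remain in the interior of the state space throughout $[0, b]$ and no boundary analysis is required. Evaluating at $s = b$ and using the identity $\log\!\bigl(1 + (e^{ca} - 1)e^{-cb}\bigr) = \log(e^{ca} + e^{cb} - 1) - cb$ yields
\[
\frac{S^n}{n} \;=\; a - x_n(b) \;\longrightarrow\; a + b - \frac{1}{c}\log\!\bigl(e^{ca} + e^{cb} - 1\bigr) \;=\; \mu(a, b)
\]
almost surely. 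As a useful consistency check, the limit is symmetric in $a$ and $b$ even though the greedy matching process operates asymmetrically on the two sides of the market.

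\textbf{Main obstacle.} The calculations are light; the real technical work will be in upgrading convergence from ``in probability'' to almost surely, uniformly over $s \in [0, b]$. My plan is to apply the standard recipe: Azuma's inequality on the martingale $X_t - \sum_{r < t} \E[X_{r+1} - X_r \mid X_r]$ (with increments bounded in $[-1, 1]$) to obtain exponential concentration; a discrete Gronwall estimate to propagate this error along the approximating trajectory; and a Borel--Cantelli summation over $n$ to deliver the uniform almost-sure statement. Each ingredient is standard, but careful bookkeeping is needed to ensure the error bound is summable in $n$ so that the final convergence is almost sure rather than merely in probability.
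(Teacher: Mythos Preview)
Your proposal is correct and follows essentially the same route as the paper: both reduce to a one-dimensional Markov chain tracked over volunteer arrivals, invoke Wormald's differential equation method (the paper cites Theorem~5.1 of Wormald (1999) directly and verifies the boundedness, trend, and Lipschitz hypotheses, obtaining error $O(n^{-1/4})$ with probability $1 - O(n^{1/4}e^{-c^6 n^{1/4}})$), and solve the resulting ODE in closed form. The only cosmetic difference is that the paper tracks the number of matched donations $Y(Z)$ with ODE $y' = 1 - e^{-c(a-y)}$, $y(0)=0$, whereas you track the complementary quantity $X_t = na - Y(t)$; the two are related by $x = a - y$ and yield the same limit.
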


Proposition \ref{prop:matching} motivates our definition of the matching function $\s(\cdot,\cdot)$ for a two-sided market when matches are formed greedily and compatibilities are random and homogeneous. As such, it may have application in other studies of matching markets. 
{Note} that in the statement of Proposition \ref{prop:matching}, the size of the volunteer (resp., donation) side of the market has two components:  the scaling factor $n$ and the scaled size $a$ (resp., $b$). While market thickness is independent of the scaling factor, it does increase with the scaled size of each side.

Our matching function $\s(\cdot,\cdot)$ has several advantages over other popular matching functions, while retaining their desirable properties.  
First, it exhibits (i) increasing returns to scale (i.e. $\s(\kappa a, \kappa b) > \kappa \s(a, b)$ for any $\kappa > 1, a,b > 0$), a requisite for market thickness to improve match efficiency. Additionally, the physical constraints of one-to-one matching imply that if there is randomness in compatibility, then (ii) increasing both sides of the market by the same constant should increase the size of the matching by less than the constant (i.e. $\frac{d }{d \epsilon} \s(a+\epsilon, b+\epsilon) < 1$).
Our matching function satisfies both properties, and hence exhibits increasing returns to scale \emph{at a diminishing rate}.
As the market becomes thicker, it approaches the fully-efficient matching function: $\lim_{a,b \rightarrow \infty}\s(a,b) = \min\{a,b\}$. Thus, our matching function is consistent with a market that gets more efficient as it becomes thicker, while respecting the physical {constraint} that the size of the matching cannot exceed the size of either side.

One commonly-used class of functions is the constant elasticity of substitution (CES) production functions \citep{petrongolo2001looking}, which include the Cobb-Douglas function ($\mu_1(a,b) = a^\rho b^{\omega}$ for some $\rho, \omega \geq 0$), and the fully-efficient matching function ($\mu_2(a,b) = \min\{a,b\}$). Many functions in this class are increasing and concave in the size of each side of the market, and result in an empty matching when one side of the market is empty. Our matching function retains these desirable properties. 
However,  unlike  our matching function,
no CES function can satisfy both properties (i) and (ii) defined above.

The proof of Proposition \ref{prop:matching} is a generalization of Theorem 3 in \citet{mastin2013greedy} and makes use of stochastic differential approximation techniques from \citet{wormald1999models}. \revcolor{The full proof details are presented in Appendix \ref{proof:prop:matching}. Further, we numerically show the accuracy of this deterministic approximation for markets of moderate size in Appendix \ref{app:numerics}.}

	



	\begin{subsection}{Optimal Myopic Policy}
		\label{subsec:myopic}
		In this section, we consider the myopic 
		problem where $T = 1$, or equivalently 
		$\delta=0$. In such a setting, the platform chooses the adoption level $z_0$ to maximize
		$m_{1}(z_0)$. Note that, with a slight abuse of notation, we augment $m_1$ by $z_0$ to highlight the role of the decision variable. 
		
		Before presenting the structure of the optimal policy, we provide some intuition. First note that $z_0$ does not impact the number of  completed donations in period $0$ (accordingly, $d_1$ as defined in \eqref{eq:dt} is independent of $z_0$). However, $z_0$ impacts $m_1$ in three ways. (i) The match type influences the growth and engagement of the \worker \ pool (as specified in \eqref{eq:vt}).
		{(ii) One-time matches increase the thickness of the spot market in the subsequent period, which improves match efficiency in this spot market. However, (iii) adoption replaces the randomness in matching with certainty.}
		To compare the relative impact of effects (ii) and (iii), note that with certainty, each adoption {carried over to the next period} 
		is  worth one match;
		however, increasing both sides of the spot market by one in the next period is worth less than one additional match due to randomness in matching. 
		Thus if adopted matches are preferable for growing the \worker \ pool (i.e., 
		the \gap\  $\gamma - \alpha + \alpha'$ is negative), we would expect a policy of only adoption ($z_0 = 1$) to be optimal. 
		
		\revcolor{On the other hand, suppose the \gap\ is positive, meaning that there is a trade-off between volunteer engagement and the match certainty provided by adoption. We claim that even in such settings, an interior adoption level $z_0\in (0,1)$ cannot be optimal. For intuition as to why, suppose that the spot market is perfectly efficient for volunteers (i.e., the number of matches increases by $1$ for every additional volunteer in the spot market). In that case, there is no benefit to the certainty of an adopted match. Thus, it would be better to increase volunteer engagement by disallowing adoption. Of course, random matching is not perfectly efficient; 
		however, if less adoption is allowed, the spot market gets thicker and the matching becomes more efficient.
		This is an additional positive impact of decreasing the adoption level, and it is complementary to the positive impact of the \gap: the total matching is essentially a supermodular function of (i) the engagement of volunteers and (ii) the efficiency of the spot market. These qualitative observations suggest that the number of matches in the next period is quasiconvex in $z_0$, meaning that an interior adoption level $z_0 \in (0,1)$ will never be optimal.}

		We illustrate the above intuition using two numerical examples where the \gap \ is positive.
		In the left panel of Figure \ref{fig:m1z0}, the optimal myopic policy is allowing for full adoption ($z_0 = 1$) because even completely disallowing adoption does not generate sufficient market thickness for volunteers in period $1$. 
		In the right panel, we consider the same model primitives but with {a thicker} market for volunteers, {and find that} the optimal myopic policy is to disallow adoption ($z_0 = 0$). In the following theorem, we confirm that the myopically optimal policy is either allowing adoption for all donations or disallowing adoption.

		\begin{theorem}[Optimal Myopic Policy]
			\label{thm:myopic}
			Given an initial state $(d_0, k_0, v_0)$, the optimal myopic policy is either allowing all donations to be adopted or disallowing adoption. In particular,
			\begin{equation}
			z_0^* = \begin{cases}1, &\text{if} \quad \gamma\leq \alpha-\alpha' \quad \text{or} \quad  cd_1 \leq \log\left(\frac{ e^{c({D_k-D_v})} -1}{e^{-c{D_v}} - 1}\right) \\
			0, &\text{otherwise} 
			\end{cases} \label{eq:optmyopic}
			\end{equation}
			where $D_k := \frac{\partial k_1}{\partial z_0}= (1-\gamma)m_0$ and $D_v := \frac{\partial v_1}{\partial z_0}= (\alpha - \alpha' - \gamma)m_0$.
		\end{theorem}

		\begin{figure}[t]
			\centering
			\begin{subfigure}[b]{0.45\textwidth}
			\includegraphics[trim=0 34 0 0, clip, width=\textwidth]{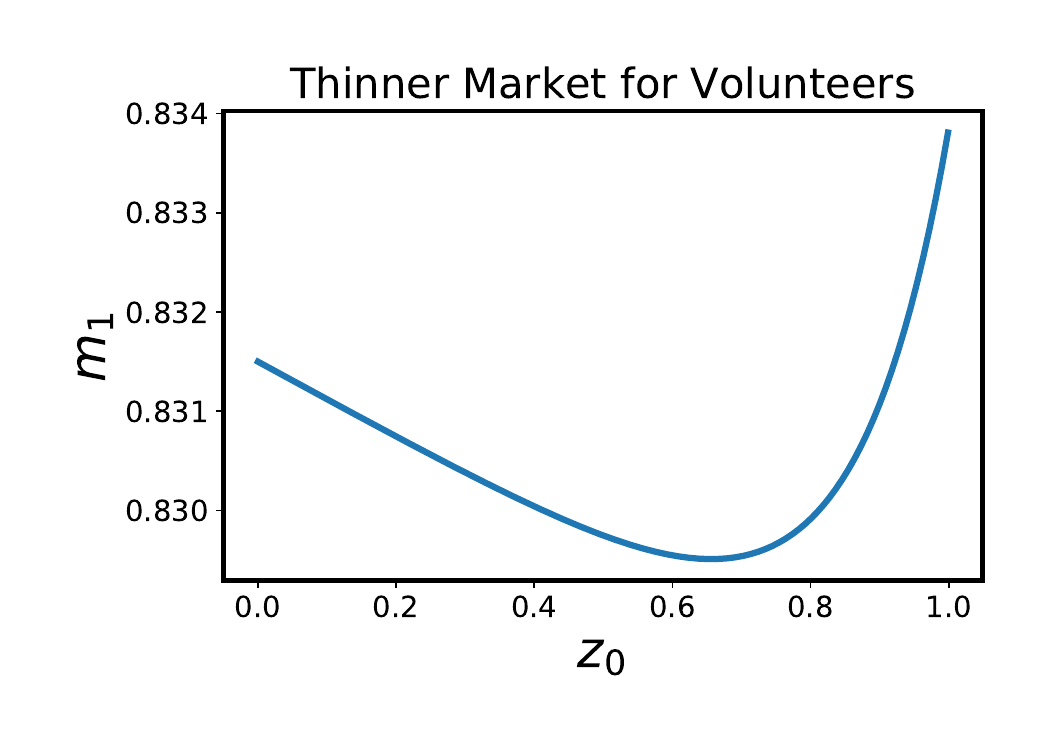}
			\end{subfigure}
			\begin{subfigure}[b]{0.45\textwidth}
				\centering
					\includegraphics[trim=0 34 0 0, clip, width= \textwidth]{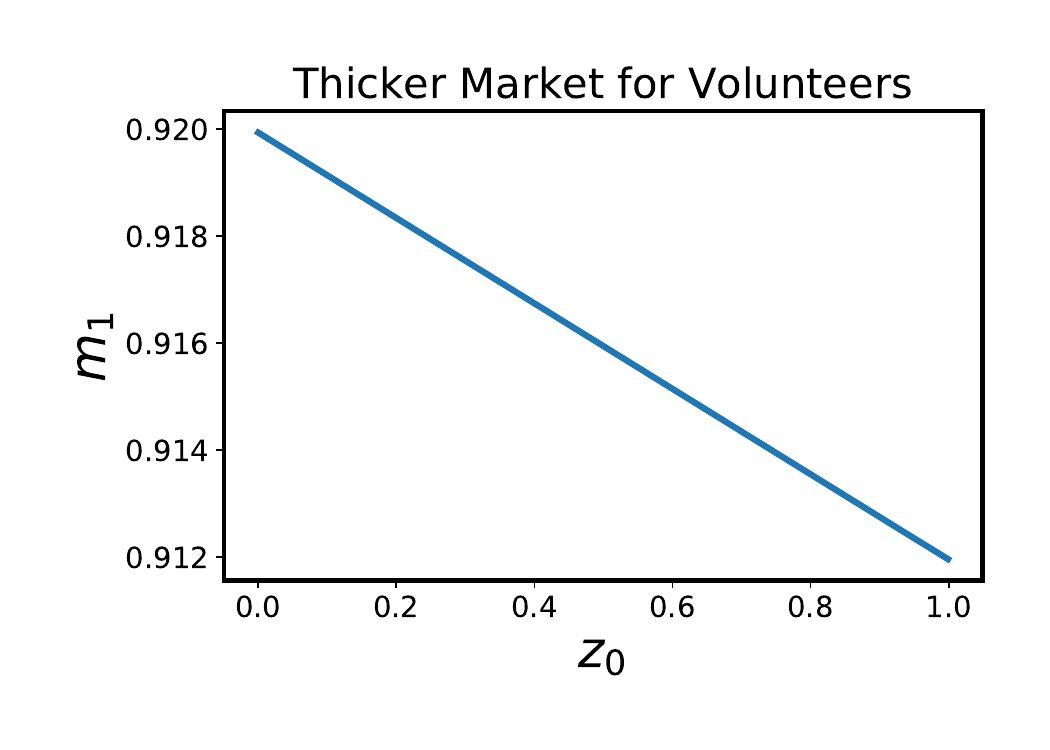}
			\end{subfigure}
			\caption{Plotting $m_1(z_0)$ when $(\alpha, \alpha', \gamma, \gamma', \beta, \beta') = (0.2, 0.15, 0.06, 0.2, 0.05, 0.2)$ and $c=10$. {\it Left:} for initial state $(d_0, k_0, v_0) = (0.8, 0, 0.8)$. {\it Right:} for initial state $(d_0, k_0, v_0) = (1.6, 0, 0.8)$.}
			\label{fig:m1z0}
		\end{figure}
		
		If the \gap \ is negative
		(i.e. if $\gamma\leq \alpha-\alpha'$),
		the \worker \ pool grows more per adopted match than per \temporary \ match. When adoption provides both growth and certainty, the platform prefers adopted matches regardless of the state. In Section \ref{subsec:adoptiondominance}, we show that in this regime, a static policy of full adoption is also optimal when the discount factor is nonzero. 
		
		When 
		the \gap \ is positive, a policy of no adoption maximizes the number of volunteers in the subsequent period. However, such a policy is only myopically optimal if the {thickness of the spot market for volunteers with no adoption ($cd_1$)} is above a threshold. {This threshold depends on $D_k$ and $D_v$, which represent the difference in the number of adopted matches ($k_1(1)-k_1(0)$) and the size of the volunteer pool ($v_1(1)-v_1(0)$), respectively, when following a policy of full adoption as opposed to no adoption.} {Note that the threshold is constant in $z_0$, and is decreasing in the \gap\ (via its dependence on $D_v$).}
		
		To prove Theorem \ref{thm:myopic}, we first show that $m_{1}(z_0)$ is quasiconvex,
		implying that the maximum is an extreme point.
		We then show that $m_{1}(1) \geq m_{1}(0)$ either when $\gamma \leq \alpha-\alpha'$ (i.e., when $D_v > 0$) or when $cd_1\leq \log\left(\frac{ e^{c({D_k-D_v})} -1}{e^{-c{D_v}} - 1}\right)$. A complete proof can be found in Appendix \ref{proof:thm:myopic}. In Appendix \ref{app:numerics}, we numerically show that the structure of the optimal myopic policy holds even when we consider the actual expected size of the matching (as opposed to its deterministic approximation).
		
	\end{subsection}

	\begin{subsection}{Optimal Policy in the Adoption Growth Dominance Regime}
		\label{subsec:adoptiondominance}
		In this section, we focus on when 
		{the \gap \ is negative (i.e., $\gamma -\alpha+\alpha'\leq 0$)}, which we call the \emph{Adoption Growth Dominance (AGD)} regime. In this regime, increasing adoption increases the {total engagement} of \workers \ in the subsequent period, i.e., $v_{t+1}(z_t)$ as defined in \eqref{eq:vt} is increasing in $z_t$. In addition, Theorem \ref{thm:myopic} states that in {this regime}, $m_{t+1}(z_t)$ is maximized at $z_t = 1$. Thus, there is no trade-off between long-term market growth and short-term certainty in matching, as  increasing the adoption fraction will improve both objectives.

		\begin{theorem}[Optimal Policy in the \regimeA]
			\label{thm:optregimea}
			In the \regimeA \ (i.e., $\gamma \leq \alpha - \alpha'$), the optimal policy is $z^*_t = 1$ for all $t \in [T]$. 
		\end{theorem}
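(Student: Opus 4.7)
The plan is to prove the theorem by a pointwise improvement (monotone coupling) argument. I will show that, starting from an arbitrary feasible policy, replacing any single decision $z_s$ by $1$ weakly increases the objective $\sum_{t=1}^T \delta^{t-1} m_t$. Iterating this substitution across $s = 0, 1, \ldots, T-1$ then produces the constant policy $z_t \equiv 1$ as (weakly) optimal. The core of the argument is the monotonicity observation that, in the AGD regime, raising $z_s$ weakly increases the entire downstream state trajectory $(d_t, k_t, v_t)_{t > s}$, and a larger state produces a larger $m_t$.

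\textbf{Monotonicity ingredients.} First, using the closed form $\mu(a,b) = a + b - \frac{1}{c}\log(e^{ca} + e^{cb} - 1)$ from Proposition~\ref{prop:matching}, a direct differentiation yields $\partial_a \mu,\ \partial_b \mu \in [0,1)$ and $\partial_a \mu + \partial_b \mu = 1 - \frac{1}{e^{ca}+e^{cb}-1} < 1$, so $m_t = k_t + \mu(d_t-k_t, v_t-k_t)$ is (weakly) increasing in each of $d_t$, $k_t$, and $v_t$. Second, using (\ref{eq:dt})--(\ref{eq:vt}) I will check that $(d_{t+1}, k_{t+1}, v_{t+1})$ is coordinatewise (weakly) increasing in $(d_t, k_t, v_t, z_t)$ under the AGD hypothesis $\gamma - \alpha + \alpha' \le 0$. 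The only nontrivial sign verifications are for $v_{t+1}$: its coefficient on $z_t m_t$ equals $-(\gamma - \alpha + \alpha') \ge 0$, and for every $z_t \in [0,1]$ its coefficient on $m_t$ is $(\alpha' + \gamma') - (\gamma - \alpha + \alpha') z_t \ge \alpha' + \gamma' \ge 0$. The checks for $d_{t+1}$ and $k_{t+1}$ are immediate, and combining these with monotonicity of $m_t$ from the first ingredient shows that each of $d_{t+1}, k_{t+1}, v_{t+1}$ is monotone in $(d_t, k_t, v_t, z_t)$.

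\textbf{Forward induction.} Fix any feasible policy and any index $s \in [T-1]$, and consider raising $z_s$ to $1$ while leaving all other $z_t$ unchanged. Since $m_0, \ldots, m_s$ depend only on the initial state and on $z_0, \ldots, z_{s-1}$, they are unaffected. At time $s+1$, $d_{s+1}$ is unchanged (as it has no dependence on $z_s$), while $k_{s+1}$ and $v_{s+1}$ both weakly increase by the second ingredient. A straightforward forward induction on $t > s$, alternating applications of the two ingredients, then shows that $(d_t, k_t, v_t)$ weakly dominates the original trajectory coordinatewise for every subsequent $t$, so $m_t$ weakly increases. Summing with $\delta \ge 0$ yields a (weakly) improved objective. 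Iterating this substitution through $s = 0, 1, \ldots, T-1$ concludes the proof.

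\textbf{Main obstacle.} The substantive content sits entirely in the sign checks on the $v_{t+1}$ transition, which is the only place the AGD assumption is invoked; every other step reduces to routine differentiation of $\mu$ and standard forward induction. The argument is clean because $m_t$ has no direct dependence on $z_t$, so substituting at period $s$ leaves all match counts up to and including period $s$ intact, and only the downstream state needs to be tracked.
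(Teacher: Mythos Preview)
Your proof is correct and follows the same overall architecture as the paper: a single-period substitution combined with a forward-induction coupling showing that the improved state dominates pathwise. The paper packages the coupling as Claim~\ref{claim:statecoupling} (tracking $(d_t,v_t,m_t)$ rather than your $(d_t,k_t,v_t)$, but the content is identical) and wraps the substitution step as Lemma~\ref{lemma:myopicisbest}.

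The one genuine difference is that the paper routes the argument through Theorem~\ref{thm:myopic}: it first invokes the quasiconvexity result to establish that $z_\tau=1$ is \emph{myopically} optimal (i.e., $\hat m_{\tau+1}\ge \tilde m_{\tau+1}$), and then applies Lemma~\ref{lemma:myopicisbest}, whose hypothesis separately requires that $z_\tau=1$ maximizes $v_{\tau+1}$. You bypass Theorem~\ref{thm:myopic} entirely by observing directly that raising $z_s$ weakly increases \emph{each} of $d_{s+1},k_{s+1},v_{s+1}$ in the AGD regime, from which $m_{s+1}\ge \tilde m_{s+1}$ and the whole forward induction follow without any appeal to myopic optimality or quasiconvexity. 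This is a modest but real simplification for this theorem; the paper's more modular decomposition via Lemma~\ref{lemma:myopicisbest} pays off later because the same lemma is reused verbatim in the AGN regime (Theorem~\ref{thm:optregimeB}), where the sign of the $z_t$-coefficient in $v_{t+1}$ flips and your direct monotonicity-in-$z_t$ check would fail.
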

		
		Theorem \ref{thm:optregimea} establishes that a policy of only adoption is optimal in the \regimeA. This implies unless \temporary \ matches provide gains in \worker \ engagement (and hence growth), the platform should promote adoption. To prove Theorem \ref{thm:optregimea}, we use Theorem \ref{thm:myopic} to show that in the \regimeA, the optimal myopic policy is always $z_t = 1$.  
		We then establish that when the optimal myopic policy  increases the size of the \worker \ pool, it is also the optimal policy in the long-run (Lemma~\ref{lemma:myopicisbest}), which completes the proof of Theorem \ref{thm:optregimea}. \revcolor{We defer the proof of Lemma~\ref{lemma:myopicisbest} to Appendix \ref{proof:lemma:myopicisbest}.}
		\begin{lemma}[Optimality Condition for the Myopic Policy]
			\label{lemma:myopicisbest}
			Let $z_\tau'$ be the optimal myopic policy given state $(d_\tau, k_\tau, v_\tau)$. Then $z_\tau'$ is the optimal policy in period $\tau$ if  $z_\tau' \in \text{argmax}_{z_\tau \in [0,1]} v_{\tau+1}(z_\tau)$.
		\end{lemma}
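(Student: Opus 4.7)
The plan is to show that for any alternative $z_\tau \in [0,1]$, the total discounted reward from playing $z_\tau'$ at period $\tau$ (followed by an optimal continuation) weakly exceeds that of playing $z_\tau$. Denote by Scenario A the trajectory using $z_\tau'$ and by Scenario B the trajectory using $z_\tau$. Since $d_{\tau+1}$ does not depend on $z_\tau$ by \eqref{eq:dt}, the $d$-components at time $\tau+1$ coincide across the two scenarios. Myopic optimality of $z_\tau'$ gives $m^A_{\tau+1} \geq m^B_{\tau+1}$, and the hypothesis $z_\tau' \in \text{argmax}_{z_\tau \in [0,1]} v_{\tau+1}(z_\tau)$ gives $v^A_{\tau+1} \geq v^B_{\tau+1}$. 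However, the $k$-components at time $\tau+1$ need not be ordered, which precludes a direct monotonicity argument on the value function.

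The key idea is to couple the continuation in Scenario A to an optimal continuation $\{z^B_t\}_{t \geq \tau+1}$ of Scenario B via the rescaling
\[
z^A_t \;:=\; z^B_t \cdot \frac{m^B_t}{m^A_t}
\]
(with $z^A_t$ chosen arbitrarily when $m^A_t = 0$, in which case $m^B_t = 0$ as well). This choice is feasible because the induction below ensures $0 \leq m^B_t \leq m^A_t$, so $z^A_t \in [0,1]$. By design, $z^A_t m^A_t = z^B_t m^B_t$, and hence $k^A_{t+1} = (1-\gamma) z^A_t m^A_t = k^B_{t+1}$. I then prove by induction on $t \geq \tau+2$ the component-wise dominance $d^A_t \geq d^B_t$, $k^A_t = k^B_t$, and $v^A_t \geq v^B_t$. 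The $d$-update inherits the inequality from $m^A_{t-1} \geq m^B_{t-1}$, while the $v$-update simplifies to
\[
v^A_{t+1} - v^B_{t+1} \;=\; (1-\alpha)(v^A_t - v^B_t) + (\alpha' + \gamma')(m^A_t - m^B_t) \;\geq\; 0,
\]
since the $(\gamma - \alpha + \alpha') z_t m_t$ terms cancel across the two scenarios thanks to the rescaling. Combined with $k^A_{t+1} = k^B_{t+1}$, monotonicity of the matching function $\s(\cdot,\cdot)$ in each argument then yields $m^A_{t+1} \geq m^B_{t+1}$, closing the induction.

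Putting the pieces together, the feasibility of the constructed continuation in Scenario A implies
\[
V^*_{\tau+1}(s^A_{\tau+1}) \;\geq\; \sum_{t > \tau} \delta^{t-\tau-1} m^A_t \;\geq\; \sum_{t > \tau} \delta^{t-\tau-1} m^B_t \;=\; V^*_{\tau+1}(s^B_{\tau+1}),
\]
where $V^*_{\tau+1}$ denotes the optimal value from period $\tau+1$ onward. Combined with $m^A_{\tau+1} \geq m^B_{\tau+1}$, this shows that playing $z_\tau'$ at period $\tau$ yields total discounted reward at least that of any alternative $z_\tau$, establishing the optimality of $z_\tau'$. The main obstacle is precisely the absence of $k$-dominance at $\tau+1$; the rescaling $z^A_t = z^B_t \, m^B_t / m^A_t$ is tailored to restore it from $\tau+2$ onward by enforcing $k^A_t = k^B_t$, which in turn enables the cancellation in the $v$-update that drives the inductive state dominance.
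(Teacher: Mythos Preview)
Your proof is correct but uses a genuinely different coupling from the paper's. The paper compares two policies that \emph{coincide} from period $\tau+1$ onward (differing only at $\tau$) and invokes a pathwise-dominance claim (Claim~\ref{claim:statecoupling}): if $(d,v,m)$-dominance holds at $\tau+1$ and both trajectories follow the same continuation thereafter, dominance propagates. Thus the paper never needs $k$-dominance at $\tau+1$, because $k_{\tau+2}=(1-\gamma)z_{\tau+1}m_{\tau+1}$ inherits dominance from $m_{\tau+1}$ once the $z$'s agree. You instead let Scenario~B follow its \emph{optimal} continuation and construct a feasible continuation for Scenario~A via the rescaling $z^A_t=z^B_t\,m^B_t/m^A_t$, forcing $k^A_t=k^B_t$ from $\tau+2$ and making the $(\gamma-\alpha+\alpha')z_t m_t$ terms cancel exactly in the $v$-recursion. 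The paper's route is conceptually simpler (an interchange argument with identical continuation), but its $v$-step in Claim~\ref{claim:statecoupling} implicitly uses $(\alpha'+\gamma')+(\alpha-\alpha'-\gamma)z_t\geq 0$, which holds under the natural assumption $\alpha>\gamma$ that the paper states only informally. Your cancellation device sidesteps any such sign condition, so your argument is slightly more robust at the cost of a more intricate coupling.
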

	\end{subsection}
	
	\begin{subsection}{Optimal Policy in the Adoption Growth Non-Dominance Regime}
		\label{subsec:optiondominance}
		As discussed above, in the \regimeA, there is no trade-off between short-term certainty in matching and long-term market growth.  We now turn our attention to the regime where \temporary \ matches are better for long-term market growth, i.e., when 
		the \gap \ is positive.
		We call this the \emph{Adoption Growth Non-Dominance (AGN)} regime. In this regime, the market can choose to promote the growth of the \worker \ side of the market by reducing adoption and correspondingly increasing the size of the spot market in the subsequent period. 
		
		As a consequence of Lemma \ref{lemma:myopicisbest}, when $z_t = 0$ is myopically optimal in the \regimeB, disallowing adoption is the optimal policy in period $t$. Further, as established in Theorem \ref{thm:myopic}, $z_t = 0$ is myopically optimal if the \gap \ is positive and market thickness for volunteers  
		{$(cd_1)$} 
		exceeds a threshold.  
		If the market thickness continues to grow beyond that threshold, a policy of no adoption will remain optimal. Based on this insight, in the following theorem,
		we characterize the optimal policy in the \regimeB \ when the market is sufficiently thick.
		
		\begin{theorem}[Optimal Policy in the \regimeB]
			\label{thm:optregimeB}
			In the \regimeB \ (i.e., $\gamma > \alpha -\alpha'$), if $d_\tau \geq \bar{d}$, and $v_\tau \geq \bar{v}$, then $z_t = 0$ is optimal for all $t \in [T] \setminus [\tau-1]$, where $\bar{d}$ and $\bar{v}$ are the unique solutions to the \emph{minimum market thickness} program defined in \emph{(MMT)}.
		\end{theorem}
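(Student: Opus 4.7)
The plan is to combine Theorem~\ref{thm:myopic} with Lemma~\ref{lemma:myopicisbest}. In the AGN regime ($\gamma > \alpha - \alpha'$), the recursion \eqref{eq:vt} makes $v_{t+1}(z_t)$ strictly decreasing in $z_t$, so $z_t = 0$ always lies in $\arg\max_{z_t \in [0,1]} v_{t+1}(z_t)$. By Lemma~\ref{lemma:myopicisbest}, it therefore suffices to show that $z_t = 0$ is also myopically optimal at every period $t \in \{\tau, \tau+1, \dots, T\}$; iterating Lemma~\ref{lemma:myopicisbest} forward from period $\tau$ then yields the claim for all $t \in [T] \setminus [\tau-1]$.

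By Theorem~\ref{thm:myopic}, in the AGN regime, the myopic optimality of $z_t = 0$ at state $(d_t, k_t, v_t)$ reduces to the strict inequality
\begin{equation*}
cd_{t+1} \ > \ \log\!\left(\frac{e^{c(1-\alpha+\alpha')m_t} - 1}{e^{c(\gamma-\alpha+\alpha')m_t} - 1}\right).
\end{equation*}
Call the set of states satisfying this inequality the \emph{no-adoption region} $\mathcal{R}$. The program (MMT) should be interpreted as identifying the smallest thresholds $(\bar{d}, \bar{v})$ such that every state with $d \geq \bar{d}$ and $v \geq \bar{v}$ lies in $\mathcal{R}$ and, under $z = 0$, its one-step successor also satisfies $d' \geq \bar{d}$ and $v' \geq \bar{v}$.

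With this interpretation in hand, the proof reduces to a forward induction on $t$. The base case $t = \tau$ uses the hypothesis $d_\tau \geq \bar{d}$ and $v_\tau \geq \bar{v}$ to place the state in $\mathcal{R}$, so $z_\tau = 0$ is both myopically optimal and $v_{\tau+1}$-maximizing; Lemma~\ref{lemma:myopicisbest} then delivers optimality of $z_\tau = 0$ in period $\tau$. Applying $z_\tau = 0$ forces $k_{\tau+1} = 0$ and, by the invariance built into (MMT), $d_{\tau+1} \geq \bar{d}$ and $v_{\tau+1} \geq \bar{v}$. Repeating the argument advances the induction through the horizon.

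The main obstacle is establishing the invariance claim used in the inductive step, namely that the map $(d, v) \mapsto \big((1-\beta)d + \beta'\s(d,v),\, (1-\alpha)v + (\alpha'+\gamma')\s(d,v)\big)$ sends the rectangle $[\bar{d}, \infty) \times [\bar{v}, \infty)$ into itself while keeping the myopic threshold on the next-period $d$ satisfied. This will require exploiting the monotonicity and increasing-returns-to-scale properties of $\s$ from Proposition~\ref{prop:matching}, together with the observation that $k_t = 0$ throughout the induction (so that $m_t = \s(d_t, v_t)$ after the first step). I expect (MMT) to be formulated as a system of fixed-point conditions on $(\bar{d}, \bar{v})$ that simultaneously pin down membership in $\mathcal{R}$ and forward-invariance of the rectangle; uniqueness of its solution should follow from standard monotonicity arguments on the one-step map.
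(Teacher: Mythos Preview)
Your high-level strategy matches the paper's exactly: combine the myopic characterization (Theorem~\ref{thm:myopic}) with Lemma~\ref{lemma:myopicisbest}, then show that the rectangle $[\bar d,\infty)\times[\bar v,\infty)$ is forward-invariant under $z=0$. Constraints (9) and (10) of (MMT) are indeed algebraically equivalent to the invariance conditions $(1-\alpha)\bar v + (\alpha'+\gamma')\s(\bar d,\bar v)\ge \bar v$ and $(1-\beta)\bar d + \beta'\s(\bar d,\bar v)\ge \bar d$, so that part of your plan is on solid ground.

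The one genuine gap is how you verify membership in your region $\mathcal R$ at the \emph{initial} period $\tau$. The myopic threshold in Theorem~\ref{thm:myopic} reads $cd_{\tau+1}\ge \log\!\bigl((e^{c(1-\alpha+\alpha')m_\tau}-1)/(e^{c(\gamma-\alpha+\alpha')m_\tau}-1)\bigr)$, and both sides depend on $m_\tau$, which in turn depends on the \emph{uncontrolled} initial adoption level $k_\tau$. Your sketch only pins down $m_t=\s(d_t,v_t)$ ``after the first step,'' but at $t=\tau$ you must cover every $m_\tau\in[0,d_\tau]$. The paper resolves this by rewriting the threshold as
\[
cd_\tau \ \ge \ \frac{1}{1-\beta}\log\!\left(\frac{e^{c(1+\alpha'-\alpha)m_\tau}-1}{e^{c(\gamma+\alpha'-\alpha)m_\tau}-1}\right) - \frac{\beta'}{1-\beta}\,cm_\tau
\]
and observing that the right side is \emph{convex} in $m_\tau$; hence its maximum over $[0,d_\tau]$ is attained at an endpoint. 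Constraint (7) of (MMT) handles the limit $m_\tau\to 0$ and constraint (8) handles $m_\tau=d_\tau$. Without this convexity step, your claim that $d_\tau\ge\bar d$ and $v_\tau\ge\bar v$ already places the state in $\mathcal R$ is unsubstantiated for arbitrary $k_\tau$.

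A smaller point: uniqueness of the (MMT) solution does not come from monotonicity of the one-step map as you suggest. The paper instead shows directly that the right-hand sides of (7)--(10) each have slope strictly below $c$ in the relevant variable, via the elementary bound $\frac{d}{dx}\,\frac{1}{c}\log\!\bigl((e^{c\zeta_1 x}-1)/(e^{c\zeta_2 x}-1)\bigr)\in[0,\zeta_1-\zeta_2)$ whenever $0<\zeta_2<\zeta_1<1+\zeta_2$. This slope bound (not a fixed-point argument) is what forces (10) to be tight at the optimum and reduces (MMT) to a one-dimensional linear search in $d$.
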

\begin{table}[h]
	$$\begin{array}{llr}
	\hline
	\min_{d,v} & \ \ \quad d &\text{(MMT)} \\
	\ \text{s.t.}& \quad cd \quad \geq \quad \frac{1}{1-\beta}\log\Big(\frac{1+\alpha'-\alpha}{\gamma +\alpha' - \alpha}\Big) & (7) \\ & \quad cd \quad \geq \quad  \frac{1}{1-\beta}\log \Big(\frac{e^{c(1+\alpha' -\alpha)d}-1}{e^{c(\gamma+\alpha'-\alpha )d}-1} \Big) - \frac{\beta'}{1-\beta}cd &(8) \\
	&\quad cd  \quad \geq \quad \log \Big(\frac{e^{cv}-1}{e^{cv(1-\alpha/(\alpha'+\gamma'))}-1} \Big) & (9) \\ & \quad cv \quad \geq \quad \log \Big(\frac{e^{cd}-1}{e^{cd(1-\beta/\beta')}-1} \Big) & (10) \vspace{5pt} \\
	\hline
	\end{array}$$
\end{table}
\stepcounter{equation}
\stepcounter{equation}
\stepcounter{equation}
\stepcounter{equation}

{Theorem \ref{thm:optregimeB} shows that in the \regimeB, once the market gets sufficiently thick, a static no-adoption policy is optimal for the remainder of the time horizon.} The proof of Theorem \ref{thm:optregimeB} relies on appropriately interpreting the constraints in (MMT). Constraints (7) and (8) are sufficient conditions to ensure \temporary \ matches are myopically optimal (according to Theorem \ref{thm:myopic}).  Constraints (9) and (10) ensure that the number of \workers \ and \jobs will increase in the subsequent period assuming there is no adoption. We use these constraints to show that $ \{(d, v) : d \geq \bar{d}, v \geq \bar{v} \}$ is an absorbing state. When applied iteratively, this proves that $z_t = 0$ is optimal for all $t \in [T] \setminus [\tau-1]$. In Appendix \ref{proof:thm:optregimeB}, we present a rigorous proof of Theorem \ref{thm:optregimeB} which also demonstrates that the unique solution to (MMT) can be found via linear search.

To further illustrate the role of market thickness, in Figure \ref{fig:arrow_diagram} we provide an example for how the myopically optimal policy compares to the optimal policy when the discount factor is non-zero {in the \regimeB. At each point $(d,v)$, we consider an initial condition of $(d_0,k_0, v_0)=(d,0,v)$. The model primitives for this instance (denoted instance $\mathcal{I}_2$) are given by  $(\alpha, \alpha', \gamma, \gamma', \beta, \beta') = (0.2, 0.15, 0.06, 0.2, 0.05, 0.2)$, $c=10$, $\delta = 0.6$, and $T = 15$. To the left of the solid curve, $z_0 = 1$ is myopically optimal, while to the right, $z_0 = 0$ is myopically optimal. To the right of the curve, the myopically optimal policy is also exactly optimal (i.e., $z_0^* = 0$, as shown by the red arrows). However, to the left of the curve, the myopically optimal policy of allowing for adoption may be sub-optimal due to the negative impact of adoption on volunteer engagement.  We note that the magnitude and direction of each arrow indicates how the size of the \worker \ and \job \ pools will change in the subsequent period when following the optimal policy $z_0^*$.}
		
		\begin{figure}[ht]
			\centering
			\includegraphics[trim={0cm 1.0cm 0cm 0cm},clip,
			width = .9\linewidth]{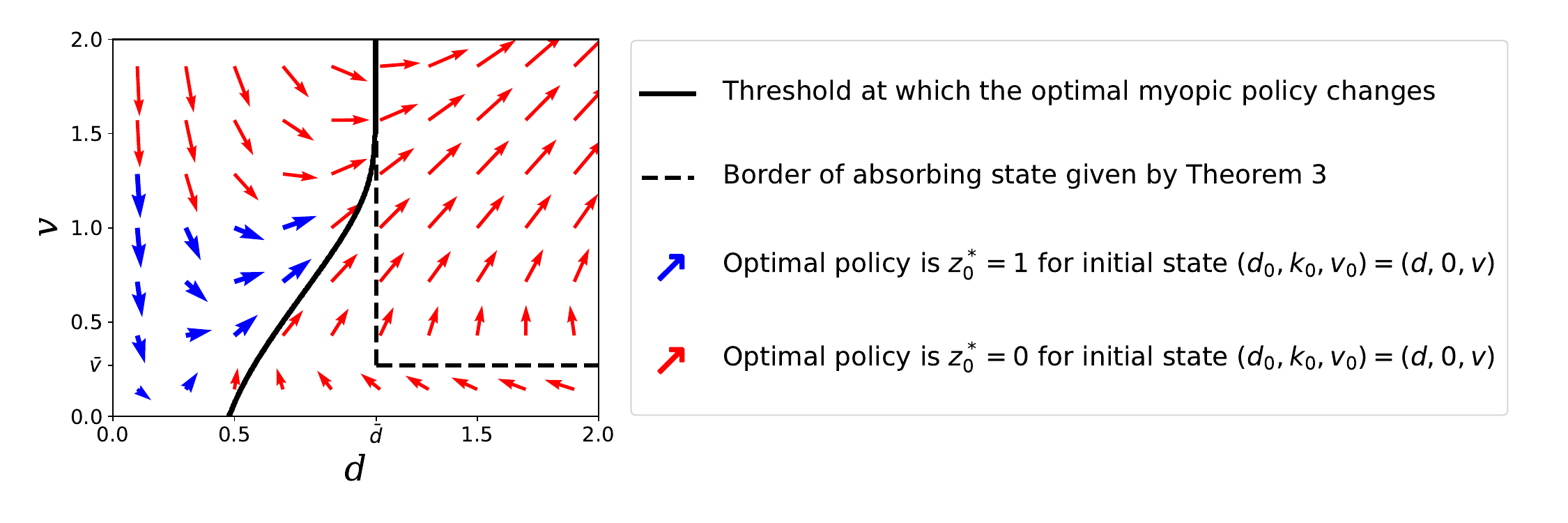}
			\caption{
			Comparing the myopically optimal policy to the optimal policy in instance $\mathcal{I}_2$.
			}
			\label{fig:arrow_diagram}
		\end{figure}

		If the market thickness is below the threshold specified by Theorem \ref{thm:optregimeB}, the optimal decision proves difficult to characterize, since the problem is not convex in the decision variables. However, our numerical analysis indicates that the optimal decisions are always $z_t = 0$ or $z_t=1$, resembling the {structure of the} optimal myopic policy. Further, if the horizon is long enough, we numerically observe that the optimal policy either disallows adoption or it initially allows for full adoption before switching and disallowing adoption (i.e., for some $\tau$, the optimal policy is $z_t^* = 1$ for all $t \in [\tau]$ and $z_{t}^* = 0$ for all $t \in [T] \setminus [\tau]$). In these cases, even though we cannot theoretically characterize the optimal policy, we can offer a guarantee on the performance of a static policy of no adoption. 
		
		\begin{theorem}[Approximately Optimal Policy in the \regimeB]
			\label{thm:approxregimeB}
			{Suppose the discount factor is small enough such that $r:= \delta(1+\max\{\beta'-\beta, \gamma'-\alpha+\alpha'\}) < 1$.}
Then, in the \regimeB \ (i.e., $\gamma > \alpha -\alpha'$), the policy $z_t  = 0$ for all $t \in [T]$ is $1-\kappa$ optimal, where
			$$\kappa =  \min\left\{\frac{ \frac{1}{1-r^T} + \frac{\max\{ \beta', \alpha'+\gamma'\}}{1-r} }{c \min\{d_0, v_0\}} \log 2, 1\right\}. $$
		\end{theorem}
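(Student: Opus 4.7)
The plan is to bound the per-period loss of the no-adoption policy against a fully-efficient reference trajectory, then divide by a lower bound on $\mathrm{OPT}$.

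First I would establish the core per-period estimate: for all $a, b \geq 0$,
\[
\min(a,b) - \s(a,b) \;\leq\; \frac{\log 2}{c},
\]
which follows from $e^{ca} + e^{cb} - 1 \leq 2\,e^{c\max(a,b)}$ applied inside the logarithm in the definition of $\s$ from Proposition~\ref{prop:matching}. This quantifies the per-period inefficiency of greedy spot-market matching relative to a fully efficient one, and it is the only place $\log 2$ enters $\kappa$.

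Next I would introduce an auxiliary ``fully-efficient no-adoption'' trajectory $(\tilde d_t, \tilde v_t, \tilde m_t)$, generated from the same initial state by setting $z_t \equiv 0$ and replacing the greedy match size with $\tilde m_t := \min(\tilde d_t, \tilde v_t)$. Using $m_t^* \leq \min(d_t^*, v_t^*)$ together with the \regimeB \ assumption (which makes the coefficient of $z_t m_t$ in the $v$-recursion nonpositive), a direct induction in the spirit of Claim~\ref{claim:statecoupling} gives coordinatewise domination $\tilde d_t \geq d_t^*$, $\tilde v_t \geq v_t^*$, and $\tilde m_t \geq m_t^*$. I would then track the gap $\varepsilon_t := \max(\tilde d_t - d_t^0,\, \tilde v_t - v_t^0)$ between the hypothetical and the true no-adoption trajectories: combining both state recursions with the per-period loss estimate $\tilde m_t - m_t^0 \leq \varepsilon_t + (\log 2)/c$ yields the linear one-step bound
\[
\varepsilon_{t+1} \;\leq\; R\,\varepsilon_t \;+\; \max\{\beta', \alpha'+\gamma'\}\,\frac{\log 2}{c}, \qquad R := r/\delta = \max\{1-\beta+\beta',\, 1-\alpha+\alpha'+\gamma'\},
\]
with $\varepsilon_0 = 0$. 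This solves in closed form, and combined with $m_t^* - m_t^0 \leq \tilde m_t - m_t^0 \leq \varepsilon_t + (\log 2)/c$ produces an upper bound on $\mathrm{OPT} - \mathrm{NA}$ decomposing into a direct matching-loss contribution of order $(\log 2)/c$ per period and a growth-propagated contribution scaling with $\max\{\beta', \alpha'+\gamma'\}$ and powers of $r$.

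Finally, I would divide by a lower bound on $\mathrm{OPT}$ obtained from any feasible reference trajectory whose discounted matches are at least proportional to $\min\{d_0, v_0\}\cdot \frac{1-r^T}{1-r}$. This normalization produces the $\frac{1}{1-r^T}$ and $\frac{\max\{\beta', \alpha'+\gamma'\}}{1-r}$ summands of $\kappa$ together with the $c\min\{d_0, v_0\}$ denominator, while the $\min\{\cdot, 1\}$ cap takes care of thin markets where the raw bound exceeds one. The hardest step, I expect, is this last normalization: producing an $\mathrm{OPT}$ lower bound tight enough to yield the $\frac{1}{1-r^T}$ factor (rather than the cruder $\frac{1}{1-\delta}$), and handling the case $r \geq 1$ so that the geometric identities remain meaningful and the right combination of terms survives when the loss propagates through both coupled state variables.
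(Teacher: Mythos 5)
Your proposal matches the paper's argument essentially step for step: the paper also introduces a ``match-min'' no-adoption trajectory $\hat m_t = \min\{\hat d_t,\hat v_t\}$ as an upper bound on OPT (justified via the coupling argument of Claim~\ref{claim:statecoupling}), bounds the per-period matching loss by $\tfrac{\log 2}{c}$ plus the state gap, propagates that gap through the linear recursion $\varepsilon_{t+1}\le A_1\varepsilon_t + A_3\tfrac{\log 2}{c}$ with $A_1=r/\delta$ and $A_3=\max\{\beta',\alpha'+\gamma'\}$, and normalizes by the lower bound $\min\{d_0,v_0\}\tfrac{1-r^T}{1-\delta}$ on the discounted match-min value (assuming $r<1$, exactly the caveat you flag). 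The approach and all key ingredients are the same; only minor bookkeeping in the final normalization differs.
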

		
		Theorem \ref{thm:approxregimeB} establishes a ratio between the performance of a no-adoption policy and the optimal policy. The ratio approaches $1$ as the thickness {for both sides of the market, namely $c\min\{d_0,v_0\}$,} grows, 
		and  improves the more the platform discounts the future. If 
		$r < 1 - \frac{1+\max\{ \beta', \alpha'+\gamma'\}}{c \min\{v_0, d_0\}} \log 2$, 
		then our approximation factor is non-trivial. In the example shown in Figure \ref{fig:arrow_diagram}, a policy of no adoption at the initial condition $(d_0, k_0, v_0) = (0.5, 0, 0.5)$ achieves an approximation ratio of $0.704$.
		
		To prove Theorem \ref{thm:approxregimeB}, we upper bound the achievable number of completed \jobs \ by considering a setting where the matching process is perfectly efficient, i.e., the number of matches is the minimum of the two sides of the spot market. Using this matching function, we exactly characterize the optimal policy as $z_t^* = 0$ for all $t \in [T]$. We then compare this upper bound on the number of completed \jobs \  to the number of completed \jobs \ when the matching process is governed by \eqref{eq:matchingfunc} {under} the same no-adoption policy. The full proof can be found in Appendix \ref{proof:thm:approxregimeB}.

	\end{subsection}
\end{section}
\revcolor{
\section{Extensions}
\label{sec:hetero}
In our base model, for the sake of simplicity and sharpness of results, we {made} the assumptions that: 
(i) volunteers are homogeneous in their engagement behavior,
(ii) if a donation is allowed to be adopted, the volunteer will adopt it, and (iii) the platform can dissolve an adopted match at any time. In Section \ref{subsec:heteromodel}, we enrich our base model by relaxing these three assumptions. 
Through analytical and numerical results, we illustrate that our key qualitative insights hold in this richer model. 
In particular, in Section \ref{subsec:heteromyopic}, we show that the optimal myopic policy again constitutes either enabling adoption for all donations or disallowing adoption.
Further, in Section \ref{subsec:heterolongrun}, we show that repeatedly following the optimal myopic policy performs well under some conditions.

\subsection{Modeling Heterogeneous Volunteers}
\label{subsec:heteromodel}
Consider a model 
with two types of volunteers: ``professionals'' and ``students''. 
We augment the notation from our base model by adding the superscript $p$ or $s$ (respectively) to indicate the volunteer type.
These volunteers differ in their engagement when part of a \temporary\ match (i.e., $\alpha'^p \neq \alpha'^s$). In addition, we allow volunteers to have more control over the types of matches they form.
As in our base model,
we let $z_t$ denote the 
fraction of donations in period $t$ for which the platform allows adoption. {V}olunteers greedily match with the first compatible donation in the spot market, regardless of whether adoption is allowed. We assume that volunteers can choose to not adopt a match, and their willingness to {adopt} is type-dependent{:} when a professional (resp. student) forms a match with a donation and adoption is allowed, they form an adopted match with probability $\adoptp^p$ (resp. $\adoptp^s$), and a one-time match otherwise. Once an adopted match is formed, the platform cannot dissolve the match; the match remains until it dissolves naturally.

The deterministic dynamics of this extended model, described below, mirror the dynamics given by eqs. \eqref{eq:mt}-\eqref{eq:vt} presented in Proposition \ref{prop:convergence}. (Note that the proof of Proposition \ref{prop:convergence} can easily be adjusted to prove convergence in this setting as well.) {T}he number of adopted matches {reflects} two differences compared to our base model. First, previously-established adopted matches cannot be dissolved by the platform. Second, newly-formed matches are adopted matches if and only if adoption is allowed for the donation (which happens with probability $z_t$) and the volunteer is willing to participate in an adopted match (which happens with probability $\adoptp^j$). Thus,
\begin{equation}
    k_{t+1}^j = (1-\gamma) \left(k_t^j + z_t\adoptp^j(m_t^j-k_t^j)\right), \qquad j \in \{p,s\}. \label{eq:kthetero}
\end{equation}
The total number of adopted matches is the sum of the adopted matches of both volunteer types, i.e., $k_{t} = k^p_{t}+k^s_{t}$. Similarly, we augment the volunteer dynamics by taking into account both the type-specific engagement behavior and the number of adopted matches formed:
\begin{equation}
    v_{t+1}^j = (1-\alpha)v_t^j + (\alpha'^j+\gamma')m_t^j - (\gamma - \alpha + \alpha'^j)\left(k_t^j + z_t\adoptp^j(m_t^j-k_t^j)\right), \qquad j \in \{p,s\}. \label{eq:vthetero}
\end{equation}
The total size of the volunteer pool is given by $v_{t} = v^p_{t}+v^s_{t}$. Since the donor dynamics are unchanged from equation \eqref{eq:dt}, all that remains is to specify the number of matches involving volunteers of each type. This includes both adopted matches and a fraction of the spot market matches. We assume that volunteers arrive to the spot market in a random order (independent of type) and greedily form matches. This leads to volunteers of type $j$ forming a number of matches in the spot market which is proportional to their prevalence in the volunteer side of the spot market,
\begin{equation}
    m_t^j = \left(\frac{v_t^j - k_t^j}{v_t-k_t}\right)\s(d_t-k_t, v_t-k_t) + k_t^j, \qquad j \in \{p,s\}. \label{eq:mthetero}
\end{equation}
The total number of matches is simply {given by} $m_{t} = m^p_{t}+m^s_{t}$. 

{N}ote that we recover our base model by setting $\alpha'^p = \alpha'^s$ and $\adoptp^p = \adoptp^s = 1$, and by allowing the platform to dissolve adopted matches at any point. 
{In addition,} because a volunteer's type is not necessarily observable, we assume that the platform cannot discriminately allow adoption only for a certain type of volunteer, and instead must make a single decision $z_t$ for all volunteer types.

\subsection{Optimal Myopic Policy (Heterogeneous Volunteers)}
\label{subsec:heteromyopic}
Consider the myopic problem, where the platform chooses $z_0$ (the fraction of donations for which adoption is allowed) to maximize $m_1$. Even in this richer model, the optimal myopic policy is always either $z_0 = 1$ (i.e., adoption always allowed) or $z_0 = 0$ (i.e., adoption disallowed).

To see why, note that many important features of the myopic problem remain identical to our base model. First, the matching function remains the same. Second, the number of adopted matches and the size of the volunteer pool are both still linear in the fraction of donations for which adoption is allowed (i.e., $k_1$ and $v_1$ are both linear in $z_0$). Essentially, even though there are two types of volunteers, the next-period dynamics are governed as though there were a single \emph{expected} volunteer type, with adoption probability equal to the average adoption probability of volunteers in the spot market and \gap\ equal to a weighted average of the \gap\ of volunteers in the spot market. Due to these similarities, the same intuition described in Section \ref{subsec:myopic} suggests that either allowing all donations to be adopted or disallowing adoption will be optimal. We formalize this intuition in the following proposition.

\begin{proposition}[Optimal Myopic Policy with Heterogeneous Volunteers]
			\label{prop:myopichetero}
			Given an initial state $(d_0, k_0^p, k^s_0, v_0^p, v_0^s)$, the optimal myopic policy is either allowing all donations to be adopted or disallowing adoption. In particular,
			\begin{equation}
			z_0^* = \begin{cases}1, &\text{if} \quad \gamma \leq \alpha - \frac{\sum_{j \in \{p,s\}}  \adoptp^j(v_0^j - k_0^j)  \alpha'^j}{\sum_{j \in \{p,s\}}\adoptp^j(v_0^j - k_0^j)~~~}  \quad \text{or} \quad  cd_1 \leq (1-\gamma)k_0 + \log\left(\frac{ e^{c(\hat{D}_k - \hat{D}_v)} -1}{e^{-c\hat{D}_v} - 1}\right) \\
			0, &\text{otherwise} 
			\end{cases} \label{eq:optmyopichetero}
			\end{equation}
		where $\hat{D}_k :=  \frac{\partial k_1}{\partial z_0} = (1-\gamma)\sum_{j \in \{p,s\}}\adoptp^j(m_0^j - k_0^j)$ and $\hat{D}_v := \frac{\partial v_1}{\partial z_0} = \sum_{j \in \{p,s\}}(\alpha - \alpha'^j-\gamma)\adoptp^j(m_0^j - k_0^j)$.
		\end{proposition}
	
The above optimality conditions resemble the corresponding conditions in Theorem \ref{thm:myopic}.
These conditions establish that a policy of disallowing adoption is optimal if and only if a weighted version of the \gap\ is positive and if the thickness of the spot market for volunteers (given no additional adoption) is above a threshold. The threshold is decreasing in the \gap\ for each type of volunteer (via its dependence on $\hat{D}_v$). The derivatives $\hat{D}_k$ and $\hat{D}_v$  mirror $D_k$ and $D_v$ in our base model, and represent the difference in $k_1$ and $v_1$ (respectively) when following a policy of fully allowing adoption as opposed to disallowing adoption.

The weighted \gap\ which comprises the first condition in \eqref{eq:optmyopichetero} depends on both the fraction of volunteers in the spot market of each type as well as their willingness to form an adopted match. To illustrate this weighted \gap, consider an instance where professionals have negative \gap\ and students have positive \gap. If students self-select into \temporary\ matches (i.e., if $\adoptp^s = 0$), then the weighted \gap\ places no weight on students' \gap. In such an instance, allowing adoption for every donation would always be myopically optimal because the weighted \gap\ would always be negative. Intuitively, presenting students with the opportunity to adopt does not affect the type of matches that they form, while providing professionals with the opportunity to adopt can only improve their engagement and retention. The proof of Proposition \ref{prop:myopichetero} follows quite naturally from the proof of Theorem \ref{thm:myopic}, and the details can be found in Appendix \ref{app:proof:prop:myopichetero}. 

 Proposition \ref{prop:myopichetero} allows us to conduct comparative statics to evaluate how volunteers' adoption probability impacts the number of completed matches $m_1$. In Appendix \ref{app:comp_statics_adoptp}, we illustrate that increasing volunteers' willingness to adopt can impact the optimal policy and significantly improve the number of completed matches. Many platforms allow for commitment whenever possible (e.g., by having an ``Adopt'' button like FRUS), and sometimes exert significant effort to encourage commitment from their volunteers. Our results provide insight into the value of such efforts.

\subsection{Long-Run Performance of Optimal Myopic Policy}
\label{subsec:heterolongrun}

We next consider the platform's long-run optimization problem in the presence of heterogeneous volunteers, which remains the same in spirit as the optimization problem (PD) presented in Section \ref{subsec:opt}. Given the marketplace dynamics, the platform decides how much adoption to allow (i.e., $\{z_t: t \in [T]\}$) {to maximize} the total discounted number of completed donations. 

Unfortunately, characterizing the optimal long-run policy in this generalized setting is intractable.
Instead, we turn to simulation to shed light on the performance of simple policies. We find that a policy of repeatedly making {myopically optimal} decisions (according to the criteria given in Proposition \ref{prop:myopichetero}) performs quite well. We tested this policy across 10,000 simulated instances with $T=10$ and other reasonable randomly-generated parameters. In Appendix \ref{app:simulations}, we describe the parameter space for our simulations and provide a detailed description of simulation results.

We highlight that in 90.7\% of the simulated instances, the repeated myopically optimal policy was exactly optimal, i.e., it maximized the total discounted number of matches. Across all simulations, the repeated {myopically optimal}  policy achieved on average 99.8\% of the performance of the optimal policy, and it always performed at least 83.2\% as well as the optimal policy. 
To complement our numerical observations, we establish the following approximation guarantee in terms of the gap between the maximum possible growth rate ($B_1 := \max\{\gamma'-\gamma, \gamma'-\alpha +\alpha'^p, \gamma'-\alpha +\alpha'^s, \beta'-\beta\}$) and the minimum growth rate for matched volunteers ($\underline{r} := \min\{\gamma'-\gamma, \gamma'-\alpha +\alpha'^p, \gamma'-\alpha +\alpha'^s\}$).
\begin{proposition}
\label{prop:approxrepeatmyopic}
Suppose the discount factor is small enough such that $\delta(1+B_1) < 1$ where $B_1$ is defined above.
Then, a policy of repeatedly playing the myopically optimal policy given by equation \eqref{eq:optmyopichetero} for all $t \in [T]$ is $1-\kappa_m$ optimal, where
			$$\kappa_m =  \min\left\{\frac{\delta (1-\delta)}{\tilde{m}_1(1-\delta^T)}\left((B_1-\underline{r})\frac{2(1+B_1)\min\{d_0,v_0\}}{(1-\delta(1+B_1))^2} + \frac{(\underline{r} + \max\{\alpha,\beta\}) (1+\underline{r})^2 \log(2)}{\underline{r}(1-\delta (1+\underline{r}))c} \right), 1\right\},$$
			
and  $\tilde{m}_1$ denotes the myopically optimal number of matches in period $1$.  
\end{proposition}

Proposition \ref{prop:approxrepeatmyopic} establishes a ratio between the performance of a repeated {myopically optimal} policy and the optimal policy.
The ratio is decreasing in the difference between the growth rates $B_1$ and $\underline{r}$ {due to the diminished long-term implications for growth of a ``bad'' decision}, and it approaches 1 as the discount rate approaches 0. 
The proof of Proposition \ref{prop:approxrepeatmyopic} resembles the proof of Theorem \ref{thm:approxregimeB};
a complete proof can be found in Appendix \ref{app:proof:prop:approxrepeatmyopic}. 
}

\section{Conclusion}\label{sec:discussion}
 The literature on designing two-sided matching markets is rich and varied, and there have been a multitude of papers studying how to optimize centralized matching as well as how to design levers such as pricing, search, and information. To our knowledge, our work is one of the first to suggest that the level of \emph{commitment} to a match is also a design lever that can be influenced by the platform. Such a lever is potentially valuable in a wide range of settings where matchings are repeated over time, and it is particularly useful for nonprofit organizations which often cannot rely on monetary levers to promote platform growth. {Volunteer-based organizations such as FRUS and Food Rescue Hero tend to assume that more commitment will amplify the growth of their platform. However, increasing commitment for some users may reduce the engagement of other users, which can have mixed effects on overall engagement and platform growth.}
 
{Given the compounding importance of platform growth, it is vital that organizations (especially nonprofit organizations) carefully consider the double-edged impacts that commitment can have on growth and engagement. Our work sheds light on this trade-off, and we provide a simple framework for assessing whether commitment will be helpful or potentially harmful.}

\ACKNOWLEDGMENT{The authors gratefully acknowledge the Simons Institute for the Theory of Computing, as this work was done in part while attending the program on Online and Matching-Based Market Design.}

\bibliographystyle{informs2014_orig}

\bibliography{bibliography}

\begin{thebibliography}{41}
\providecommand{\natexlab}[1]{#1}
\providecommand{\url}[1]{\texttt{#1}}
\providecommand{\urlprefix}{URL }

\bibitem[{Afeche et~al.(2018)Afeche, Liu, \protect\BIBand{}
  Maglaras}]{afeche2018ride}
Afeche P, Liu Z, Maglaras C (2018) Ride-hailing networks with strategic
  drivers: The impact of platform control capabilities on performance.
  \emph{Columbia Business School Research Paper} (18-19):18--19.

\bibitem[{Ahire \protect\BIBand{} Pekg{\"u}n(2018)}]{ahire2018harvest}
Ahire SL, Pekg{\"u}n P (2018) Harvest hope food bank optimizes its promotional
  strategy to raise donations using integer programming. \emph{Interfaces}
  48(4):291--306.

\bibitem[{Anderson et~al.(2017)Anderson, Ashlagi, Gamarnik, \protect\BIBand{}
  Kanoria}]{anderson2017efficient}
Anderson R, Ashlagi I, Gamarnik D, Kanoria Y (2017) Efficient dynamic barter
  exchange. \emph{Operations Research} 65(6):1446--1459.

\bibitem[{Arnosti \protect\BIBand{} Shi(2020)}]{arnosti2020design}
Arnosti N, Shi P (2020) Design of lotteries and wait-lists for affordable
  housing allocation. \emph{Management Science} .

\bibitem[{Ashlagi et~al.(2019)Ashlagi, Burq, Jaillet, \protect\BIBand{}
  Manshadi}]{ashlagi2019matching}
Ashlagi I, Burq M, Jaillet P, Manshadi V (2019) On matching and thickness in
  heterogeneous dynamic markets. \emph{Operations Research} 67(4):927--949.

\bibitem[{Ata et~al.(2020)Ata, Ding, \protect\BIBand{}
  Zenios}]{ata2020achievable}
Ata B, Ding Y, Zenios S (2020) An achievable-region-based approach for kidney
  allocation policy design with endogenous patient choice. \emph{Manufacturing
  \& Service Operations Management} .

\bibitem[{Ata et~al.(2019)Ata, Lee, \protect\BIBand{}
  S{\"o}nmez}]{ata2019dynamic}
Ata B, Lee D, S{\"o}nmez E (2019) Dynamic volunteer staffing in multicrop
  gleaning operations. \emph{Operations Research} 67(2):295--314.

\bibitem[{Balcik et~al.(2014)Balcik, Iravani, \protect\BIBand{}
  Smilowitz}]{balcik2014multi}
Balcik B, Iravani S, Smilowitz K (2014) Multi-vehicle sequential resource
  allocation for a nonprofit distribution system. \emph{IIE Transactions}
  46(12):1279--1297.

\bibitem[{Belavina(2021)}]{belavina2021grocery}
Belavina E (2021) Grocery store density and food waste. \emph{Manufacturing \&
  Service Operations Management} 23(1):1--18.

\bibitem[{Bielefeld(1994)}]{bielefeld1994affects}
Bielefeld W (1994) What affects nonprofit survival? \emph{Nonprofit management
  and leadership} 5(1):19--36.

\bibitem[{Bimpikis et~al.(2020)Bimpikis, Elmaghraby, Moon, \protect\BIBand{}
  Zhang}]{bimpikis2020managing}
Bimpikis K, Elmaghraby WJ, Moon K, Zhang W (2020) Managing market thickness in
  online business-to-business markets. \emph{Management Science} .

\bibitem[{Coleman-Jensen et~al.(2018)Coleman-Jensen, Rabbitt, Gregory,
  \protect\BIBand{} Singh}]{food_insecurity}
Coleman-Jensen A, Rabbitt M, Gregory C, Singh A (2018) Household food security
  in the united states in 2018. \emph{USDA-ERS Economic Research Report} (270).

\bibitem[{Crapis et~al.(2017)Crapis, Ifrach, Maglaras, \protect\BIBand{}
  Scarsini}]{crapis2017monopoly}
Crapis D, Ifrach B, Maglaras C, Scarsini M (2017) Monopoly pricing in the
  presence of social learning. \emph{Management Science} 63(11):3586--3608.

\bibitem[{Dong \protect\BIBand{} Ibrahim(2020)}]{dong2017managing}
Dong J, Ibrahim R (2020) Managing supply in the on-demand economy: Flexible
  workers, full-time employees, or both? \emph{Operations Research}
  68(4):1238--1264.

\bibitem[{Feigenbaum et~al.(2020)Feigenbaum, Kanoria, Lo, \protect\BIBand{}
  Sethuraman}]{feigenbaum2020dynamic}
Feigenbaum I, Kanoria Y, Lo I, Sethuraman J (2020) Dynamic matching in school
  choice: Efficient seat reassignment after late cancellations.
  \emph{Management Science} .

\bibitem[{Fianu \protect\BIBand{} Davis(2018)}]{fianu2018markov}
Fianu S, Davis LB (2018) A markov decision process model for equitable
  distribution of supplies under uncertainty. \emph{European Journal of
  Operational Research} 264(3):1101--1115.

\bibitem[{Galvin(2020)}]{galvin_2020}
Galvin G (2020) Covid-19 has heightened the threat of child hunger, while
  efforts to prevent it have fallen short. \emph{U.S. News \& World Report} .

\bibitem[{Green et~al.(2013)Green, Savin, \protect\BIBand{}
  Savva}]{green2013nursevendor}
Green LV, Savin S, Savva N (2013) ``nursevendor problem'': Personnel staffing
  in the presence of endogenous absenteeism. \emph{Management Science}
  59(10):2237--2256.

\bibitem[{Halaburda et~al.(2017)Halaburda, Jan~Piskorski, \protect\BIBand{}
  Y{\i}ld{\i}r{\i}m}]{halaburda2017competing}
Halaburda H, Jan~Piskorski M, Y{\i}ld{\i}r{\i}m P (2017) Competing by
  restricting choice: The case of matching platforms. \emph{Management Science}
  64(8):3574--3594.

\bibitem[{Hu \protect\BIBand{} Zhou(2018)}]{hu2018dynamic}
Hu M, Zhou Y (2018) Dynamic type matching. \emph{Rotman School of Management
  Working Paper} (2592622).

\bibitem[{Kanoria \protect\BIBand{} Saban(2020)}]{kanoria2020facilitating}
Kanoria Y, Saban D (2020) Facilitating the search for partners on matching
  platforms. \emph{Management Science (to appear)} .

\bibitem[{Ke \protect\BIBand{} Ford-Jones(2015)}]{ke2015food}
Ke J, Ford-Jones EL (2015) Food insecurity and hunger: A review of the effects
  on children's health and behaviour. \emph{Paediatrics \& Child Health}
  20(2):89--91.

\bibitem[{Leshno(2017)}]{leshno2017dynamic}
Leshno J (2017) Dynamic matching in overloaded waiting lists. \emph{Available
  at SSRN 2967011} .

\bibitem[{Levi et~al.(2019)Levi, Paulson, \protect\BIBand{}
  Perakis}]{levi2019optimal}
Levi R, Paulson E, Perakis G (2019) Optimal interventions for increasing
  healthy food consumption among low income households. \emph{Available at SSRN
  3486292} .

\bibitem[{Lian \protect\BIBand{} van Ryzin(2019)}]{lian2019optimal}
Lian Z, van Ryzin G (2019) Optimal growth in two-sided markets. \emph{Available
  at SSRN 3310559} .

\bibitem[{Lien et~al.(2014)Lien, Iravani, \protect\BIBand{}
  Smilowitz}]{lien2014sequential}
Lien RW, Iravani SM, Smilowitz KR (2014) Sequential resource allocation for
  nonprofit operations. \emph{Operations Research} 62(2):301--317.

\bibitem[{Locke et~al.(2003)Locke, Ellis, \protect\BIBand{}
  Smith}]{locke2003hold}
Locke M, Ellis A, Smith JD (2003) Hold on to what you've got: the volunteer
  retention literature. \emph{Voluntary Action} 5(3):81--99.

\bibitem[{Ma et~al.(2019)Ma, Fang, \protect\BIBand{} Parkes}]{ma2018spatio}
Ma H, Fang F, Parkes DC (2019) Spatio-temporal pricing for ridesharing
  platforms. \emph{Proceedings of the 2019 ACM Conference on Economics and
  Computation}, 583--583.

\bibitem[{Manshadi \protect\BIBand{} Rodilitz(2020)}]{manshadi2020online}
Manshadi V, Rodilitz S (2020) Online policies for efficient volunteer
  crowdsourcing. \emph{arXiv preprint arXiv:2002.08474} .

\bibitem[{Mastin \protect\BIBand{} Jaillet(2013)}]{mastin2013greedy}
Mastin A, Jaillet P (2013) Greedy online bipartite matching on random graphs.
  \emph{arXiv preprint arXiv:1307.2536} .

\bibitem[{Olivares et~al.(2020)Olivares, Musalem, \protect\BIBand{}
  Yung}]{musalem2019balancing}
Olivares M, Musalem A, Yung D (2020) Balancing agent retention and waiting time
  in service platforms. \emph{Proceedings of the 21st ACM Conference on
  Economics and Computation}, 295--313.

\bibitem[{Orgut et~al.(2018)Orgut, Ivy, Uzsoy, \protect\BIBand{}
  Hale}]{orgut2018robust}
Orgut IS, Ivy JS, Uzsoy R, Hale C (2018) Robust optimization approaches for the
  equitable and effective distribution of donated food. \emph{European Journal
  of Operational Research} 269(2):516--531.

\bibitem[{{\"O}zkan \protect\BIBand{} Ward(2020)}]{ozkan2020dynamic}
{\"O}zkan E, Ward AR (2020) Dynamic matching for real-time ride sharing.
  \emph{Stochastic Systems} 10(1):29--70.

\bibitem[{Petrongolo \protect\BIBand{}
  Pissarides(2001)}]{petrongolo2001looking}
Petrongolo B, Pissarides CA (2001) Looking into the black box: A survey of the
  matching function. \emph{Journal of Economic literature} 39(2):390--431.

\bibitem[{Sampson(2006)}]{sampson2006optimization}
Sampson SE (2006) Optimization of volunteer labor assignments. \emph{Journal of
  Operations Management} 24(4):363--377.

\bibitem[{Sanders(2020)}]{sanders2020dynamic}
Sanders RE (2020) Dynamic pricing and organic waste bans: A study of grocery
  retailers' incentives to reduce food waste. \emph{Available at SSRN 2994426}
  .

\bibitem[{Song et~al.(2018)Song, Tucker, Murrell, \protect\BIBand{}
  Vinson}]{song2018closing}
Song H, Tucker AL, Murrell KL, Vinson DR (2018) Closing the productivity gap:
  Improving worker productivity through public relative performance feedback
  and validation of best practices. \emph{Management Science} 64(6):2628--2649.

\bibitem[{Su \protect\BIBand{} Zenios(2005)}]{su2005patient}
Su X, Zenios SA (2005) Patient choice in kidney allocation: A sequential
  stochastic assignment model. \emph{Operations research} 53(3):443--455.

\bibitem[{Venkat(2011)}]{venkat2011climate}
Venkat K (2011) The climate change and economic impacts of food waste in the
  united states. \emph{International Journal on Food System Dynamics}
  2(4):431--446.

\bibitem[{Wormald et~al.(1999)}]{wormald1999models}
Wormald NC, et~al. (1999) The differential equation method for random graph
  processes and greedy algorithms. \emph{Lectures on approximation and
  randomized algorithms} 73:155.

\bibitem[{Zhang et~al.(2020)Zhang, Atasu, Ayer, \protect\BIBand{}
  Toktay}]{zhang2020truthful}
Zhang C, Atasu A, Ayer T, Toktay LB (2020) Truthful mechanisms for medical
  surplus product allocation. \emph{Manufacturing \& Service Operations
  Management} 22(4):735--753.

\end{thebibliography}
\newpage
\renewcommand{\theHchapter}{A\arabic{chapter}}

\begin{APPENDICES}
\begin{section}{Omitted Details from Section \ref{sec:model}}

\subsection{Additional Details on Empirical Analysis of FRUS Data (Section \ref{subsec:data})}
\label{app:empirics}
\subsubsection{Donor Dropout}
\label{app:emp:dondrop}
\revcolor{
We use a logistic regression to estimate the impact that matching outcomes have on donor dropout. We want to focus our analysis on the (large) subset of donations which are \emph{intended} to recur; however, since we do not directly observe donors intentions, we instead restrict our attention to donations which have occurred for at least two consecutive weeks. Our analysis consists of 22,981 donation-weeks, which comprises 97 weeks and 906 different donations.  For each donation $d$ and week $t$, we observe the binary matching outcome $missed_{d,t}$, which is $1$ if the donation was ``missed'' (i.e. if a donation route and time was posted to the FRUS website and app, but no volunteer signed up) and $0$ if the donation was ``picked up.'' A donation $d$ is said to ``drop out'' after week $t$ if it does not occur again for at least 6 weeks. This binary outcome is our response variable, $dropout_{d,t}$. We estimate the following model, where $\mathbf{X}_{d,t}$ are control variables which allow for location and seasonal fixed effects:
\begin{equation}
    \text{logit}[\text{Pr}(dropout_{d,t})] = \lambda_0 + \lambda_1 \cdot missed_{d,t} + \mathbf{\lambda}_2 \cdot \mathbf{X}_{d,t}.
\end{equation}
In Table \ref{table:donordropouttable}, we report the results of our logistic regression, which estimates the impact of a missed donation on the dropout probability. The estimated effect is statistically significant at the $p < 0.001$ level, and is robust to controls on location and season. 
It is also robust to changes in our definition of dropout (e.g. {if we assume a donation drops out if it does not occur again for 4 weeks}). We omit further results for the sake of brevity.
\begin{table}[ht]
\centering 
\revcolor{\footnotesize
\begin{tabular}{@{} l *{4}{d{5.6}} @{}} 
\cmidrule[1pt]{1-5}
&\tablemc{(1)}&\tablemc{(2)}&\tablemc{(3)}&\tablemc{(4)} \\[-1ex]
& \tablemc{$missed$} & \tablemc{\qquad \ \ $missed$ + \qquad \qquad } & \tablemc{\qquad \ \ $missed$ + \qquad \qquad }  & \tablemc{$missed$ +}
\\ [-1.5ex]
& \tablemc{only} & \tablemc{Location FEs} & \tablemc{Seasonal FEs} & \tablemc{Loc./Seas. FEs} 
\\ 
\cline{1-5}
 $missed$ & 1.232^{***} & 1.375^{***} & 1.221^{***}&1.373^{***} \\ [-1.5ex]
  & (0.267) & (0.273) & (0.267)&(0.273) \\ [1ex]
 Constant & -5.759^{***} & -5.065^{***} & -5.783^{***}&-5.219^{***} \\  [-1.5ex]
  & (0.122) & (0.219)  & (0.216)& (0.278)\\
\cline{1-5}
   Df Model & \multicolumn{1}{c}{1} & \multicolumn{1}{c}{8} & \multicolumn{1}{c}{4}&\multicolumn{1}{c}{11} \\[-1.5ex]
   $n$ & \multicolumn{1}{c}{22981} & \multicolumn{1}{c}{22981} & \multicolumn{1}{c}{22981}&\multicolumn{1}{c}{22981} \\
\cmidrule[1pt]{1-5}
\multicolumn{3}{@{}l@{}}{\footnotesize Note: $^{***}\, p<0.001$}
\end{tabular} 
}
  \caption{\revcolor{Logistic regression estimating the impact of a missed donation on the likelihood of dropout.}}
  \label{table:donordropouttable}
\end{table}
}
\medskip

\subsubsection{Volunteer Dropout}
\label{app:emp:voldrop}
\revcolor{
Due to the challenges of our censored dataset described in Section \ref{subsec:data}, we use the availability of donations as a link between match outcomes and volunteer dropout. If there are fewer availabilities, we expect volunteers who are not part of an adopted match to be less likely to find a compatible match, and thus be more likely to drop out. In contrast, we do not expect volunteers who are part of an adopted match to be impacted by the number of availabilities, since they are guaranteed a compatible match.

We focus our analysis on four large locations which together comprise a majority of the data. This allows us to track the size of the spot market, which is defined as the number of donations scheduled for within the next week that are currently available for sign-up. We use a logistic regression to estimate the impact of availabilities (i.e., the size of next week's spot market) on volunteer dropout, both for volunteers which are not part of an adopted match and for volunteers which are. For each volunteer $v$ in location $j$ that transports a donation in week $t$, we observe (i) whether they are part of an adopted match in week $t$, (ii) the size of the spot market in their location in week $t+1$ ($spot_{j, t+1}$), and (iii) whether they drop out following week $t$ ($dropout_{v,j,t}$), which is defined as not transporting another donation for the next six weeks. Due to varying marketplace characteristics, we incorporate location-based fixed effects along with seasonal fixed effects into the vector $\mathbf{X}_{v,j,t}$. For volunteers who are not part of an adopted match in week $t$, we estimate the following:
\begin{equation}
\text{logit}[\text{Pr}(dropout_{v,j,t})] = \lambda_0 + \lambda_1 \cdot spot_{j, t+1} + \mathbf{\lambda}_2 \cdot \mathbf{X}_{v,j,t}.
\end{equation}

\begin{table}[ht]
\centering 
\revcolor{\footnotesize
\begin{tabular}{@{} l *{5}{d{5.6}} @{}} 
\cmidrule[1pt]{1-6}
&\multicolumn{2}{c}{Vols. in \temporary\ match} &&\multicolumn{2}{c}{Vols. in adopted match} \\
\cline{2-3} \cline{5-6}
&\tablemc{(1)}&\tablemc{(2)}&&\tablemc{(3)}&\tablemc{(4)} \\[-1ex]
& \tablemc{$spot$ + } & \tablemc{$spot$ + } && \tablemc{$spot$ + }  & \tablemc{$spot$ + }
\\ [-1.5ex]
& \tablemc{Location FEs} & \tablemc{\quad Loc./Seas. FEs\quad} && \tablemc{Location FEs} & \tablemc{\quad Loc./Seas. FEs\quad} 
\\ 
\cline{1-6}
 $spot$ & -0.035^{***} & -0.031^{***} && 0.001&0.009 \\ [-1.5ex]
  & (0.008) & (0.009) && (0.016)&(0.019) \\ [1ex]
 Constant & -1.019^{***} & -1.123^{***} && -3.258^{***}&-3.394^{***} \\  [-1.5ex]
  & (0.134) & (0.176)  && (0.256)&(0.353) \\
\cline{1-6}
   Df Model & \multicolumn{1}{c}{4} & \multicolumn{1}{c}{7} && \multicolumn{1}{c}{4}&\multicolumn{1}{c}{7} \\[-1.5ex]
   $n$ & \multicolumn{1}{c}{3204} & \multicolumn{1}{c}{3204} && \multicolumn{1}{c}{5156}&\multicolumn{1}{c}{5156} \\
\cmidrule[1pt]{1-6}
\multicolumn{3}{@{}l@{}}{\footnotesize Note: $^{***}\, p<0.001$}
\end{tabular} 
}
  \caption{\revcolor{Logistic regression estimating the impact of the number of available donations in the spot market on the likelihood of volunteer dropout.}}
  \label{table:voldropouttable}
\end{table}

As reported in columns (1) and (2) of Table \ref{table:voldropouttable}, we find that an additional donation in the spot market significantly decreases the dropout probability for volunteers who are not in an adopted match. We then repeat this analysis for volunteers in adopted matches, and we find that the impact of an additional available donation is negligible and not statistically significant (as seen in columns (3) and (4)). This supports our hypothesis that available donations only impact the dropout probability of volunteers who may not be able to find a match.
As shown, these results are robust to the inclusion of seasonal fixed effects, and furthermore, the results are robust to changes in our definition of dropout (e.g. {if we assume a volunteer drops out if they does not volunteer again for 4 weeks}) and our temporal definition of availabilities (e.g. {if we consider the size of the spot market over a slightly different seven-day window}). We omit such results for the sake of brevity.
}

\subsubsection{Asymmetric Volunteer Engagement}
\label{app:emp:engagement}
\revcolor{
{We would like to directly assess the engagement of volunteers as a function of their match type. Unfortunately, we do not observe volunteers' visits to the platform unless they sign up for a donation (see Section \ref{subsec:data}), which means that we can only measure a volunteer's engagement by the number of donations they complete.} 
{Despite this limitation, in the following, we provide aggregate as well as location-specific evidence that suggests volunteers in non-adopted matches can be more engaged. We 
also discuss how the censored nature of our data may impact the analysis.}

{If volunteers are highly engaged but cannot find a compatible match, we will not observe their high level of engagement. To sidestep this limitation, we focus on the subset of volunteers who transported one donation $d$ in week $t$ and have the option to sign up for that same donation $d$ in week $t+1$ (for volunteers in an adopted match, this is almost always the case). We then assess the number of donations completed by these volunteers in week $t+1$ as a function of their match type in period $t$. We focus on four large locations which together comprise a majority of the data, and for each location, we use a linear regression where the dependent variable $donations_{v, t+1}$ is the number of donations completed by volunteer $v$ during week $t+1$.} 

{Before presenting the regression analysis, in Figure \ref{fig:histograms} we show the distribution of 
$donations_{v, t+1}$ separated by match type. 
We observe that (i) a vast majority of volunteers in an adopted match only complete one match -- the one they are committed to; (ii) those volunteers also have a higher retention rate; (iii) however, interestingly, volunteers in non-adopted matches are more likely to complete $2+$ donations, compelling evidence that they are more engaged. In fact, due to this extra engagement, on average, those volunteers complete more donations in the subsequent week compared to their counterparts in adopted matches (1.09 vs 0.94).}

{
For this subset, we now
estimate a simple model which depends on an indicator variable $\textit{non-adopt}_{v,t}$ that is $1$ if $v$ is not part of an adopted match in week $t$,} 
\begin{equation}
    donations_{v, t+1} = \lambda_0 + \lambda_1 \cdot \textit{non-adopt}_{v,t}. \label{eq:NAGB}
\end{equation}

{Due to potential differences in volunteer characteristics across markets, we conduct this estimation separately for each of the four locations. The coefficient on $\textit{non-adopt}_{v,t}$ represents the increased engagement provided by volunteers who are not part of an adopted match relative to volunteers who are part of an adopted match. These results are robust to the inclusion of a seasonal fixed effect term, though we omit such results for the sake of brevity.} 
{Table~\ref{table:NAGB} shows that in two of the four locations, the coefficient on $\textit{non-adopt}_{v,t}$ is positive and statistically significant, which implies that in those locations, volunteers who were in one-time matches in period $t$ are more engaged in period $t+1$ than volunteers who were in adopted matches.}

\begin{figure}[t]
 \centering
 \includegraphics[width=.65\textwidth]{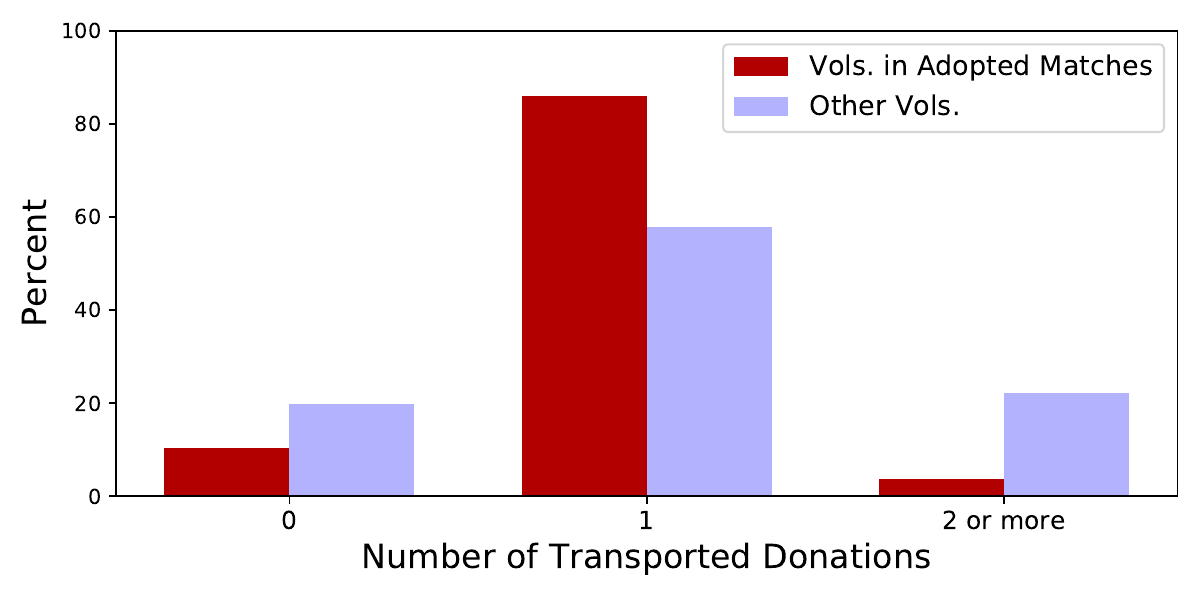}
  \caption{\revcolor{Of the volunteers in Locations (a), (b), (c), and (d) who transported one donation in week $t$ and have the same donation available to them in week $t+1$, we show the percentage who sign up for 0, 1, or 2+ donations in week $t+1$, separated by their match type in week $t$.}}\label{fig:histograms}
\end{figure}

\begin{table}[t]
\centering 
\revcolor{\footnotesize
\begin{tabular}{@{} l *{4}{d{5.6}} @{}} 
\cmidrule[1pt]{1-5}
&\multicolumn{1}{c}{Location (a)} &\multicolumn{1}{c}{Location (b)}&\multicolumn{1}{c}{Location (c)}&\multicolumn{1}{c}{Location (d)} \\
\cline{1-5}
 $\textit{non-adopt}$ & -0.292^{**} & 0.249^{***} & -0.034&0.489^{***} \\ [-1.5ex]
  & (0.110) & (0.061) & (0.054)&(0.095) \\ [1ex]
 Constant & 1.019^{***} & 0.899^{***} & 0.948^{***}&0.900^{***} \\  [-1.5ex]
  & (0.013) & (0.014)  & (0.011)&(0.030) \\
\cline{1-5}
   Df Model & \multicolumn{1}{c}{1} & \multicolumn{1}{c}{1} & \multicolumn{1}{c}{1}&\multicolumn{1}{c}{1} \\[-1.5ex]
   $n$ & \multicolumn{1}{c}{753} & \multicolumn{1}{c}{1254} & \multicolumn{1}{c}{1468}&\multicolumn{1}{c}{357} \\
\cmidrule[1pt]{1-5}
\multicolumn{3}{@{}l@{}}{\footnotesize Note: $^{**}\, p < 0.01$; \quad $^{***}\, p<0.001$}
\end{tabular} 
}
  \caption{\revcolor{Linear regression estimating the impact of match type on the number of completed donations.}}
  \label{table:NAGB}
\end{table}

{We claim that the estimates in Table \ref{table:NAGB} represent a lower bound on the difference in engagement between volunteers in one-time matches and volunteers in adopted matches (for each location). To see why, note that the number of completed donations, $donations_{v, t+1}$, can only serve as a lower bound on volunteer $v$'s true engagement level in period $t+1$, $engagement_{v, t+1}$. There is the potential for a gap between these two quantities when the the engagement level of the volunteer is greater than $1$: if the volunteer is willing to complete multiple donations, they may be unable to find enough compatible donations in the spot market. Volunteers in one-time matches are, on average, more likely to complete multiple donations (this is true in each location and shown in aggregate in Figure \ref{fig:histograms}). As a result, if we were able to directly observe $engagement_{v,t}$ and use it as the dependent variable in eq. \eqref{eq:NAGB} instead of $donations_{v,t}$, then we would expect our estimate for the coefficient on $\textit{non-adopt}_{v,t}$ to be larger. In other words, because we cannot always observe the true engagement level of volunteers when their engagement is large, we are disproportionately underestimating the engagement level of volunteers in one-time matches.}
}

\revcolor{
To more accurately estimate the engagement difference between volunteers in one-time matches and volunteers in adopted matches, we would need more granular volunteer engagement data. As one direction for future research, we intend to encourage the platform to keep track of this data, which would allow us to estimate the impact of match type in different locations. (We note that our estimation procedure corresponds to estimating the \gap\ in our base model. If there are heterogeneous volunteer types as described in Section \ref{sec:hetero}, then we would want to estimate the \gap\ for each volunteer type separately.)
}

\revcolor{
 To shed more light on the effects of asymmetric volunteer engagement, we would like to be able to directly attribute higher rates of missed donations to higher adoption rates. Unfortunately, there is a notable lack of temporal variation in the adoption level within locations. Since there is insufficient variation in the adoption level \emph{within} locations, we instead look for differences \emph{across} locations in terms of adoption rates and missed donations. We focus our analysis on one full year for the seven locations with at least 1000 total donations in our dataset.
 
For these locations, we find a correlation of $\rho = 0.604$ between the average adoption level (i.e., the fraction of matches which were adopted) in a location and the fraction of donations which were missed in that location in that year. 
  We expect that the relationship between the adoption fraction and missed donations is caused by a lack of engagement in the spot market. To lend credibility to this hypothesis, we find a correlation of $\rho = 0.824$ between the adoption level in a location and the fraction of donations \emph{in the spot market} which were missed in that location. {We note that} there are limitations to this analysis {since} we only have sufficient data on a handful of locations, and there {can be} other differences across locations which could confound the relationship. 
}

\subsection{Proof of Proposition \ref{prop:convergence}}
	\label{proof:prop:convergence}

To aid in this proof, we will use constants $\Delta_t$ and $\Phi_t$ in bounding the deviation of the state variables from their expectation. We will also use of the following lemma, which we prove in Appendix~\ref{proof:prop:matching}.
\begin{lemma}\label{lem:Delta}
{Let $S_t^n = M_t^n - K_t^n$. For each $t \in [T]$, if there exists some $C$ such that $D_t^n \leq Cn$ and $V_t^n \leq Cn$,} then there exists a constant $\Delta_t$ such that with probability $1-O(n^{1/4}e^{-c^6n^{1/4}})$,
\begin{equation*}
\left|\frac{S_{t}^n}{n}-\s\left(\frac{D_{t}^n-K_{t}^n}{n}, \frac{V_{t}^n-K_{t}^n}{n}\right)\right| \leq \Delta_t n^{-1/4}.
\end{equation*}
\end{lemma}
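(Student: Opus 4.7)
The plan is to establish this concentration via the martingale-plus-Gronwall calibration of Wormald's differential equation method, precisely the technique underlying the almost sure convergence in Proposition~\ref{prop:matching} (which gives the same limit $\s(\cdot,\cdot)$ but does not record an explicit quantitative tail). Conditioning on the period-$t$ state $(D_t^n, K_t^n, V_t^n)$, set $a := D_t^n - K_t^n$ and $b := V_t^n - K_t^n$, both bounded by $Cn$ by hypothesis. Let $Y_i$ denote the number of matched donations in the spot market after the $i$-th volunteer has been processed by the sequential greedy rule, so that $Y_0 = 0$, $S_t^n = Y_b$, $|Y_{i+1}-Y_i|\le 1$, and
\[\E[Y_{i+1}-Y_i\mid\mathcal{F}_i] \;=\; 1-\left(1-\tfrac{c}{n}\right)^{a-Y_i} \;=:\; f_n(Y_i).\]
As in the proof of Proposition~\ref{prop:matching}, the associated continuous ODE is $y'(z) = 1 - e^{-c(a/n - y(z))}$ with $y(0)=0$, whose solution at $z = b/n$ equals $\s(a/n, b/n)$.

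Next I would decompose $Y_i = A_i + W_i$, where $A_i := \sum_{j<i}\E[Y_{j+1}-Y_j\mid\mathcal{F}_j]$ is the compensator and $W_i$ is a bounded-difference martingale with increments in $[-1,1]$. The maximal Azuma--Hoeffding inequality gives
\[\P\!\left(\max_{i\le b}|W_i|\ge u\right) \;\le\; 2\exp\!\left(-\frac{u^2}{2Cn}\right).\]
To recover the stated form $O(n^{1/4})\exp(-c^6 n^{1/4})$, I would track $W_i$ on a grid of $O(n^{1/4})$ checkpoints spaced $\Theta(n^{3/4})$ indices apart, control the in-between oscillation deterministically using $|Y_{i+1}-Y_i|\le 1$, and absorb Lipschitz factors of $f_n$ into the exponent, yielding a per-checkpoint bound of order $\exp(-\Theta(c^6 n^{1/4}))$ and an overall union-bound prefactor of $O(n^{1/4})$.

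For the compensator, Taylor expanding $(1-c/n)^{a-Y_i} = \exp(-c(a-Y_i)/n)\bigl(1+O(c^2/n)\bigr)$ shows that $A_{i+1}/n - A_i/n$ coincides with an Euler discretization of the ODE with step $1/n$, up to $O(1/n^2)$ error per step. A standard discrete Gronwall bound (Lipschitz constant of the right-hand side is at most $c$, and the number of steps is at most $Cn$) then yields $|A_i/n - y(i/n)| \le K/n$ with $K = O(e^{cC})$. Combining this with the martingale estimate,
\[\left|\frac{S_t^n}{n} - \s\!\left(\tfrac{a}{n},\tfrac{b}{n}\right)\right| \;=\; \left|\frac{Y_b}{n} - y(b/n)\right| \;\le\; \frac{K + \max_{i\le b}|W_i|}{n} \;\le\; \Delta_t\, n^{-1/4}\]
on the high-probability event, after choosing $\Delta_t$ large enough to dominate both contributions.

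The main obstacle is purely quantitative bookkeeping rather than anything conceptual: producing the particular exponent $c^6$ on $n^{1/4}$ requires carefully tracking (i) the Lipschitz constant of $f_n$ and its powers that compound through $O(n)$ Gronwall steps, (ii) the higher-order Taylor remainders in the compensator expansion (which contribute additional factors polynomial in $c$), and (iii) the precise checkpoint spacing that makes the union-bound prefactor $O(n^{1/4})$ rather than $O(n)$. These are the standard calibrations in Wormald's method; the only novelty is applying them to a two-sided greedy matching process, where the sequential arrival of volunteers furnishes the natural time variable for the underlying ODE.
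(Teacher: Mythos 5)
Your proposal is correct in substance and lands on the same underlying machinery as the paper, but it takes a noticeably more laborious route. The paper's actual proof of Lemma~\ref{lem:Delta} is a two-step reduction: the quantitative tail $1-O(n^{1/4}e^{-c^6 n^{1/4}})$ with deviation $\Delta(a,b)\,n^{-1/4}$ is already delivered by the application of Wormald's Theorem~5.1 in the proof of Proposition~\ref{prop:matching} (with the calibration $\lambda=c^2 n^{-1/4}$, which is exactly where the exponent $c^6 n^{1/4}=n\lambda^3$ comes from), and the only new content of the lemma is uniformity over the random state, handled by noting $\frac{D_t^n-K_t^n}{n},\frac{V_t^n-K_t^n}{n}\in[0,C]$ and setting $\Delta_t=\max_{a,b\in[0,C]}\Delta(a,b)$. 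You instead re-derive the Wormald concentration from first principles via the compensator/martingale decomposition, checkpointed Azuma--Hoeffding, and a discrete Gronwall bound; this is mathematically the proof of the theorem the paper cites, so it buys self-containedness (and, incidentally, your direct Azuma route would give a tail of order $e^{-\Theta(n^{1/2})}$, strictly stronger than required) at the cost of redoing bookkeeping the paper outsources. Two small points to tighten: (i) your claim $|A_i/n - y(i/n)|\le K/n$ is not right as stated, since the compensator increments are $f_n(Y_j)$ evaluated at the \emph{random} trajectory, so the Gronwall step must propagate $\max_j|W_j|$ through the Lipschitz drift (picking up a factor $e^{cC}$ multiplying the martingale term as well, not only the Euler error); this is absorbable into $\Delta_t$ but should be stated correctly. (ii) Make explicit that your constants $K$ and the checkpoint count depend only on $C$ and $c$, not on the realized $(a,b)$ — that is the uniformity the lemma is actually asserting, and it is the one step the paper's proof makes explicit.
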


Given $\Delta_t$, we define $\Phi_{-1} = 0$ and  $\Phi_t := \sum_{\tau = 0}^t 6^{t-\tau}\Delta_\tau$ for all $t\in[T]$. We now prove by induction that for all $t \in [T]$, with probability $1-O(n^{1/4}e^{-c^6n^{1/4}})$,
{\small
\begin{align}
    \left|\frac{K_{t}^n}{n}-k_t\right| \leq \Phi_{t-1} n^{-1/4}, \quad \,
    \left|\frac{D_t^n}{n}-d_t\right| \leq 2\Phi_{t-1} n^{-1/4}, \quad \,
    \left|\frac{V_t^n}{n}-v_t\right| \leq 3\Phi_{t-1} n^{-1/4}, \quad \, 
    \left|\frac{M_t^n}{n}-m_t\right| \leq \Phi_{t} n^{-1/4}. \label{proof:eq:mconverge}
\end{align}
}

As a base case, when $t=0$ the first three inequalities hold by definition and the final inequality follows {from an application of Lemma \ref{lem:Delta}, using the facts that $\frac{M_0^n}{n}-m_0 = \frac{S_0^n + K^n_0}{n}-(s_0+k_0) = \frac{S_0^n}{n}-s_0$ and $\Phi_0=\Delta_0$.} 
We now assume that all four inequalities hold for $t = \tau$. We will prove that they also hold for $t = \tau+1$. Now
 \begin{align*}
 \left|\frac{K_{\tau+1}^n}{n}-k_{\tau+1}\right| 
 &\leq \left|\frac{K_{\tau+1}^n}{n}-\E\left[\frac{K_{\tau+1}^n}{n}\mid M_\tau^n\right]\right|
 + \left|\E\left[\frac{K_{\tau+1}^n}{n}\mid M_\tau^n\right] - k_{\tau+1}\right| \\
 &= \frac{1}{n}\left|K_{\tau+1}^n-(1-\gamma)z_\tau M_\tau^n\right| 
 + (1-\gamma)z_\tau\left|\frac{M_\tau^n}{n} - m_\tau\right|.
 \end{align*}
Recall that $K_{\tau+1}^n$ is a binomial random variable. Thus, using a Chernoff bound, with probability $1-2e^{n^{-1/2}}$ the first term is at most $ n^{-5/4} \sqrt{3(1-\gamma)z_\tau M_\tau^n}$. By the inductive hypothesis, the second term is at most $(1-\gamma)z_\tau \Phi_{\tau} n^{-1/4}$ with probability $1-O(n^{1/4}e^{-c^6n^{1/4}})$, and we note that this implies $M_\tau = O(n)$. We now take a union bound over those two upper bounds. With probability $1 - O(e^{n^{-1/2}}) -O(n^{1/4}e^{-c^6n^{1/4}})$, the distance between $\frac{K_{\tau+1}^n}{n}$ and $k_{\tau+1}$ is at most the sum of our two upper bounds. For large enough $n$, the second upper bound dwarfs the first, implying that with probability $1-O\left(n^{1/4}e^{-c^6n^{1/4}}\right)$,
$\left|\frac{K_{\tau+1}^n}{n}-k_{\tau+1}\right| \leq \Phi_{\tau} n^{-1/4}$.
This proves by induction that the first inequality in Line \eqref{proof:eq:mconverge} holds for all $t \in [T]$. 

For the sake of brevity, we omit the nearly identical proofs for the second and third inequalities, which use Chernoff bounds two and three times, respectively, due to the additional binomial processes. For the final inequality, 
we leverage the sublinearity of the matching function to bound the distance between the matching resulting from the actual state and the matching resulting from the deterministic approximation of the state.
{\footnotesize
\begin{align}
    \left|\frac{M_{\tau+1}^n}{n}-m_{\tau+1}\right| \leq& \left|\frac{M_{\tau+1}^n}{n}-\frac{K_{\tau+1}^n}{n}-\s\left(\frac{D_{\tau+1}^n-K_{\tau+1}^n}{n}, \frac{V_{\tau+1}^n-K_{\tau+1}^n}{n}\right)\right| \nonumber \\&+
    \left|\frac{K_{\tau+1}^n}{n}+\s\left(\frac{D_{\tau+1}^n-K_{\tau+1}^n}{n}, \frac{V_{\tau+1}^n-K_{\tau+1}^n}{n}\right) - m_{\tau+1}\right| \nonumber \\
    \leq& 
    \left|\frac{S_{\tau+1}^n}{n}-\s\left(\frac{D_{\tau+1}^n-K_{\tau+1}^n}{n}, \frac{V_{\tau+1}^n-K_{\tau+1}^n}{n}\right)\right|
    +\left|\frac{D_{\tau+1}^n}{n} - d_{\tau+1}\right| +\left|\frac{K_{\tau+1}^n}{n} - k_{\tau+1}\right| +\left|\frac{V_{\tau+1}^n}{n} - v_{\tau+1}\right| \label{eq:threeterms}
\end{align}
}
The first term of \eqref{eq:threeterms} comes from cancelling out the certain adopted matches. The next three terms come from the fact that $\frac{ \partial m_t}{\partial d_t}$, $\frac{ \partial m_t}{\partial k_t}$ and $\frac{ \partial m_t}{\partial v_t}$ are all in [0,1] (see Appendix \ref{proof:lemma:myopicisbest}).  We have concentration results for all three of these terms, and we thus take a union bound over those concentration results. With probability $1-O(n^{1/4}e^{-c^6n^{1/4}})$, the sum of those three terms is upper-bounded by $6\Phi_{\tau}n^{-1/4}$, {which also implies that $D_{\tau+1}^n \leq Cn$ and $V_{\tau +1}^n \leq Cn$ for some large $C$. This allows us to apply Lemma \ref{lem:Delta} to the top line: with probability $1-O\left(n^{1/4}e^{-c^6n^{1/4}}\right)$, it is at most $\Delta_{\tau+1}n^{-1/4}$.}
Combining these two bounds via a union bound, we have with probability $1-O(n^{1/4}e^{-c^6n^{1/4}})$,
{\small
\begin{equation}
    \left|\frac{M_{\tau+1}^n}{n}-m_{\tau+1}\right| \leq
    6\Phi_{\tau}n^{-1/4} + \Delta_{\tau+1}n^{-1/4} = \Phi_{\tau+1}n^{-1/4}. \nonumber
\end{equation}
}
This completes the proof by induction (we note that the number of union bounds taken is $O(T) = O(1)$). To show that our convergence results hold for all $t \in [T]$, we take a union bound over the $4T$ state variables. Again noting that $T = O(1)$, we have with probability $1-O(n^{1/4}e^{-c^6n^{1/4}})$, for all $t \in [T]$, 
{\small 
\begin{align}
    \left|\frac{K_{t}^n}{n}-k_t\right| \leq O\left(n^{-1/4}\right), \qquad     \left|\frac{D_t^n}{n}-d_t\right| \leq O\left(n^{-1/4}\right), \qquad 
    \left|\frac{V_t^n}{n}-v_t\right| \leq O\left(n^{-1/4}\right), \qquad 
    \left|\frac{M_t^n}{n}-m_t\right| \leq O\left(n^{-1/4}\right). \nonumber
\end{align}
}
This completes the proof of almost sure convergence.

\if false
We will prove this proposition in two parts. First, we will show that if $|\frac{D_t^n}{n}-d_t|$, $|\frac{K_t^n}{n}-k_t|$, and $|\frac{V_t^n}{n}-v_t|$ are all less than $O(t n^{-1/4})$ with probability $blank$, then $|\frac{M_t^n}{n} - m_t| \leq O(n^{-1/4})$ with probability $blank$.

Then we will prove via induction that for a series of constants $\{\Phi_t : t \in [T]\}$, $|\frac{D_t^n}{n}-d_t| \leq \Phi_t t n^{-1/4}$, $|\frac{K_t^n}{n}-k_t| \leq \Phi_t t n^{-1/4}$, $|\frac{V_t^n}{n}-v_t| \leq \Phi_t t n^{-1/4}$, and $|\frac{M_t^n}{n} - m_t| \leq \Phi_t (t+1) n^{-1/4}$ with probability $blank$. 

Applying these steps inductively starting from $t=0$ to $t = T$, we then take a union bound to show that with probability $blank$, for all $t \in [T]$
\begin{align}
|\frac{D_t^n}{n}-d_t| \leq O(n^{-1/4}) \\
|\frac{K_t^n}{n}-k_t| \leq O(n^{-1/4}) \\
|\frac{V_t^n}{n}-v_t| \leq O(n^{-1/4}) \\
|\frac{M_t^n}{n}-m_t| \leq O(n^{-1/4})
\end{align}
which completes the proof. Note that we require $T$ as well as $d_t, k_t,$ and $v_t$ to be of order $O(1)$  with respect to $n$.

We begin by showing the convergence of the matching process at any given period $\tau$. 
Using the proof of Proposition \ref{prop:matching}, we know that conditional on $D^n_\tau, K_\tau^n,$ and $V^n_\tau$, then with probability $1-O(n^{1/4}e^{-c^6n^{1/4}})$,
\begin{equation}
|\frac{M^n_\tau}{n} - \frac{K_\tau^n}{n} - \s(1/n(D^n_\tau - K_\tau^n), 1/n (V^n_\tau - K_\tau^n))| = O(n^{-1/4})
\end{equation}
Since $|\frac{\partial \s(a,b)}{\partial a}| \leq 1$ (and equivalently for $b$), with probability $1-O(n^{1/4}e^{-c^6n^{1/4}})$
\begin{equation}
|\frac{M^n_\tau}{n} - k_\tau - \s(d_\tau - k_\tau, v_\tau - k_\tau)| \leq |d_\tau - \frac{D^n_\tau}{n}| + 3|k_\tau - \frac{K^n_\tau}{n}|+ |v_\tau - \frac{V^n_\tau}{n}|+ O(n^{-1/4})
\end{equation}
By assumption, with probability $blank$, $|d_\tau - \frac{D^n_\tau}{n}| \leq \Phi_\tau \tau n^{-1/4}$ (similarly for $k_\tau$ and $v_\tau$). Thus, using a union bound, we have with probability $1-O(n^{1/4}e^{-c^6n^{1/4}})$,
\begin{equation}
|\frac{M^n_\tau}{n} - m_\tau| \leq (5 \Phi_\tau + 1) 
\end{equation}
This concludes the first part of the proof.

We now show that for a series of constants $\Phi_t$, $|\frac{D_t^n}{n}-d_t| \leq \Phi_t t n^{-1/4}$, $|\frac{K_t^n}{n}-k_t| \leq \Phi_t t n^{-1/4}$, $|\frac{V_t^n}{n}-v_t| \leq \Phi_t t n^{-1/4}$, and $|\frac{M_t^n}{n} - m_t| \leq \Phi_t (t+1) n^{-1/4}$ with probability $blank$. We proceed by induction. Clearly, this is true for $t=0$, since the first three results hold by equality, and the last was proven above.

By the dynamics described in \eqref{eq:dt}, $$D^n_{t+1} = D_t^n + \text{Binomial}(M_t^n, \beta') - \text{Binomial}(D_t^n - M_t^n, \beta)$$
Using a Chernoff bound, we have that with probability $1-O(e^{\frac{-n^(1/2)\beta'}{2}})$,
$$\frac{1}{n}|\text{Binomial}(M_t^n, \beta')-\beta' M_t^n| \leq n^{-1/4} \beta' (\frac{M_t^n}{n})^{1/2}$$ Similarly, with probability $1-O(e^{\frac{-n^{1/2}(1-\beta)}{2}})$,
$$\frac{1}{n}|\text{Binomial}(D_t^n-M_t^n, 1-\beta)-(1-\beta)(D_t^n M_t^n)| \leq n^{-1/4} (1-\beta)(\frac{D_t^n}{n}-\frac{M_t^n}{n})^{1/2}$$
Combining the two using a union bound, for some constant $\omega_1$, we have that with probability $1-O(e^{\frac{-n^(1/2)}{\omega_1}})$,
\begin{align}
\frac{1}{n}|D^n_{t+1} - (1-\beta)D_t^n - (\beta'-\beta)M_t^n  | &\leq n^{-1/4} ((\frac{M_t^n}{n})^{1/2} + (\frac{D_t^n}{n}-\frac{M_t^n}{n})^{1/2}) \\
&\leq n^{-1/4} ((m_t + t\Phi_t n^{-1/4} )^{1/2} + (d_t - m_t + 2t\Phi_t n^{-1/4})^{1/2}) \\
&\leq \Phi_{t+1} n^{-1/4}
\end{align}
By the triangle inequality, this implies that with probability $blank$
\begin{align}
|\frac{D^n_{t+1}}{n} - (1-\beta)d_n^t - (\beta'-\beta)d_t^n  | &\leq O(n^{-1/4}) + (1-\beta)|\frac{D_t^n}{n}-d_t| +  (\beta'-\beta)|\frac{M_t^n}{n}-m_t| \\
|\frac{D^n_{t+1}}{n} - d_{t+1}| &\leq \Phi_{t+1} (t+1) n^{-1/4}
\end{align}
Identical arguments using the dynamics in \eqref{eq:kt} and \eqref{eq:vt} show that with probability $blank$
\begin{align}
|\frac{K^n_{t+1}}{n} - k_{t+1}| &\leq \Phi_{t+1} (t+1) n^{-1/4} \\
|\frac{V^n_{t+1}}{n} - v_{t+1}| &\leq \Phi_{t+1} (t+1) n^{-1/4}
\end{align}
Given these three results, we can the first step of this proof implies that with probability $blank$, $|\frac{M^n_{t+1}}{n} - m_{t+1}| \leq O(n^{-1/4})$.
By the union bound, all three concentration results must hold with probability $blank$. As previously noted, repeating these steps from $t=0$ to $t = T$ and taking a union bound yields the result in Proposition \ref{prop:convergence}.
\fi
\end{section}

\begin{section}{Omitted Details from Section \ref{sec:results}}

	\begin{subsection}{Proof of Proposition \ref{prop:matching}}
		\label{proof:prop:matching}
		Based on the greedy random matching process, if there are $na$ available \jobs \ when a \worker \  arrives, the volunteer will form a match with probability $1-(1-\frac{c}{n})^{na}$. As a consequence, the expected number of available \jobs \ when the next \worker \ arrives is given by $na -1+(1-\frac{c}{n})^{na}$. This enables us to write a stochastic differential equation for the number of matched \jobs. We then show that the size of the matching converges to the solution of a deterministic differential equation as the market size grows, holding the ex ante number of compatible matches for donations and volunteers (i.e., the market thickness) fixed.

		Suppose that there are $na$ donations and $nb$ volunteers in the spot market, where the match probability is given by $\frac{c}{n}$. 
		Let $Y(Z) \in [na]$ be the number of matched \jobs \ right after  the $Z^{th}$ \worker \ arrives, where $Z \in [nb]$.  At the beginning of the process, there are no matches ($Y(0) = 0$), and according to the greedy matching process described above, the expected number of matches evolves as follows: 
		\begin{equation*} 
		\E[Y(Z+1)|Y(Z)] = Y(Z) + 1 - \left(1-\frac{c}{n}\right)^{n(a - \frac{Y(Z)}{n})}
		\end{equation*}
		If we define $y(z) = \frac{Y(nz)}{n}$, then for large $n$,
		$\frac{1}{n}\E[y(z + \frac{1}{n}) - y(z)|y(z)] = 1 - e^{-c(a-y(z))} + o(1).$
		{A}s $n \rightarrow \infty$, this corresponds to the differential equation $\frac{d y}{d z} = 1 - e^{-c(a-y(z))}$.  Given the initial condition $y(0) = 0$, this differential equation has the unique solution
		$y(z) = a + z - \frac{1}{c}\log(e^{ca} + e^{cz} - 1)$.

		To show convergence (which will complete the proof of Proposition \ref{prop:matching}), we apply Theorem 5.1 from \citet{wormald1999models}, and use the fact that the expected change in the number of matches can be characterized by
		\begin{equation*}
		\E\left[Y(Z+1) - Y(Z)\mid Y(Z)\right] = 1 - \left(1-\frac{c}{n}\right)^{n\left(a - \frac{Y(Z)}{n}\right)}.
		\end{equation*}

		{	\subsubsection*{Theorem 5.1 \citep{wormald1999models}}
		\textit{Let $Q^{(n)+}$ represent the set of all possible sequences of matching outcomes for the $nb$ volunteers, and let $g:Q^{(n)+} \rightarrow \mathbb{R}$ and $f:\mathbb{R}^2 \rightarrow \mathbb{R}$ such that $|g(h_z)| \leq nb$ for all $h_z \in Q^{(n)+}$ for all $n$. Let $Y(Z)$ denote the random counterpart of $g(h_z)$, and assume the following three conditions hold:}
		
\noindent \textit{(i) (Boundedness Hypothesis) For some $\beta \geq 1$, $|Y(Z+1)-Y(Z)| \leq \beta$.}

	\noindent \textit{(ii) (Trend Hypothesis) For some $\lambda_1 = o(1)$, $|E[Y(Z+1)-Y(Z)|Y(Z)] - f(\frac{Z}{n}, \frac{Y(Z)}{n})| \leq \lambda_1$ for all $Z \leq bn$.} 
		
		\noindent \textit{(iii) (Lipschitz Hypothesis) The function $f$ is continuous and satisfies a Lipschitz condition for a domain $D := \{(Z,Y(Z)) : Z \in (-\delta, bn+\delta), Y(Z) \in (-\delta, an + \delta) \}$.}
\smallskip
	
	\noindent	\textit{Then the following are true:}
		
		\noindent \textit{(a) The differential equation $\frac{dy}{dz} = f(z, y(z))$ has a unique solution passing through $y(0) = 0$ which extends arbitrarily close to the boundary of $D$.}
		
	\noindent	\textit{(b) Let $\lambda > \lambda_1$ with $\lambda = o(1)$. For a sufficiently large constant $C$, with probability $1-O(\frac{\beta}{\lambda}\text{exp}(-\frac{n\lambda^3}{\beta^3}))$, $\frac{Y(Z)}{n} = y(z) + O(\lambda)$ for all $Z \in [0, \sigma n]$ where $y(z)$ is the solution described in (a) and $\sigma$ is the supremum of those $z$ to which the solution can be extended before reaching within $l^\infty$-distance $C\lambda$ of the boundary of $D$.}}
		
		
		We first note that for all $Z$, $\E[Y(Z+1) - Y(Z)] \leq 1$, since no more than one match can occur for each volunteer who visits the platform. This satisfies (i) the boundedness hypothesis. For (ii) the trend hypothesis, we define $f(\frac{Z}{n}, \frac{Y(Z)}{n}) = e^{(-c)\left(a - \frac{Y(Z)}{n}\right)}$, and we claim that $1 - \left(1-\frac{c}{n}\right)^{n\left(a - \frac{Y(Z)}{n}\right)}$ is always within $\lambda_1 := \frac{c^2}{n}$ of $f(\frac{Z}{n}, \frac{Y(Z)}{n})$. To see this, note that  for any $x \geq 0$, $e^{-x-x^2} \leq 1-x \leq e^{-x}$. Setting $x = \frac{c}{n}$ and raising each term to the $n\left(a - \frac{Y(Z)}{n}\right)$ power yields  
		\begin{align*}
		e^{\left(-c-\frac{c^2}{n}\right)\left(a - \frac{Y(Z)}{n}\right)} &\leq \left(1-\frac{c}{n}\right)^{n\left(a - \frac{Y(Z)}{n}\right)} \leq e^{-c\left(a - \frac{Y(Z)}{n}\right)} \\
		e^{(-c)(a - \frac{Y(Z)}{n})}\left(1-\frac{c^2(a - \frac{Y(Z)}{n})}{n}\right) &\leq \left(1-\frac{c}{n}\right)^{n\left(a - \frac{Y(Z)}{n}\right)} \leq e^{-c\left(a - \frac{Y(Z)}{n}\right)}\\
		e^{(-c)\left(a - \frac{Y(Z)}{n}\right)}(1-\frac{c^2}{n}) &\leq \left(1-\frac{c}{n}\right)^{n\left(a - \frac{Y(Z)}{n}\right)} \leq e^{-c\left(a - \frac{Y(Z)}{n}\right)}
		\end{align*}
		Noting that $\frac{c^2}{n}e^{(-c)\left(a - \frac{Y(Z)}{n}\right)} \leq \frac{c^2}{n}$ for $Y(Z) \in [0, an]$ validates our claim that $1 - \left(1-\frac{c}{n}\right)^{n\left(a - \frac{Y(Z)}{n}\right)}$ is always within $\lambda_1$ of $f(\frac{Z}{n}, \frac{Y(Z)}{n})$.

		Finally, we note that the constant $L=c^2e^{c\delta}$ satisfies (iii) the Lipschitz hypothesis for the function $1-e^{-c(a-y(z))}$ over domain $D$. Given that the three conditions of Theorem 5.1 in \citet{wormald1999models} are met, we define $\lambda = c^2n^{-1/4} > \lambda_1$ and apply the theorem to yield that for all $z \in [0,b]$, with probability $1-O(n^{1/4}e^{-c^6n^{1/4}})$,
		$\frac{Y(nz)}{n} = y(z) + O(n^{-1/4})$.
		Plugging in $z = b$ completes that proof that the expected matching converges almost surely to $a+b - \frac{1}{c}\log(e^{ca}+e^{cb}-1)$.

	\subsubsection*{Proof of Lemma~\ref{lem:Delta}}
	{As a consequence of the proof of Proposition \ref{prop:matching}, we know that for any constants $a$ and $b$, there exists a constant $\Delta(a,b)$ such that with probability $1-O(n^{1/4}e^{-c^6n^{1/4}})$, $\left|\frac{S^n}{n}-\s(a, b)\right| \leq \Delta(a, b) n^{-1/4}$. If $D_t^n \leq Cn$ and $V_t^n \leq Cn$, then $\frac{D_t^n-K_t^n}{n}\leq C$ and $\frac{V_t^n-K_t^n}{n} \leq C$. In that case, if we define $\Delta_t = \max_{a, b \in [0, C]} \Delta(a, b)$, then with probability $1-O(n^{1/4}e^{-c^6n^{1/4}})$, $\left|\frac{S^n_t}{n}-\s\left(\frac{D_t^n-K_t^n}{n}, \frac{V_t^n-K_t^n}{n}\right)\right| \leq \Delta_t n^{-1/4}$. This completes the proof.}
	
	\if false
	set $a = \frac{D^n_t-K^t_n}{n} = O(1)$ and $b = \frac{D^n_t-K^t_n}{n} = O(1)$.   
[IL NOTE: Proof not quite finished; I think we have to be a little bit more careful in defining the constants $\Delta_t$. @Scott let's sync?]

		As in the proof of Proposition~\ref{proof:prop:matching} , we apply Theorem 5.1 in \citet{wormald1999models} to show that that for fixed $a,b$ as $n\rightarrow\infty$ we have that $\frac{Y(nb)}{n}=y(b)+O\left(n^{-1/4}\right)$. This implies that there exists a constant $\Delta_t(a,b)$ such that $\left|\frac{S^n_t(a,b)}{n}-\s(a,b)\right|\leq \Delta_t(a,b)n^{-1/4}$. 
\begin{itemize}
		\item Want: take $\Delta_t = \max_n \Delta\left(\frac{D^n_t-K^n_t}{n},\frac{V^n_t-K^n_t}{n}\right)$. 
		\item Problem: max may not exist. 
		\item Shouldn't be an issue because: concentration, so max over compact set.
		\item Problem: we are using this to prove concentration.
		\item It's okay, because we can just make sure to prove concentration of K, D, V before M (since induction is on t)
\end{itemize}
\fi
	\end{subsection}
	
	\begin{subsection}{Proof of Theorem \ref{thm:myopic}}
		\label{proof:thm:myopic}
		We begin by showing that $m_1(z_0)$ does not attain a local maximum. From \eqref{eq:dt}-\eqref{eq:vt}, we have $\frac{\partial d_{1}}{\partial z_0} = 0$, $\frac{\partial k_{1}}{\partial z_0} = (1-\gamma)m_0{= D_k}$, and $\frac{\partial v_1}{\partial z_0} = (\alpha - \alpha' - \gamma)m_0 {= D_v}$.
		Combining these derivatives with \eqref{eq:mt} and \eqref{eq:matchingfunc}, we have 
		\begin{align}
		m_1 \ \ &= d_1 - k_1 + v_1  - \frac{1}{c} \log(e^{c(d_1-k_1)}+e^{c(v_1 - k_1)}-1) \nonumber  \\
		\frac{\partial m_1}{\partial z_0} \ &= \frac{{D_v}e^{c(d_1-k_1)} + {D_k-D_v}}{e^{c(d_1-k_1)}+e^{c(v_1 - k_1)}-1} \label{eq:dmdz}\\
		\frac{\partial^2 m_1}{\partial z_0^2} &= \left( \frac{{D_k^2}e^{c(d_1-k_1)} + {(D_k-D_v)^2}e^{c(v_1-k_1)}-{D_v^2}e^{c(d_1+v_1-2k_1)}}{(e^{c(d_1-k_1)}+e^{c(v_1 - k_1)}-1)^2}\right)c \label{eq:d2mdz2}
		\end{align}
		The first-order condition prescribed by \eqref{eq:dmdz} is equivalent to $e^{c(d_1-k_1)} = {\frac{D_v-D_k}{D_v}}= -\frac{1 +\alpha' - \alpha}{\alpha - \alpha'-\gamma}$. When the FOC holds, the numerator of \eqref{eq:d2mdz2} reduces to ${D_k^2}e^{c(d_1-k_1)} + {D_k(D_k-D_v)}e^{c(v_1-k_1)}$. {This must be positive, since $D_k > D_v$ and $D_k > 0$. Thus, $m_1$ is convex at any critical point and consequently} can have no local maxima as a function of $z_0$. This implies that the optimal solution must be at a boundary, i.e., $z_0 = 0$ or $z_0 =1$.

		To compare the values $m_1(0)$ and $m_1(1)$, we use a slight abuse of notation to augment $(d_1, k_1, v_1)$ by $z_t$ to compare them for $z_t = 0$ and $z_t = 1$. Since $d_1(0) = d_1(1) := d_1$ and $k_1(0) = 0$, the following are necessary and sufficient conditions for $m_1(1) \geq m_1(0)$:
		\begin{align}
0 &\geq e^{c(v_1(0) - v_1(1) + k_1(1))} - \frac{e^{cv_1(0)}+e^{cd_1} -1}{e^{c(v_1(1)-k_1(1))}+e^{c(d_1-k_1(1))} -1}  \label{eq:proof:myopic0} \\
\Leftrightarrow ~~ 0&\geq e^{c({D_k-D_v})}(e^{cd_1}e^{-c{D_k}} -1) - e^{cd_1} +1  \label{eq:proof:myopic1} \\
\Leftrightarrow ~~ 0&\geq e^{cd_1}(e^{{-cD_v}} - 1) - (e^{c({D_k-D_v})}-1) \label{eq:proof:myopic2}
		\end{align}
		Line \eqref{eq:proof:myopic0} comes from applying the definitions in \eqref{eq:mt}-\eqref{eq:matchingfunc},
		Line \eqref{eq:proof:myopic1} is algebraic, using the equalities $v_1(0) - v_1(1) + k_1(1) = {D_k-D_v}$ and $k_1(1) = {D_k}$, and Line \eqref{eq:proof:myopic2} follows from rearranging. It is satisfied when either of the two conditions in \eqref{eq:optmyopic} is met, which completes the proof.
	\end{subsection}
	
\begin{subsection}{Accuracy of Deterministic Approximations}
\label{app:numerics}
Our insights in Section \ref{sec:results} are based on deterministic approximations to the underlying multi-dimensional Markov process. Propositions \ref{prop:convergence} and \ref{prop:matching} imply that our deterministic approximations are accurate when the system is large.  In this section, we numerically demonstrate the accuracy of our approximations for moderate system sizes, and we discuss how some of our main structural results continue to apply.

Our numerical analysis shows that the matching function $\s(a,b)$ defined in \eqref{eq:matchingfunc} is always an underestimate of the actual expected matching, but becomes increasingly accurate as the scaling factor $n$ gets large. {This is evident in the left panel of Figure \ref{fig:numerics}, in which we 
show the normalized gap between the expected size of the matching in a market with two sides of equal size $n$ and its corresponding deterministic approximation ($\mu(1,1)$, as defined in \eqref{eq:matchingfunc}).}
We observe that the normalized gap is always positive, and is below $2\%$ when the size of both sides is at least $30$. 

\begin{figure}
 \centering
 \begin{subfigure}[b]{0.4\textwidth}
 \includegraphics[width=\textwidth]{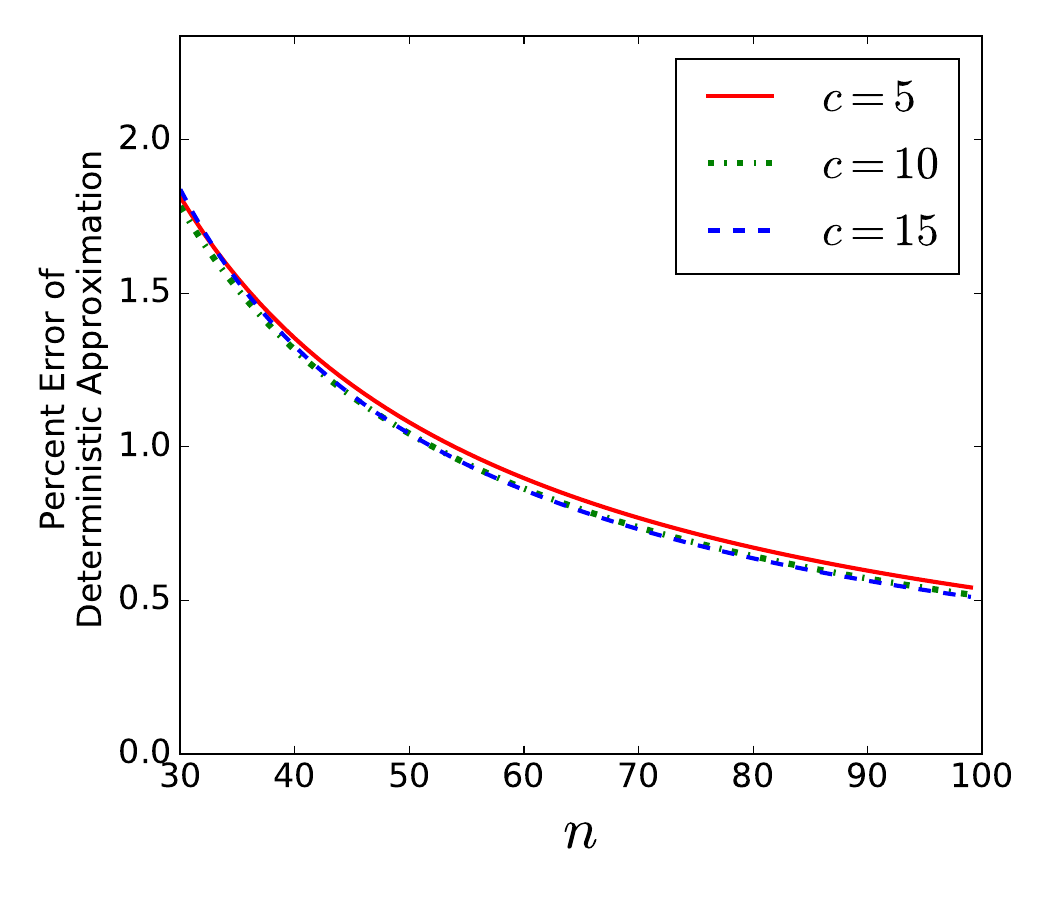}
 \end{subfigure}
 \begin{subfigure}[b]{0.4\textwidth}
 \centering
 \includegraphics[width= \textwidth]{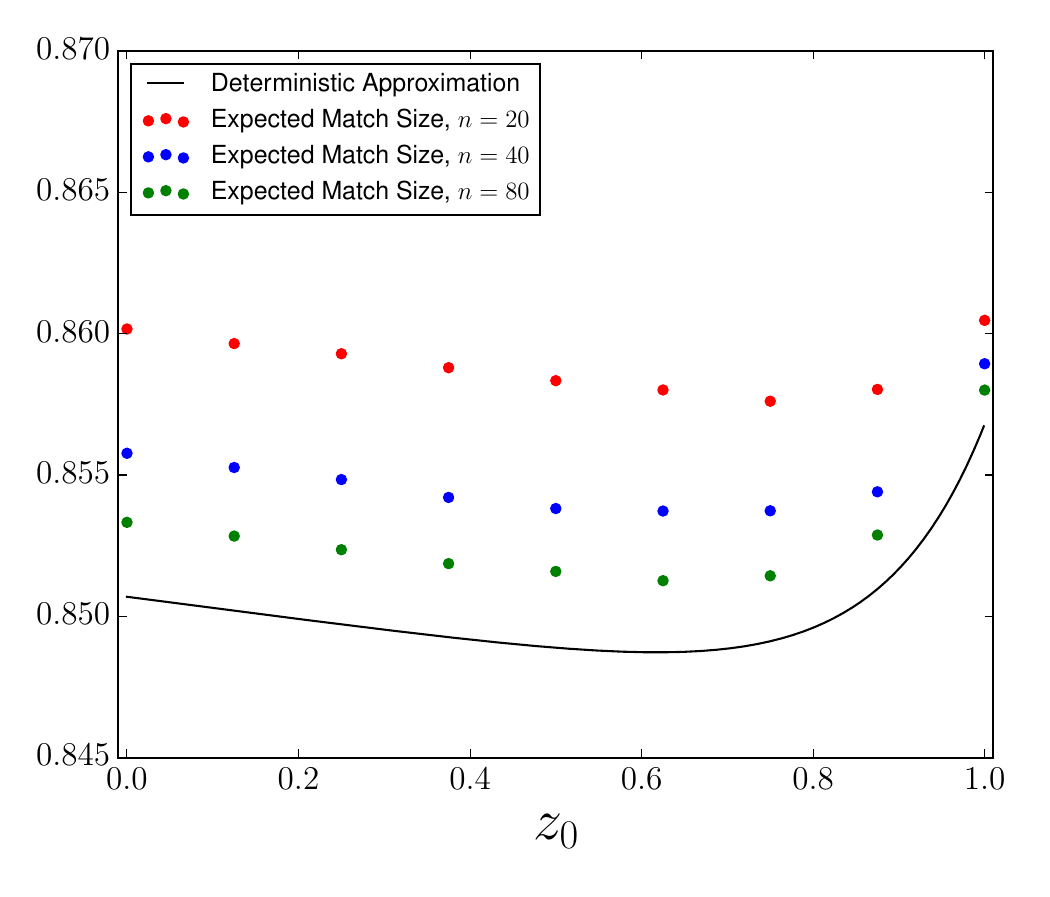}
\end{subfigure}
\caption{{\it Left}: Convergence to the deterministic approximation of the matching function when both sides of the market are of size $n$ and compatibility probability is $c/n$. {\it Right}: Comparing the expected value of $m_1(z_0)$ to its deterministic approximation in an identical setting to the left panel of Figure \ref{fig:m1z0}, except with initial condition $(d_0, k_0, v_0) = (0.8, 0.8, 0.8)$ to ensure the integrality of $nm_0$.}
\label{fig:numerics}
\end{figure}

Our numerical analysis also suggests that the structure of the optimal myopic policy remains unchanged: {either fully allowing adoption or disallowing adoption} will be optimal. In the right panel of Figure \ref{fig:numerics}, we use simulation to find the expected number of matches for various levels of adoption {and different values of the scaling factor $n$}, and we compare those results to our deterministic approximation. We make the following observations from the plot: (i) the shape of $m_1(z_0)$, i.e., the deterministic approximation, is broadly consistent with the simulated results. However, (ii) the deterministic matching function is always an underestimate, and (iii) the gap is smallest at $z_0 = 1$, when most of the matching is pre-determined due to adoption. Observations (ii) and (iii) imply that the deterministic approximation relatively overvalues adoption. 
Hence we conjecture if the optimal {myopic} policy according to the deterministic approximation is $z_t = 0$ (i.e., if neither condition in \eqref{eq:optmyopic} holds), then the optimal {myopic} policy is $z_t = 0$ when considering the actual expected matching.

\end{subsection}

	\begin{subsection}{Proof of Lemma \ref{lemma:myopicisbest}}
		\label{proof:lemma:myopicisbest}
		
			Consider two policies $\mathbf{\hat{z}} = \{\hat{z}_t : t \in [T]\} $ and $\mathbf{\tilde{z}}= \{\tilde{z}_t : t \in [T]\}$ that differ only at time $\tau$, where $\hat{z}_\tau = z_\tau'$ and $\tilde{z}_\tau \neq z_\tau'$. Using identical notation to denote the state variables when following each policy, we have $\hat{m}_{\tau+1} \geq \tilde{m}_{\tau+1}$. When the condition in Lemma \ref{lemma:myopicisbest} holds, we have $\hat{v}_{\tau+1} \geq \tilde{v}_{\tau+1}$, and by the dynamics described in \eqref{eq:dt}, $\hat{d}_{\tau+1} \geq \tilde{d}_{\tau+1}$. To complete the proof, we use Claim~\ref{claim:statecoupling} to show these conditions imply $\hat{m}_{t} \geq \tilde{m}_{t} \, \forall t \in [T] \setminus [\tau]$.
			
			\begin{claim}[Pathwise Dominance of Superior State]
			\label{claim:statecoupling}
            Suppose the platform's decisions $z_t$ are fixed for all $t \in [T']$ where $T' \leq T$. In that case, for any two initial states $(d_0, k_0, v_0)$ and $(d_0', k_0', v_0')$, if $d_0 \geq d_0'$, $v_0 \geq v_0'$, and $m_0 \geq m_0'$, then for all $t \in [T']$, $d_t \geq d_t'$, $v_t \geq v_t'$, and $m_t \geq m_t'$.
			\end{claim}
					\begin{proof}{Proof of Claim \ref{claim:statecoupling}}
We proceed via induction. We have as a base case $d_0 \geq d_0'$, $v_0 \geq v_0'$, and $m_0 \geq m_0'$. Suppose this holds for all periods $t \leq \tau$. We will show it holds for $t = \tau+1$.
		Based on the dynamics described in \eqref{eq:dt}-\eqref{eq:vt}, since the platform's policy decision $z_\tau$ is assumed to be fixed, we have
		\begin{align*}
		    d_{\tau+1} &= (1-\beta)d_\tau + \beta'm_\tau \quad \quad \geq \quad  (1-\beta)d_\tau' + \beta'm_\tau' &\geq d_{\tau+1}' \\
		   k_{\tau+1} &= (1-\gamma)z_\tau m_\tau \qquad \qquad \geq \quad (1-\gamma)z_\tau m_\tau' &\geq k_{\tau+1}' \\
		   v_{\tau+1} &= (1-\alpha)v_\tau + (\alpha'+\gamma')m_\tau + (\alpha - \alpha' - \gamma)z_\tau m_\tau & \nonumber \\ &\geq (1-\alpha)v_\tau' + (\alpha'+\gamma')m_\tau' + (\alpha - \alpha' - \gamma)z_\tau m_\tau' &\geq v_{\tau+1}'
		\end{align*}
		
		To complete the proof of the claim, we need to show that $m_{\tau+1}$ is non-decreasing in $d_{\tau+1}$, $v_{\tau+1}$, and $k_{\tau+1}$. We establish this directly, using the matching function defined in \eqref{eq:matchingfunc}.
		\begin{align*}
		\frac{\partial m_{\tau+1}}{\partial d_{\tau+1}} &= \ \  1 - \frac{e^{c(d_{\tau+1}-k_{\tau+1})}}{e^{c(d_{\tau+1}-k_{\tau+1})}+e^{c(v_{\tau+1}-k_{\tau+1})}-1} &= \frac{e^{c(v_{\tau+1}-k_{\tau+1})}-1}{e^{c(d_{\tau+1}-k_{\tau+1})}+e^{c(v_{\tau+1}-k_{\tau+1})}-1} \in [0,1] \\
		\frac{\partial m_{\tau+1}}{\partial v_{\tau+1}} &= \ \  1 - \frac{e^{c(v_{\tau+1}-k_{\tau+1})}}{e^{c(d_{\tau+1}-k_{\tau+1})}+e^{c(v_{\tau+1}-k_{\tau+1})}-1} &= \frac{e^{c(d_{\tau+1}-k_{\tau+1})}-1}{e^{c(d_{\tau+1}-k_{\tau+1})}+e^{c(v_{\tau+1}-k_{\tau+1})}-1} \in [0,1] \\
		\frac{\partial m_{\tau+1}}{\partial k_{\tau+1}} &= -1 + \frac{e^{c(d_{\tau+1}-k_{\tau+1})}+e^{c(v_{\tau+1}-k_{\tau+1})}}{e^{c(d_{\tau+1}-k_{\tau+1})}+e^{c(v_{\tau+1}-k_{\tau+1})}-1} &= \frac{1}{e^{c(d_{\tau+1}-k_{\tau+1})}+e^{c(v_{\tau+1}-k_{\tau+1})}-1} \in [0,1]
		\end{align*}
		This shows that $m_{\tau + 1} \geq m_{\tau+1}'$, which completes the proof of Claim \ref{claim:statecoupling}.  \halmos
			
					\end{proof}

		Since $\hat{z}_t = \tilde{z}_t$ for all $t \in [T] \setminus [\tau]$, we can apply Claim \ref{claim:statecoupling} to the initial states $(\hat{d}_{\tau+1}\, \hat{k}_{\tau+1}, \hat{v}_{\tau+1})$ and $(\tilde{d}_{\tau+1}, \tilde{k}_{\tau+1}, \tilde{v}_{\tau+1})$ for $T' = T - \tau-1$ to show that $\hat{m}_{t} \geq \tilde{m}_{t}$ for all $t \in [T] \setminus [\tau]$. Therefore, by replacing the decision in period $\tau$ with $z_\tau'$, the platform can only increase the total number of completed \jobs. Using a contradiction argument, no policy with $z_\tau \neq z_\tau'$ can be strictly optimal. This completes the proof that $z_\tau'$ (the optimal myopic policy) is the optimal policy in period $\tau$ (in the presence of multiple optimal policies, we follow the convention of choosing the myopically optimal one). 
	\end{subsection}
	
	\begin{subsection}{Proof of Theorem \ref{thm:optregimeB}}
		\label{proof:thm:optregimeB}
		We first show that (MMT) has a unique solution in the \regimeB. Then we show that if $d_{\tau} \geq \bar{d}$, $z_{\tau} = 0$ is optimal. Finally, we show that the thresholds define an absorbing state, i.e., if $d_{\tau} \geq \bar{d}$ and $v_{\tau} \geq \bar{v}$, then $d_{\tau+1} \geq \bar{d}$ and $v_{\tau+1} \geq \bar{v}$. This implies that once $d_t \geq \bar{d}$ and $v_t \geq \bar{v}$, $z_\tau = 0$ must be optimal for all $\tau \geq t$.
		
		To show that (MMT) has a unique solution, it is sufficient to prove that the right hand sides of (7) and (8) have slope less than $c$ and that the right hand sides of (9) and (10) have slopes in $[0,c]$ with respect to the decision variables. In that case, (10) must be tight at any optimal solution. Applying the chain rule, the right hand sides of (7)-(9) must have slopes less than $1$ as a function of $d$. Thus, a linear search for the first value of $d$ to satisfy all three constraints solves the problem.
		
		Clearly the right hand side of (7) is increasing with slope less than $c$ (it is a constant). For the other three constraints, we can exploit their common structure. We will prove that the general function $f(x) = \frac{1}{c}\log(\frac{e^{c\zeta_1x}-1}{e^{c\zeta_2x} - 1})$ satisfies $f'(x) < 1$ when $0 < \zeta_2 < \zeta_1 < 1+ \zeta_2$.
		\begin{align}
		\frac{d f}{dx} &= \frac{\zeta_1 e^{c\zeta_1x}}{e^{c\zeta_1x} - 1} - \frac{\zeta_2 e^{c\zeta_2x}}{e^{c\zeta_2x} - 1} \nonumber \\
		&= \zeta_1 + \frac{\zeta_1}{e^{c\zeta_1x} - 1} - \zeta_2  - \frac{\zeta_2}{e^{c\zeta_2x} - 1} 
		\qquad  \leq \qquad \zeta_1 - \zeta_2  \qquad < \qquad 1\label{eq:proof:mms}
		\end{align}
		Line \eqref{eq:proof:mms} comes from noting that $\frac{\zeta}{e^{cx \zeta}-1}$ is decreasing in $\zeta$, so $\frac{\zeta_1}{e^{c\zeta_1x} - 1} - \frac{\zeta_2}{e^{c\zeta_2x} - 1} \leq 0$. This immediately shows that the right hand sides of (9) and (10) both have slopes less than $c$. Similarly, the right hand side of (8) must have a slope less than $\frac{1-\beta'}{1-\beta}c < c$. Note also that $\zeta + \frac{\zeta}{e^{cx \zeta}-1}$ is increasing in $\zeta$, which means that according to \eqref{eq:proof:mms}, $\frac{d f}{dx} \geq 0$. This implies that (9) and (10) are weakly increasing in both decision variables. This completes the proof there is a unique solution to (MMT) which can be found via a linear search on $d$.
		
		We now show that if $d_\tau \geq \bar{d}$, $z_\tau = 0$ is optimal. By appealing to Lemma \ref{lemma:myopicisbest} in the \regimeB, it is sufficient to show that $cd_{\tau+1} \geq \log\left(\frac{ e^{c(1+\alpha'-\alpha)m_\tau} -1}{e^{c( \gamma + \alpha' - \alpha)m_\tau} - 1}\right)$.
		Using \eqref{eq:dt} to rewrite $d_{\tau+1}$ in terms of $d_\tau$ and $m_\tau$, we find that the above condition is equivalent to
		$cd_\tau \geq \frac{1}{1-\beta}\log\left(\frac{ e^{c(1+\alpha'-\alpha)m_\tau} -1}{e^{c( \gamma + \alpha' - \alpha)m_\tau} - 1}\right) - \frac{\beta'}{1-\beta}cm_\tau$.
		{It can be shown that this threshold is convex in $m_\tau$,} so its maximum value over its domain must occur when $m_\tau = d_\tau$ or when $m_\tau = 0$. Constraint (8) ensures that $cd_\tau$ exceeds the threshold when $m_\tau = d_\tau$, while constraint (7) ensures that $cd_\tau$ exceeds the threshold when $m_\tau = 0$ (recall that we define the threshold to equal its limiting value when $m_\tau = 0$).
		This means that if these two constraints are satisfied, the optimal myopic policy is $z_\tau = 0$. Since we are in the \regimeB, applying Lemma \ref{lemma:myopicisbest} proves that the optimal policy is $z_\tau = 0$.
		
		We now show that $\bar{d}$ and $\bar{v}$ define an absorbing state when the platform follows a policy of no adoption. Suppose $d_t \geq \bar{d}$ and $v_t \geq \bar{v}$. Since the matching function in \eqref{eq:matchingfunc} is increasing in each side of the market, $m_t$ is lower-bounded by 
		$\bar{m} := \s(\bar{d}, \bar{v}) = \bar{d}+\bar{v} - \frac{1}{c} \log(e^{c\bar{d}}+e^{c\bar{v}}-1)$. When following a policy of $z_t = 0$, $v_{t+1} = (1-\alpha)v_t + (\alpha' + \gamma ')m_t \geq (1-\alpha) \bar{v} +(\alpha' + \gamma') \bar{m}$. Though we omit the details for the sake of brevity, constraint (9) is algebraically equivalent to a constraint $(1-\alpha) \bar{v} +(\alpha' + \gamma')\bar{m} \geq \bar{v}$. Thus, $v_{t+1} \geq \bar{v}$. 
		
		Similarly, $d_{t+1} = (1-\beta)d_t +\beta'm_t \geq (1-\beta) \bar{d} +\beta' \bar{m}$. Constraint (10) is algebraically equivalent to a constraint $(1-\beta) \bar{d} +\beta' \bar{m} \geq \bar{d}$ (we again omit the details for brevity). Thus, when following a policy of no adoption, $\bar{v}$ and $\bar{d}$ define an absorbing state where no adoption remains the optimal policy.

		\if false
		By constraint \eqref{eq:mmscon2}, 
		$\bar{d} \geq \frac{1}{c} \log \left(\frac{e^{p\bar{v}}-1}{e^{p\bar{v}(1-\frac{\alpha}{\alpha'+\gamma'})}-1} \right)$. Rearranging terms, we have
		\begin{align}
		e^{c(\bar{d} + \bar{v}(1-\frac{\alpha}{\alpha'+\gamma'}))} 
		&\geq e^{c\bar{d}} + e^{c\bar{v}} - 1 \\
		&\geq e^{c(\bar{d} + \bar{v} - \bar{m})} \label{eq:proof:mms2}
		\end{align}
		Line \eqref{eq:proof:mms2} comes from rearranging the definition of $\bar{m}$ in \eqref{eq:proof:mms1.5}. This inequality is equivalent to $\bar{d} + \bar{v}(1-\frac{\alpha}{\alpha'+\gamma'} \geq \bar{d} + \bar{v} - \bar{m}$. Rearranging terms, we see that
		\begin{align}
		\bar{v} &\leq (1-\alpha)\bar{v} + (\alpha'+\gamma')\bar{m}
		\end{align}
		If $v_\tau \geq \bar{v}$ and $d_\tau \geq \bar{v}$, we must have $m_\tau \geq \bar{m}$ since the matching function is increasing in both sides of the matching market. Thus, \begin{equation}
		\bar{v} \leq (1-\alpha)v_\tau + (\alpha'+\gamma')m_\tau = v_{\tau+1}(0)
		\end{equation}
		An identical proof using constraint \eqref{eq:mmscon3} shows that $\bar{d} \leq d_{\tau+1}(0)$ when $v_\tau \geq \bar{v}$ and $d_\tau \geq \bar{v}$. Thus, when following a policy of no adoption, $\bar{v}$ and $\bar{d}$ define an absorbing state where no adoption remains the optimal policy.
		\fi
	\end{subsection}
	
	\begin{subsection}{Proof of Theorem \ref{thm:approxregimeB}}
		\label{proof:thm:approxregimeB}
		For ease of notation throughout the proof, we define constants $A_1:= \max\{1-\beta+\beta', 1-\alpha+\alpha'+\gamma'\}$, $A_2:= \min\{1-\beta+\beta', 1-\alpha+\alpha'+\gamma'\}$ and $A_3:=\max\{\beta', \alpha'+\gamma'\}$. Note that $A_1 \geq A_2 \geq 1$. 
		
		
		We first establish an upper bound on the value of the \problem. Consider a modified setting where the matching in each period is equal to the small side in the matching market, e.g., $\hat{m_t}(d_t, k_t, v_t)  = \min\{d_t, v_t\}$. We will call this the \emph{\matchmin} setting, and we will use consistent notation to indicate the state variables in the \matchmin \ setting, e.g. $\hat{d}_{t+1} = (1-\beta)\hat{d_t} + \beta'\hat{m}_t$. In this setting, the platform's decision at time $t$ only impacts $\hat{m}_{t+1}$ through its impact on $\hat{v}_{t+1}$. Thus, intuitively in the \regimeB, the platform's optimal policy is to set $\hat{z}_t = 0$ for all $t \in [T]$, which maximizes the growth of the \worker \ side of the market. Based on a similar argument as in the proof of Lemma \ref{lemma:myopicisbest}, we claim the value of completed \jobs \ in the \matchmin \ setting when $\hat{z}_t = 0$ for all $t \in [T]$ is an upper bound on the value of the \problem. The value of this upper bound is given by $\sum_{t=1}^T \delta^{t-1}\hat{m}_t$.
		
		\if false
		We remark that the the size of the matching as given in \eqref{eq:matchingfunc} is less than the minimum side of the market:
		\begin{align}
		m_t(d_t, k_t, v_t) &= d_t + v_t - k_t - \frac{1}{c}\log(e^{c(d_t-k_t)} + e^{c(v_t-k_t)} -1) \\
		&= \min\{d_t, v_t\} -\frac{1}{c}\log(1 + e^{c(\min\{d_t, v_t\}-\max\{d_t, v_t\})} - e^{-c(\max\{d_t, v_t\}-k_t)}) \label{eq:proof:approx1}\\
		&\leq \min\{d_t, v_t\}
		\end{align}
		Thus, given the same initial state $(d_0, k_0, v_0)$ and the same policy choice $z_0$, we have $\hat{m}_1 \geq m_1$, $\hat{d}_1 \geq d_1$, and $\hat{v}_1 \geq v_1$. Using a similar argument as in the proof of Lemma \ref{lemma:myopicisbest}, these inequalities will continue to hold for all $t \in [T]$ as long as the policies $z_t$ are identical. As a consequence, the value of completed \jobs \ in the \matchmin \ setting when $z_t = 0$ for all $t \in [T]$ is an upper bound on the value of the \problem. The value of this upper bound is given by $\sum_{t=1}^T \delta^{t-1}\hat{m}_t$.
		\fi
		
		Now suppose $m_t$ represents the number of matches in period $t$ when the platform follows a static policy of no adoption and the matching output is determined by \eqref{eq:matchingfunc}. We aim to lower-bound the performance ratio of this policy by lower-bounding $1- \frac{\sum_{t=1}^T \delta^{t-1}(\hat{m}_t-m_t)}{\sum_{t=1}^T \delta^{t-1}\hat{m}_t}$. First, we place a lower bound on the denominator:
		\begin{align}
		\sum_{t=1}^T \delta^{t-1}\hat{m}_t \geq \sum_{t=1}^T \delta^{t-1}(A_2)^t\min\{d_0, v_0\} =A_2\min\{d_0, v_0\} \frac{1-(A_2\delta)^{T}}{1-A_2\delta} \geq \min\{d_0, v_0\} \frac{1-(A_1\delta)^{T}}{1-\delta} \label{eq:proof:approxub}
		\end{align}
		
		Next, we place an upper bound on $\sum_{t=1}^T \delta^{t-1}(\hat{m}_t-m_t)$. 
		Using the matching function defined in \eqref{eq:matchingfunc},
		\begin{align}
		\hat{m}_t - m_t &= \min\{\hat{d}_t, \hat{v}_t\} - d_t - v_t + k_t + \frac{1}{c}\log(e^{c(d_t-k_t)} + e^{c(v_t-k_t)} -1) \nonumber \\
		&\leq  \min\{\hat{d}_t, \hat{v}_t\} - \min\{d_t, v_t\} + \frac{\log(2)}{c} \qquad \leq \qquad \max\{\hat{d}_{t} - d_{t}, \hat{v}_{t} -v_{t}\} + \frac{\log(2)}{c} \label{eq:proof:matchdiff}
		\end{align}
		
		We will now show via induction that $\max\{\hat{d}_{t} - d_{t}, \hat{v}_{t} -v_{t}\} \leq \frac{A_3\log(2) (A_1^t-1)}{c(A_1-1)}$. This holds by equality at $t = 0$. Now we assume it holds for $t \leq k$ and we aim to show that it continues to hold at $t = k+1$. We begin with the volunteer side of the market:
		\begin{align}
		    \hat{v}_{k+1} - v_{k+1} &= (1-\alpha)(\hat{v}_k - v_k) + (\alpha'+\gamma')(\hat{m}_k - m_k) \nonumber \\
		    &\leq (1-\alpha)\max\{\hat{v}_k - v_k, \hat{d}_k - d_k\}+ (\alpha'+\gamma')(\max\{\hat{d}_{k} - d_{k}, \hat{v}_{k} -v_{k}\} + \frac{\log(2)}{c}) \nonumber \\
		    &\leq A_1\max\{\hat{d}_{k} - d_{k}, \hat{v}_{k} -v_{k}\} + \frac{A_3 \log(2)}{c} \nonumber
		\end{align}
		Repeating a nearly identical process for donors, we can show $\hat{d}_{k+1} - d_{k+1} \leq A_1\max\{\hat{d}_{k} - d_{k}, \hat{v}_{k} -v_{k}\} + \frac{A_3 \log(2)}{c}$. Thus, applying our inductive hypothesis:
		\begin{align*}
		    \max\{\hat{d}_{k+1} - d_{k+1}, \hat{v}_{k+1} -v_{k+1}\} \quad \leq \quad  A_1 \left(\frac{A_3\log(2) (A_1^k-1)}{c(A_1-1)}\right) + \frac{A_3 \log(2)}{c} \quad = \quad \frac{A_3\log(2) (A_1^{k+1}-1)}{c(A_1-1)}
		\end{align*}

		\if false
		To place an upper bound on $\max\{\hat{d}_{t} - d_{t}, \hat{v}_{t} -v_{t}\}$, note that when the platform follows a static policy of no adoption, the \worker \  dynamics are given by
		\begin{align}
		v_{t+1} &=(1-\alpha)v_t +(\alpha'+\gamma')m_t \\
		&\geq (1-\alpha)v_t +(\alpha'+\gamma')(\min\{d_t, v_t\} - \frac{1}{c}\log(2))\\  
		&= (1-\alpha)\hat{v}_t +(\alpha'+\gamma')(\min\{\hat{d}_t, \hat{v}_t\})
		-(1-\alpha)(\hat{v}_t - v_t) - \nonumber \\& \quad (\alpha'+\gamma')(\min\{\hat{d}_t, \hat{v}_t\} - \min\{d_t,v_t\})  - \frac{\alpha' + \gamma'}{c}\log(2) \\
		&\geq \hat{v}_{t+1} -(1-\alpha)(\hat{v}_t - v_t)- (\alpha'+\gamma')(\max\{\hat{d}_t - d_t, \hat{v}_t -v_t\})  - \frac{\alpha' + \gamma'}{c}\log(2) \\
		&\geq \hat{v}_{t+1} - (1-\alpha+\alpha'+\gamma')(\max\{\hat{d}_t - d_t, \hat{v}_t -v_t\}) - \frac{\alpha' + \gamma'}{c}\log(2)
		\end{align}
		Similarly,
		\begin{align}
		d_{t+1} &\geq (1-\beta)d_t +(\beta')(\min\{d_t, v_t\} - \frac{1}{c}\log(2))\\  
		&= (1-\beta)\hat{d}_t +(\beta')(\min\{\hat{d}_t, \hat{v}_t\})
		-(1-\beta)(\hat{d}_t - d_t) - \nonumber \\& \quad (\beta')(\min\{\hat{d}_t, \hat{v}_t\} - \min\{d_t,v_t\}) - \frac{\beta'}{c}\log(2) \\
		&\geq \hat{d}_{t+1} -(1-\beta)(\hat{d}_t - d_t)- (\beta')(\max\{\hat{d}_t - d_t, \hat{v}_t -v_t\}) - \frac{\beta'}{c}\log(2) \\
		&\geq \hat{d}_{t+1} - (1-\beta+\beta')(\max\{\hat{d}_t - d_t, \hat{v}_t -v_t\}) - \frac{\beta'}{c}\log(2)
		\end{align}
		We can combine these two inequalities to yield
		\begin{align}
		\max\{\hat{d}_{t+1} - d_{t+1}, \hat{v}_{t+1} -v_{t+1}\} &\leq \max\{(1-\beta + \beta'), (1-\alpha +\alpha' +\gamma')\}\max\{\hat{d}_t - d_t, \hat{v}_t -v_t\} \nonumber \\ &\quad + \frac{\max\{\beta', \alpha'+\gamma'\}}{c} \log (2) \\
		&=A_1\max\{\hat{d}_t - d_t, \hat{v}_t -v_t\}+\frac{A_3}{c}\log(2)
		\end{align}
		
		Given initial condition $\max\{\hat{d}_{0} - d_{0}, \hat{v}_{0} -v_{0}\}=0$, we have
		\begin{align}
		\max\{\hat{d}_{t} - d_{t}, \hat{v}_{t} -v_{t}\} &\leq \frac{A_3\log(2)}{c} \sum_{\tau=0}^{t-1} A_1^\tau \\
		&= \frac{A_3\log(2) (A_1^t-1)}{c(A_1-1)}
		\end{align}

		\fi
		
		\noindent	This completes the proof by induction. We now plug our upper bound on $\max\{\hat{d}_{t} - d_{t}, \hat{v}_{t} -v_{t}\}$ into \eqref{eq:proof:matchdiff} and take a discounted sum over all $t \in [T]$:
		\begin{align}
		\sum_{t=1}^T \delta^{t-1}(\hat{m}_t - m_t) \quad \leq \quad  \sum_{t=1}^T \delta^{t-1}\left(\frac{\log(2)}{c} + \frac{A_3\log(2) (A_1^t-1)}{c(A_1-1)}\right) \quad  \leq \quad  \frac{\log(2)}{c(1-\delta)} + \frac{A_3 \log(2) (1-(\delta A_1)^{T})}{c(1-\delta)(1-\delta A_1)} \label{eq:multeverythingout}
		\end{align}
		
				\if false
		\begin{align}
		\sum_{t=1}^T \delta^{t-1}(\hat{m}_t - m_t) &\leq \sum_{t=1}^T \delta^{t-1}\left(\frac{\log(2)}{c} + \frac{A_3\log(2) (A_1^t-1)}{c(A_1-1)}\right) \\
		&= \frac{\log(2)}{c(A_1-1)}\sum_{t=1}^T \delta^{t-1}(A_1 - 1 - A_3 +A_3A_1^t)) \\
		&= \frac{\log(2)}{c(A_1-1)}\left(\frac{(A_1-1-A_3 )(1-\delta^{T})}{1-\delta} + \frac{A_1A_3(1-(\delta A_1)^{T})}{1-\delta A_1} \right) \\
		&\leq \frac{\log(2)}{c(1-\delta)} + \frac{A_3 \log(2) (1-(\delta A_1)^{T})}{c(1-\delta)(1-\delta A_1)}
		\end{align}
		\fi 
		
		The second inequality in \eqref{eq:multeverythingout} is purely algebraic. We omit the details for brevity. Using this result and the lower bound from \eqref{eq:proof:approxub}, we can lower bound the approximation factor of a static no-adoption policy (assuming $\delta A_1 < 1$). 
		\begin{align*}
	    1-	\frac{\sum_{t=1}^T \delta^{t-1}(\hat{m}_t - m_t)}{\sum_{t=1}^T \delta^{t-1}(\hat{m}_t)}
		\qquad \geq \qquad 1 - \frac{\frac{\log(2)}{c(1-\delta)} + \frac{A_3 \log(2) (1-(\delta A_1)^{T})}{c(1-\delta)(1-\delta A_1)}}{\min\{d_0, v_0\} \frac{1-(A_1\delta)^{T}}{1-\delta}} \qquad \geq \qquad 1 - \frac{\log(2)(\frac{1}{1-(\delta A_1)^T} + \frac{A_3}{1-\delta A_1})}{c\min\{d_0, v_0\}}
		\end{align*}
		This is equivalent to the approximation factor reported in Theorem \ref{thm:approxregimeB}.
	\end{subsection}
\end{section}

\begin{section}{Omitted Details from Section \ref{sec:hetero}}
\begin{subsection}{Proof of Proposition \ref{prop:myopichetero}}
\label{app:proof:prop:myopichetero}
\revcolor{The proof of Proposition \ref{prop:myopichetero} follows quite naturally from the proof of Theorem \ref{thm:myopic}. In this extended setting, we can once again show that $m_1(z_0)$ does not attain a local maximum. 

As before, $\frac{\partial d_{1}}{\partial z_0} = 0$. Based on equation \eqref{eq:kthetero} and the definition of $\hat{D}_k$, $\frac{\partial k_{1}}{\partial z_0} = (1-\gamma)\sum_{j \in \{p,s\}}\adoptp^j(m_0^j - k_0^j) = \hat{D}_k$. Based on equation \eqref{eq:vthetero} and the definition of $\hat{D}_v$, $\frac{\partial v_{1}}{\partial z_0} = \sum_{j \in \{p,s\}}(\alpha - \alpha'^j-\gamma)\adoptp^j(m_0^j - k_0^j) = \hat{D}_v$. We emphasize that $\hat{D}_k$ and $\hat{D}_v$ depend only on instance primitives and are constant with respect to $z_0$.

Combining these derivatives with \eqref{eq:mt} and \eqref{eq:matchingfunc}, we have 
		\begin{align}
		m_1 \ \ &= d_1 - k_1 + v_1  - \frac{1}{c} \log(e^{c(d_1-k_1)}+e^{c(v_1 - k_1)}-1) \nonumber  \\
		\frac{\partial m_1}{\partial z_0} \ &= \frac{\hat{D}_ve^{c(d_1-k_1)} + \hat{D}_k-\hat{D}_v}{e^{c(d_1-k_1)}+e^{c(v_1 - k_1)}-1} \label{eq:dmdzhetero}\\
		\frac{\partial^2 m_1}{\partial z_0^2} &= \left( \frac{\hat{D}_k^2e^{c(d_1-k_1)} + (\hat{D}_k-\hat{D}_v)^2e^{c(v_1-k_1)}-\hat{D}_v^2e^{c(d_1+v_1-2k_1)}}{(e^{c(d_1-k_1)}+e^{c(v_1 - k_1)}-1)^2}\right)c \label{eq:d2mdz2hetero}
		\end{align}
		The first-order condition prescribed by \eqref{eq:dmdzhetero} is equivalent to $e^{c(d_1-k_1)} = {\frac{\hat{D}_v-\hat{D}_k}{\hat{D}_v}}$. When the FOC holds, the numerator of \eqref{eq:d2mdz2hetero} reduces to $\hat{D}_k^2e^{c(d_1-k_1)} + \hat{D}_k(\hat{D}_k-\hat{D}_v)e^{c(v_1-k_1)}$. This must be positive, since $\hat{D}_k > \hat{D}_v$ and $\hat{D}_k > 0$. Thus, $m_1$ is convex at any critical point and consequently can have no local maxima as a function of $z_0$. This implies that the optimal solution must be at a boundary, i.e., $z_0 = 0$ or $z_0 =1$.

		To compare the values $m_1(0)$ and $m_1(1)$, we use a slight abuse of notation to augment $(d_1, k_1, v_1)$ by $z_t$ to compare them for $z_t = 0$ and $z_t = 1$. Since $d_1(0) = d_1(1) := d_1$ and $k_1(0) = 0$, the following are necessary and sufficient conditions for $m_1(1) \geq m_1(0)$:
		\begin{align}
0 &\geq e^{c(v_1(0) - v_1(1) + k_1(1))} - \frac{e^{cv_1(0)}+e^{cd_1} -1}{e^{c(v_1(1)-k_1(1))}+e^{c(d_1-k_1(1))} -1}  \label{eq:proof:myopic0hetero} \\
\Leftrightarrow ~~ 0&\geq e^{c({\hat{D}_k-\hat{D}_v})}(e^{cd_1}e^{-c{\hat{D}_k}} -1) - e^{cd_1} +1  \label{eq:proof:myopic1hetero} \\
\Leftrightarrow ~~ 0&\geq e^{cd_1}(e^{{-c\hat{D}_v}} - 1) - (e^{c({\hat{D}_k-\hat{D}_v})}-1) \label{eq:proof:myopic2hetero}
		\end{align}
		Line \eqref{eq:proof:myopic0hetero} comes from applying definitions,
		Line \eqref{eq:proof:myopic1hetero} is algebraic, using the equalities $v_1(0) - v_1(1) + k_1(1) = {\hat{D}_k-\hat{D}_v}$ and $k_1(1) = {\hat{D}_k}$. Inequality \eqref{eq:proof:myopic2hetero} follows from rearranging, and it is satisfied when either $\hat{D}_v$ is non-negative or when the second condition in \eqref{eq:optmyopichetero} is met. We conclude the proof by using the definition of $m_t^j$ in Line \eqref{eq:mthetero} to show that $\hat{D}_v \geq 0$ is equivalent to the first condition in Line \eqref{eq:optmyopichetero}:
	\begin{align}
0 &\leq \sum_{j \in \{p,s\}}(\alpha - \alpha'^j-\gamma)\adoptp^j(m_0^j - k_0^j)
\\
\Leftrightarrow ~~ 0 &\leq \sum_{j \in \{p,s\}}(\alpha - \alpha'^j-\gamma)\adoptp^j\left(\frac{v_0^j - k_0^j}{v_0-k_0}\right)\s(d_0-k_0, v_0-k_0) \nonumber \\
\Leftrightarrow ~~  0&\leq \sum_{j \in \{p,s\}}(\alpha - \alpha'^j-\gamma)\adoptp^j(v_0^j - k_0^j) \nonumber \\
\Leftrightarrow ~~  0&\leq \alpha\left(\sum_{j \in \{p,s\}}\adoptp^j(v_0^j - k_0^j)\right) - \gamma \left(\sum_{j \in \{p,s\}}\adoptp^j(v_0^j - k_0^j)\right) - \sum_{j \in \{p,s\}} \alpha'^j\adoptp^j(v_0^j - k_0^j) \nonumber \\
\Leftrightarrow ~~  \gamma &\leq \alpha - \frac{\sum_{j \in \{p,s\}}  \adoptp^j(v_0^j - k_0^j)\alpha'^j}{\sum_{j \in \{p,s\}}\adoptp^j(v_0^j - k_0^j)~~~} \nonumber
\end{align}
}

\end{subsection}

\revcolor{
\begin{subsection}
{Sensitivity to Volunteers' Adoption Probability}
\label{app:comp_statics_adoptp}
In practice, when platforms such as FRUS \emph{allow} adoption, volunteers may still choose to form a \temporary\ match. In such settings, the optimal policy (and its performance) depends on the probability that a volunteer will form an adopted match when such an option is allowed. 
In the following, we study the impact of adoption probability on the optimal policy and its performance in a special case of the model introduced in Section~\ref{subsec:heteromodel} where students and professionals have the same parameters, i.e. $\alpha'^s=\alpha'^p$ and $\adoptp^s=\adoptp^p=\adoptp$.

{Intuitively, a lower adoption probability can only harm the performance of the optimal policy. If volunteers are less willing to adopt, this gives the platform less control over the type of matches which are formed, which is equivalent to reducing the action space of the platform. 
For instance, in a myopic setting, suppose allowing adoption for all donations is the optimal policy when the adoption probability is given by $\adoptp$. Then, if the adoption probability decreases to $\hat{\adoptp} < \adoptp$, the performance of the 
{policy that allows adoption for all donations will decrease, to the extent that such a policy may even become sub-optimal.} However, if disallowing adoption is the optimal policy when the adoption probability is given by $\adoptp$, then a decrease in the adoption probability to $\hat{\adoptp} < \adoptp$ will not impact the performance or the optimality of the policy that disallows adoption.}

We provide numerical support for this intuition in Figure \ref{fig:comp_statics_adoptp}, which plots the myopic number of matches achieved by policies of $z_0 = 1$ (fully allowing adoption) and $z_0=0$ (disallowing adoption) as a function of the adoption probability. Specifically, we use our model for heterogeneous volunteers (presented in Section \ref{sec:hetero}) which includes a parameter for adoption probability. For simplicity, we assume that the two volunteer types have identical parameters (thus, we drop the superscript from our notation). In the image shown, $(\alpha, \alpha', \gamma, \gamma', \beta, \beta') = (0.1, 0.1, 0.05, 0.1, 0.05, 0.1)$, $v_0 = d_0 = 1$, $k_0=0$, and $c=5$. We note that these parameters are equal to the expected value of the simulation space described in Appendix \ref{app:simulations}. 

\begin{figure}[t]
	\centering
	\includegraphics[width=.45\textwidth]{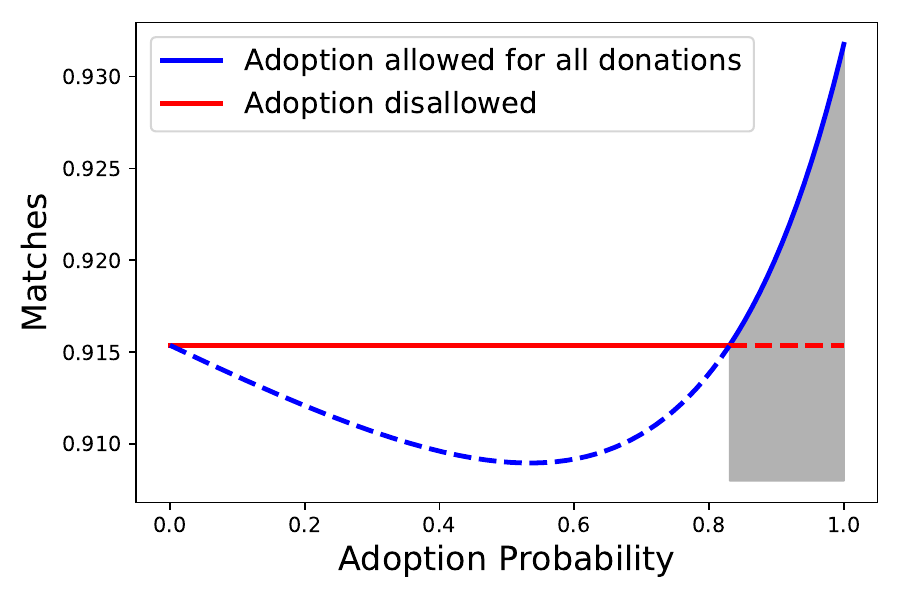}
	\caption{\revcolor{For model primitives $(\alpha, \alpha', \gamma, \gamma', \beta, \beta') = (0.1, 0.1, 0.05, 0.1, 0.05, 0.1)$, $v_0 = d_0 = 1$, $k_0=0$, and $c=5$, the plot shows the myopic number of matches achieved by the only two potentially-optimal policies, as a function of the adoption probability $\adoptp$.}
	}
	\label{fig:comp_statics_adoptp}
		\end{figure}

As shown by the solid blue curve, fully allowing adoption is the myopically optimal policy when $\adoptp = 1$ (i.e., when the platform has complete control over the types of matches formed). As long as the adoption probability lies within the shaded region on the right, this policy remains optimal, but we highlight that its performance degrades noticeably as the adoption probability decreases. If the adoption probability is below a threshold  (which occurs at an adoption probability of $0.831$ in this instance), then disallowing adoption is myopically optimal, as shown by the solid red line. Below this threshold, changes in the adoption probability do not impact the performance of the optimal policy.

We remark that the platform has significant incentive to try to increase the adoption probability whenever fully allowing adoption is optimal, and each percentage-point improvement is increasingly valuable. Improving the number of matches, even by a small amount, can make a significant difference in the growth trajectory of the platform.
\end{subsection}
}

\revcolor{
\subsection{Numerical Performance of a Repeated Optimal Myopic Policy in Extended Model}
\label{app:simulations}

\begin{table}[t]
    \caption{\revcolor{Distribution for model primitives in simulated instances described in Section \ref{subsec:heterolongrun}. All primitives are drawn independently for each instance, aside from $\gamma'$, $\beta'$, and $\delta$ which are drawn from ranges which depend on other primitives. }}
    \centering
    \revcolor{\footnotesize
    \begin{tabular}{clccl}
    & \\
    Model Primitive &  \multicolumn{1}{c}{Distribution}&\qquad \qquad & Model Primitive &  \multicolumn{1}{c}{Distribution} \\
    \cline{1-2} \cline{4-5} 
    $d_0$ & Deterministically 1&&  $c$ & Uniform$[0,10]$  \\ 
    $k_0^p$ & Deterministically 0 && $k_0^s$ & Deterministically 0  \\
    $v^p_0$ & Uniform$[0,1]$ && $v^s_0$ & Uniform$[0,1]$ \\
   $\beta$&Uniform$[0,0.1]$  &&  $\beta'$&Uniform$[\beta, \beta+0.1]$ \\ $\gamma$&Uniform$[0,0.1]$ &&$\gamma'$&Uniform$[\gamma,\gamma+0.1]$ \\
   $\adoptp^p$ & Uniform$[0,1]$&& $\adoptp^s$ & Uniform$[0,1]$ \\
        $\alpha'^p$&Uniform$[0,0.2]$
          &&$\alpha'^s$&Uniform$[0,0.2]$ \\
          $\alpha$&Uniform$[0,0.2]$&& $\delta$ & Uniform$[0.70, 1-B_1]$\\
    \end{tabular}
    }
    \label{table:simspace}
\end{table}

In Section \ref{subsec:heterolongrun}, we compare the performance of a repeated optimal myopic policy to the performance of the optimal policy across 10,000 simulations over a horizon $T=10$. For each of these simulations, we generated instance primitives according to Table \ref{table:simspace}.  We normalized the initial  donation side of the market to size $1$, and for simplicity, we assumed no initial adoption (i.e., $k_0^p = k_0^s = 0$). Most instance primitives were generated independently. The only exceptions are $\gamma'$ and $\beta'$, which were chosen to be large enough to ensure growth is possible, as well as the discount rate $\delta$, which is upper bounded by $1-B_1$ (where $B_1 := \max\{\gamma'-\gamma, \gamma'-\alpha +\alpha'^p, \gamma'-\alpha +\alpha'^s, \beta'-\beta\}$ is the maximum possible growth rate) to ensure that the discounted size of the market remains bounded.

For each of the 10,000 instances, we found the {policy that, in each period, optimally chooses either full adoption or no adoption via exhaustive search}. We used this solution as a proxy for the optimal policy. Separately, we randomly selected and tested 500 of the instances and verified that in each one, {the optimal policy $\vec{z}^*$ was indeed always either either full adoption or no adoption in each period, i.e., $\vec{z}^* \in \{0,1\}^{T}$.}

{We highlight that a repeated optimal myopic policy performs quite well: in 90.7\% of the simulated instances, it is exactly optimal.} In Figure \ref{fig:simhist}, we focus our attention on the subset of 929 instances where a repeated optimal myopic policy is suboptimal. We display a histogram showing the performance ratio of a repeated optimal myopic policy compared to the optimal policy. Even in those instances, the discounted number of matches when following a repeated optimal myopic policy is quite close to optimal. In the worst case that we observe, the repeated optimal myopic policy achieves 83.2\% of the optimal discounted number of completed matches.

\begin{figure}
 \includegraphics[width=.7\textwidth]{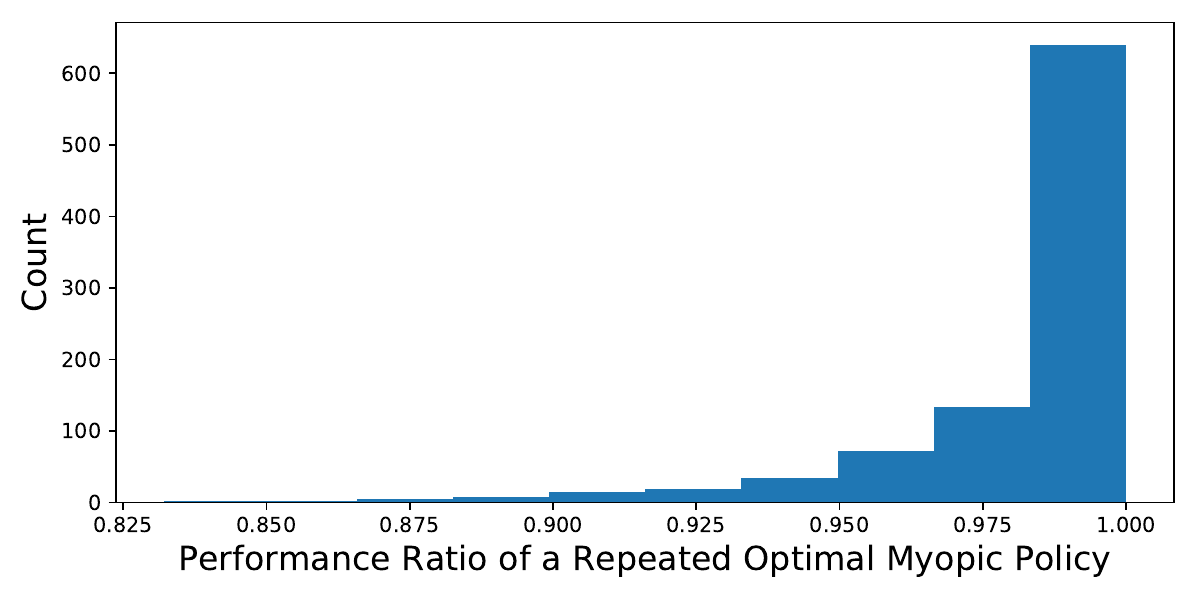}
  \caption{\revcolor{Histogram of the 929 sub-optimal performance ratios (out of 10,000 simulated instances) when following a repeated optimal myopic policy.}}%
  \label{fig:simhist}
\end{figure}

}




\begin{subsection}{Proof of Proposition \ref{prop:approxrepeatmyopic}}
\label{app:proof:prop:approxrepeatmyopic}
\revcolor{
In this proof, we will establish a lower bound on the ratio between the discounted number of matches when following a repeated optimal myopic policy (given by $\sum_{t=1}^T \delta^{t-1}\tilde{m}_t$) and the discounted number of matches when following the optimal policy (given by $\sum_{t=1}^T \delta^{t-1}m_t^*$). To do so, we will upper-bound the achievable number of matches in each period (denoted $\bar{m}_t$) and we will also lower-bound the number of matches achieved by a repeated optimal myopic policy in each period (denoted $\underline{m}_t$).


Consider a setting with perfect matching where the growth rate of a matched volunteer is given by $\bar{r} := \max\{\gamma'-\gamma, \gamma'-\alpha +\alpha'^p, \gamma'-\alpha +\alpha'^s\}$. We use $\bar{m}_t$ to denote the number of matches achieved in period $t$ in this setting.

\begin{claim}[Upper Bound on Optimal Policy] \label{clm:myopicapproxupper}
For all $t \in [T]$, the number of matches $m^*_t$ in period $t$ when following the optimal policy is at most $\bar{m}_t$.
\end{claim}

\begin{proof}{Proof:}
The proof of Claim \ref{clm:myopicapproxupper} is similar to the proof of Claim \ref{claim:statecoupling}, which demonstrates the pathwise dominance of a superior state in a homogeneous volunteer setting. In this case, we consider two settings with different dynamics. Nevertheless, we are able to show a similar pathwise dominance of $\bar{m}_t$ over $m^*_t$. 

Let $(\bar{d}_t, \bar{v}_t)$ denote the state which arises at time $t$ in the setting described above. Note that in this setting, volunteers are homogeneous, and the growth rate of all matched volunteers is $\bar{r}$. Similarly, we will use $(d^*_t, v^*_t)$ to denote the state which arises at time $t$ when following an optimal policy in the heterogeneous volunteer setting. We will show via induction that for all $t \in [T]$, $\bar{d}_t \geq d_t^*$, $\bar{v}_t \geq v^*_t$, and $\bar{m}_t \geq m^*_t$. 

The first two inequalities in the base case hold by assumption: $\bar{d}_0 = d_0^*$ and $\bar{v}_0 = v^*_0$. As a consequence of perfect matching, we must have $\bar{m}_0 \geq m^*_0$.

We now assume the inductive hypothesis holds for $t = \tau$. The donation dynamics in both settings are identical, and the number of donations in period $\tau+1$ is increasing in both the prior period's number of donations and number of matches. By our inductive hypothesis, $\bar{d}_\tau \geq d^*_\tau$ and $\bar{m}_\tau \geq m^*_\tau$. Thus, $\bar{d}_{\tau+1} \geq d^*_{\tau+1}$. 

For volunteers, the dynamics are slightly different in the two settings. Starting with the definition of $v^*_{\tau+1}$ given by equation \eqref{eq:vthetero}, we have
\begin{align}
    v^*_{\tau+1} &= \sum_{j \in\{p,s\}}(1-\alpha)v_\tau^{j*} + (\alpha'^j+\gamma')m_\tau^{j*} - (\gamma - \alpha + \alpha'^j)\left(k_\tau^{j*} + z_\tau^*\adoptp^j(m_\tau^{j*}-k_\tau^{j*})\right) \nonumber \\
    &\leq (1-\alpha)v_\tau^* + (\bar{r} + \alpha)m_\tau^* \label{eq:vtmyopicapprox1} \\
    &\leq (1-\alpha)\bar{v}_\tau + (\bar{r} + \alpha)\bar{m}_\tau \label{eq:vtmyopicapprox2} \\
    &= \bar{v}_{\tau+1} \nonumber
\end{align}
The inequality in Line \eqref{eq:vtmyopicapprox1} comes from the definition of $\bar{r}$ as an upper bound on the growth rate of a matched volunteer, regardless of the volunteer type and the match type. The subsequent inequality, Line \eqref{eq:vtmyopicapprox1}, comes from our inductive hypothesis, and the final equality holds due to the dynamics in that setting.

We have shown that $\bar{d}_{\tau+1} \geq d^*_{\tau+1}$ and $\bar{v}_{\tau+1} \geq v^*_{\tau+1}$. Again due to perfect matching, $\bar{m}_{\tau+1} \geq m^*_{\tau+1}$. This completes the proof of Claim \ref{clm:myopicapproxupper}. \halmos
\end{proof}

\if false
\begin{proof}{Proof:}
The proof of Claim \ref{clm:myopicapproxupper} is similar to the proof of Claim \ref{claim:statecoupling}, which demonstrates the pathwise dominance of a superior state in a homogeneous volunteer setting. In this case, we consider two markets where the platform may make different decisions and where the dynamics are different. Nevertheless, we are able to show a similar pathwise dominance of $\bar{m}_t$ over $m^*_t$. 

Let $(\bar{d}_t, \bar{k}_t, \bar{v}_t)$ denote the state which arises at time $t$ when following a policy of full adoption in the setting described above. Note that in this setting, volunteers are homogeneous, and the growth rate of all matched volunteers is $\bar{r}$. Similarly, we will use $(d^*_t, k^*_t, v^*_t)$ to denote the state which arises at time $t$ when following an optimal policy in the heterogeneous volunteer setting. We will show via induction that for all $t \in [T]$, $\bar{d}_t \geq d_t^*$, $\bar{k}_t = k^*_t$, $\bar{v}_t = v^*_t$, and $\bar{m}_t \geq m^*_t$.

The first three inequalities in the base case hold by assumption: $\bar{d}_0 = d_0^*$, $\bar{k}_0 = k^*_0$, and $\bar{v}_0 = v^*_0$. Because the matching process is identical in both settings, we must have $\bar{m}_0 = m^*_0$.

We now assume the inductive hypothesis holds for $t = \tau$. The donation dynamics in both settings are identical, and the number of donations in period $\tau+1$ is increasing in both the prior period's number of donations and number of matches. By our inductive hypothesis, $\bar{d}_\tau \geq d^*_\tau$ and $\bar{m}_\tau \geq m^*_\tau$. Thus, $\bar{d}_{\tau+1} \geq d^*_{\tau+1}$. 

The dynamics for fixed matches differ slightly in the two settings.
However, starting with the definition of $k^*_{\tau+1}$ given by equation \eqref{eq:kthetero}, we have 
\begin{align}
    k^*_{\tau+1} &= (1-\gamma)\left(\sum_{j \in\{p,s\}} k_\tau^{j^*} + z_\tau^*\adoptp^j(m_\tau^{j*} - k_t^{j*})\right) \nonumber \\
    &\leq (1-\gamma)(k_\tau^* +\bar{\adoptp}(m_\tau^* - k_\tau^*)) \nonumber \\
    &= (1-\gamma)(1-\bar{\adoptp})k_\tau^* +(1-\gamma)\bar{\adoptp}m_\tau^* \nonumber \\
    &\leq (1-\gamma)(1-\bar{\adoptp})\bar{k}_\tau +(1-\gamma)\bar{\adoptp}\bar{m}_\tau \label{eq:ktmyopicapprox1} \\
    & = \bar{k}_{\tau+1} \nonumber
\end{align}
We note that the the inequality in Line \eqref{eq:ktmyopicapprox1} holds by the inductive hypothesis. The final equality holds due to the dynamics in that setting: the platform always allows all donations to be adopted, and volunteers form an adopted match with probability $\bar{\adoptp}$.

Similarly, for volunteers, the dynamics are again slightly different in the two settings. Starting with the definition of $v^*_{\tau+1}$ given by equation \eqref{eq:vthetero}, we have
\begin{align}
    v^*_{\tau+1} &= \sum_{j \in\{p,s\}}(1-\alpha)v_\tau^{j*} + (\alpha'^j+\gamma')m_\tau^{j*} - (\gamma - \alpha + \alpha'^j)\left(k_\tau^{j*} + z_\tau^*\adoptp^j(m_\tau^{j*}-k_\tau^{j*})\right) \nonumber \\
    &\leq (1-\alpha)v_\tau^* + (\bar{r} + \alpha)m_\tau^* \label{eq:vtmyopicapprox1} \\
    &\leq (1-\alpha)\bar{v}_\tau + (\bar{r} + \alpha)\bar{m}_\tau \label{eq:vtmyopicapprox2} \\
    &= \bar{v}_{\tau+1} \nonumber
\end{align}
The inequality in Line \eqref{eq:vtmyopicapprox1} comes from the definition of $\bar{r}$ as an upper bound on the growth rate of a matched volunteer, regardless of the volunteer type and the match type. The subsequent inequality, Line \eqref{eq:vtmyopicapprox1}, comes from our inductive hypothesis, and the final equality holds due to the dynamics in that setting.

We have shown that $\bar{d}_{\tau+1} \geq d^*_{\tau+1}$, $\bar{k}_{\tau+1} \geq k^*_{\tau+1}$, and $\bar{v}_{\tau+1} \geq v^*_{\tau+1}$. The matching process is not only identical in both settings, but it is also increasing in each of the three state variables. Thus, $\bar{m}_{\tau+1} \geq m^*_{\tau+1}$, which completes the proof of Claim \ref{clm:myopicapproxupper}.
\end{proof}
\fi

Now consider a different setting where the growth rate of a matched volunteer is given by $\underline{r} := \min\{\gamma'-\gamma,\gamma'-\alpha+\alpha'^p,\gamma'-\alpha+\alpha'^s\} $, regardless of match type, and suppose volunteers are not willing to adopt at all, i.e., $\underline{\adoptp} = 0$.  In this setting, the platform has no decision to make. We use $\underline{m}_t$ to denote the number of matches achieved in period $t$ in this setting.

\begin{claim}[Lower Bound on repeated optimal myopic Policy] \label{clm:myopicapproxlower}
For all $t \in [T]$, the number of matches $\tilde{m}_t$ in period $t$ when following a repeated optimal myopic policy is at least $\underline{m}_t$.
\end{claim}

\begin{proof}{Proof:}
The proof of Claim \ref{clm:myopicapproxlower} is nearly identical to the proof of Claim \ref{clm:myopicapproxupper}. Once again, we aim to show a similar pathwise dominance, this time of $\tilde{m}_t$ over $\underline{m}_t$. 

Let $(\tilde{d}_t, \tilde{k}_t, \tilde{v}_t)$ denote the state which arises at time $t$ when repeatedly following the optimal myopic policy. Similarly, we will use $(\underline{d}_t, \underline{k}_t, \underline{v}_t)$ to denote the state which arises at time $t$ in the setting described above. We will show via induction that for all $t \in [T]$, $\tilde{d}_t \geq \underline{d}_t$, $\tilde{v}_t \geq \underline{v}_t$, and $\tilde{m}_t \geq \underline{m}_t$. (Recall that no volunteers will form an adopted match in the setting described above, which trivially implies $\underline{k}_t \leq \tilde{k}_t$ for all $t$).

The first two inequalities in the base case hold by assumption: $\tilde{d}_0 \geq \underline{d}_0$ and $\tilde{v}_0 \geq \underline{v}_0$. Since the matching function is increasing in all state variables (i.e., $m_0$ is increasing in $d_0$, $k_0$, and $v_0$), we must have $\tilde{m}_0 \geq \underline{m}_0$.

We now assume the inductive hypothesis holds for $t = \tau$. The donation dynamics in both settings are identical, and the number of donations in period $\tau+1$ is increasing in both the prior period's number of donations and number of matches. By our inductive hypothesis, $\tilde{d}_\tau \geq \underline{d}_\tau$ and $\tilde{m}_t \geq \underline{m}_t$. Thus, $\tilde{d}_{\tau+1} \geq \underline{d}_{\tau+1}$. 

For volunteers, the dynamics are slightly different in the two settings. Starting with the definition of $\tilde{v}_{\tau+1}$ given by equation \eqref{eq:vthetero}, we have
\begin{align}
    \tilde{v}_{\tau+1} &= \sum_{j \in\{p,s\}}(1-\alpha)\tilde{v}_\tau^{j} + (\alpha'^j+\gamma')\tilde{m}_\tau^{j} - (\gamma - \alpha + \alpha'^j)\left(\tilde{k}_\tau^{j} + \tilde{z}_\tau \adoptp^j(\tilde{m}_\tau^{j}-\tilde{k}_\tau^{j})\right) \nonumber \\   
    &\geq (1-\alpha)\tilde{v}_\tau + (\underline{r} + \alpha)\tilde{m}_\tau \label{eq:vtmyopicapprox3} \\
    &\geq (1-\alpha)\underline{v}_\tau + (\underline{r} + \alpha)\underline{m}_\tau \label{eq:vtmyopicapprox4} \\
    &= \underline{v}_{\tau+1} \nonumber
\end{align}
The inequality in Line \eqref{eq:vtmyopicapprox3} comes from the definition of $\underline{r}$ as an lower bound on the growth rate of a matched volunteer, regardless of the volunteer type and the match type. The subsequent inequality, Line \eqref{eq:vtmyopicapprox4}, comes from our inductive hypothesis, and the final equality holds due to the dynamics in that setting.

We have shown that $\tilde{d}_{\tau+1} \geq \underline{d}_{\tau+1}$ and $\tilde{v}_{\tau+1} \geq \underline{v}_{\tau+1}$. Of course, we must also have $\tilde{k}_{\tau+1} \geq \underline{k}_{\tau+1}$. Since the matching function is increasing in all state variable, $\tilde{m}_{\tau+1} \geq \underline{m}_{\tau+1}$. This completes the proof of Claim \ref{clm:myopicapproxlower}. \halmos
\end{proof}

 We now establish a bound on the gap in the number of matches in the two settings, i.e., an upper bound on the difference $\bar{m}_t - \underline{m}_t$. To aid in this part of the proof, let us define constants $B_1 := \max\{\bar{r}, \beta'-\beta\}$ and $B_2:= \max\{\underline{r} + \alpha, \underline{r} +\beta\}$.

\begin{claim}[Upper Bound on the Gap Between Optimal and repeated optimal myopic Policies] \label{clm:myopicapproxgap}
For all $t \in [T]$, 
\begin{equation}
    \bar{m}_t - \underline{m}_t \leq \min\{d_0, v_0\}(B_1-\underline{r})t(1+B_1)^{t-1} + \left(B_2\frac{(1+\underline{r})^{t}-1}{\underline{r}} + 1\right) \frac{\log(2)}{c}
\end{equation}
\end{claim}

\begin{proof}{Proof:}
We prove Claim \ref{clm:myopicapproxgap} by first relating the gap in matches ($\bar{m}_t - \underline{m}_t$) to the largest gap on a single side of the market ($\max\{\bar{d}_t - \underline{d}_t, \bar{v}_t - \underline{v}_t\}$). We then upper-bound the latter expression. 

Based on inequality \eqref{eq:proof:matchdiff} in the proof of Theorem \ref{thm:approxregimeB}, 
\begin{equation}
   \bar{m}_t - \underline{m}_t \leq \max\{\bar{d}_{t} - \underline{d}_{t}, \bar{v}_{t} -\underline{v}_{t}\} + \frac{\log(2)}{c} \label{eq:mtgapapproxmyopic}
\end{equation} 

This bound takes into account the difference between a perfect matching and the matching function described in \eqref{eq:matchingfunc}. 

We now show via induction that the following bound holds: 
\begin{equation}
    \max\{\bar{d}_{t} - \underline{d}_{t}, \bar{v}_{t} -\underline{v}_{t}\} \leq \min\{d_0, v_0\}(B_1-\underline{r})t(1+B_1)^{t-1} + B_2\frac{\log(2)}{c}\frac{(1+\underline{r})^{t}-1}{\underline{r}}. \label{eq:vtgapapproxmyopic}
\end{equation} 

This holds by equality at $t = 0$. Now we assume it holds for $t \leq k$ and we aim to show that it continues to hold at $t = k+1$. We begin with the volunteer side of the market:
		\begin{align}
		    \bar{v}_{k+1} - \underline{v}_{k+1} &= (1-\alpha)(\bar{v}_{k} - \underline{v}_{k}) + (\underline{r}+\alpha)(\bar{m}_{k} - \underline{m}_{k}) + (\bar{r} - \underline{r})\bar{m}_k \nonumber \\
		    &\leq (1-\alpha)(\max\{\bar{d}_{k} - \underline{d}_{k}, \bar{v}_{k} -\underline{v}_{k}\}) + (\underline{r}+\alpha)(\bar{m}_{k} - \underline{m}_{k}) + (\bar{r} - \underline{r})\bar{m}_k  \nonumber \\
		    &\leq (1-\alpha)(\max\{\bar{d}_{k} - \underline{d}_{k}, \bar{v}_{k} -\underline{v}_{k}\}) + (\underline{r}+\alpha)(\max\{\bar{d}_{k} - \underline{d}_{k}, \bar{v}_{k} -\underline{v}_{k}\} + \frac{\log(2)}{c}) + (B_1 - \underline{r})\bar{m}_k  \nonumber \\
		    &\leq (1+\underline{r})(\max\{\bar{d}_{k} - \underline{d}_{k}, \bar{v}_{k} -\underline{v}_{k}\}) + (\underline{r}+\alpha) \frac{\log(2)}{c} + (B_1 - \underline{r})\min\{\bar{d}_k, \bar{v}_k\}  \nonumber \\
		    &\leq (1+\underline{r})(\max\{\bar{d}_{k} - \underline{d}_{k}, \bar{v}_{k} -\underline{v}_{k}\}) + B_2 \frac{\log(2)}{c} + (B_1 - \underline{r})\min\{\bar{d}_k, \bar{v}_k\}  \nonumber
		\end{align}
		
For donors, we have
\begin{align}
    \bar{d}_{k+1} - \underline{d}_{k+1} &= (1-\beta)(\bar{d}_{k} - \underline{d}_{k}) + (\beta')(\bar{m}_{k} - \underline{m}_{k}) \nonumber \\
    &\leq (1-\beta)(\bar{d}_{k} - \underline{d}_{k}) + (\beta + \underline{r})(\bar{m}_{k} - \underline{m}_{k}) + (\max\{\bar{r}, \beta'-\beta\} - \underline{r})\bar{m}_k \nonumber \\
    &\leq (1-\beta)(\max\{\bar{d}_{k} - \underline{d}_{k}, \bar{v}_{k} -\underline{v}_{k}\}) + (\beta + \underline{r})(\max\{\bar{d}_{k} - \underline{d}_{k}, \bar{v}_{k} -\underline{v}_{k}\} + \frac{\log(2)}{c} ) + (B_1 - \underline{r})\min\{\bar{d}_k, \bar{v}_k\} \nonumber \\
    &= (1+\underline{r})(\max\{\bar{d}_{k} - \underline{d}_{k}, \bar{v}_{k} -\underline{v}_{k}\}) + (\beta + \underline{r})\frac{\log(2)}{c} + (B_1 - \underline{r})\min\{\bar{d}_k, \bar{v}_k\} \nonumber \\
	 &\leq (1+\underline{r})(\max\{\bar{d}_{k} - \underline{d}_{k}, \bar{v}_{k} -\underline{v}_{k}\}) + B_2\frac{\log(2)}{c} + (B_1 - \underline{r})\min\{\bar{d}_k, \bar{v}_k\} \nonumber
\end{align}
		 We now apply our inductive hypothesis. In addition, we make use of the fact that $\min\{\bar{d}_k, \bar{v}_k\} \leq \min\{d_0, v_0\}(1+B_1)^k$.
		 
\begin{align}
    \max\{\bar{d}_{k+1} - \underline{d}_{k+1}, \bar{v}_{k+1} -\underline{v}_{k+1}\} &\leq (1+\underline{r})\left(\min\{d_0, v_0\}(B_1-\underline{r})k(1+B_1)^{k-1} + B_2\frac{\log(2)}{c}\frac{(1+\underline{r})^{k}-1}{\underline{r}}\right) \nonumber \\
    & \quad + B_2 \frac{\log(2)}{c} + (B_1 - \underline{r})\min\{d_0, v_0\}(1+B_1)^k \nonumber \\
    &\leq \min\{d_0, v_0\}(B_1-\underline{r})(k(1+B_1)^{k-1} + (1+B_1)^k) \nonumber \\
    &\quad + B_2\frac{\log(2)}{c} \left(\frac{(1+\underline{r})^{k+1}-(1+\underline{r})}{\underline{r}} + 1\right) \nonumber \\
    &\leq \min\{d_0, v_0\}(B_1-\underline{r})(k+1)(1+B_1)^{k} + B_2\frac{\log(2)}{c} \left(\frac{(1+\underline{r})^{k+1}-1}{\underline{r}}\right) \nonumber
\end{align}
This completes the proof of inequality \ref{eq:vtgapapproxmyopic}. In combination with inequality \ref{eq:mtgapapproxmyopic}, this completes the proof of the claim.
\halmos
\end{proof}

As the final step, we use the three claims above to establish a lower bound on the ratio between the discounted number of matches when following a repeated optimal myopic policy and the discounted number of matches when following the optimal policy.

\begin{align}
    \frac{\sum_{t=1}^T \delta^{t-1}\tilde{m}_t}{\sum_{t=1}^T \delta^{t-1}m_t^*} 
    &\geq 
    \frac{\tilde{m}_1+\sum_{t=2}^T \delta^{t-1}\underline{m}_t}{\tilde{m}_1+\sum_{t=2}^T \delta^{t-1}\overline{m}_t} \label{eq:myopicapprox1} \\
    & = 1- \frac{\sum_{t=2}^T \delta^{t-1}(\overline{m}_t-\underline{m}_t)}{\tilde{m}_1+\sum_{t=2}^T \delta^{t-1}\overline{m}_t}. \label{eq:myopicapprox2}
\end{align}

Line \eqref{eq:myopicapprox1} comes from noting that by definition, a repeated optimal myopic policy maximizes $m_1$. Thus, $\tilde{m}_1 \geq m_1^*$. {For $t \geq 2$, we rely on the upper and lower bounds established in Claims \ref{clm:myopicapproxupper} and \ref{clm:myopicapproxlower} to replace $m^{*}_t$ with $\bar{m}_t$ and $\tilde{m}_t$ with $\underline{m}_t$, respectively. Line \ref{eq:myopicapprox2} is purely algebraic.}

Using the bound established in Claim \ref{clm:myopicapproxgap} as well as properties of summations, 
\begin{align}
   \sum_{t=2}^T \delta^{t-1}(\bar{m}_t-\underline{m}_t) &\leq \sum_{t=2}^T \delta^{t-1}\min\{d_0, v_0\}(B_1-\underline{r})t(1+B_1)^{t-1} + \sum_{t=2}^T \delta^{t-1}\left(B_2\frac{(1+\underline{r})^{t}-1}{\underline{r}} + 1\right)\frac{\log(2)}{c} \nonumber \\
   &\leq \frac{2\delta(1+B_1)}{(1-\delta(1+B_1))^2}\min\{d_0, v_0\}(B_1-\underline{r}) + \sum_{t=2}^T \delta^{t-1}\left(B_2\frac{(1+\underline{r})^{t}}{\underline{r}} \right)\frac{\log(2)}{c} \nonumber \\
   &\leq \frac{2\delta(1+B_1)}{(1-\delta(1+B_1))^2}\min\{d_0, v_0\}(B_1-\underline{r}) + B_2\frac{\delta(1+\underline{r})^2}{(1-\delta(1+\underline{r}))\underline{r}} \frac{\log(2)}{c}\label{eq:myopicapproxnum}
\end{align}
Further, we can lower bound the denominator due to the fact that perfect matching will always lead to growth (assuming a non-degenerate instance):
\begin{align}
    \tilde{m}_1+\sum_{t=2}^T \delta^{t-1}\bar{m}_t \quad \geq \quad \tilde{m}_1+\sum_{t=2}^T \delta^{t-1}\tilde{m}_1 \quad = \quad  \tilde{m}_1 \frac{1-\delta^T}{1-\delta}
    \label{eq:myopicapproxdenom}
\end{align}
Combining the inequalities in lines \ref{eq:myopicapprox2}, \ref{eq:myopicapproxnum}, and \ref{eq:myopicapproxdenom} completes the proof of Proposition \ref{prop:approxrepeatmyopic} by establishing
$$\frac{\sum_{t=1}^T \delta^{t-1}\tilde{m}_t}{\sum_{t=1}^T \delta^{t-1}m_t^*} 
\geq 
1 - \frac{\delta (1-\delta)}{\tilde{m}_1(1-\delta^T)}\left((B_1-\underline{r})\frac{2(1+B_1)\min\{d_0,v_0\}}{(1-\delta(1+B_1))^2} + \frac{B_2 (1+\underline{r})^2 \log(2)}{\underline{r}(1-\delta (1+\underline{r}))c} \right)
$$
}
\end{subsection}




\end{section}

\end{APPENDICES}

\end{document}